\documentclass[a4paper,USenglish,cleveref,autoref,pdfa]{lipics-v2021}

\pdfoutput=1
\hideLIPIcs

\graphicspath{{./figures/}}
\title{Tilt Automata:\texorpdfstring{\\}{} Gathering Particles With Uniform External Control}

\titlerunning{Tilt Automata: Gathering Particles With Uniform External Control}

\author{Sándor P. Fekete}{Department of Computer Science, TU Braunschweig and L3S, Germany}{s.fekete@tu-bs.de}{https://orcid.org/0000-0002-9062-4241}{}
\author{Jonas Friemel}{Department of Electrical Engineering and Computer Science,\\ Bochum University of Applied Sciences, Germany}{jonas.friemel@hs-bochum.de}{https://orcid.org/0009-0009-6270-4779}{}
\author{Peter Kramer}{Department of Computer Science, TU Braunschweig, Germany}{kramer@ibr.cs.tu-bs.de}{https://orcid.org/0000-0001-9635-5890}{}
\author{Jan-Marc Reinhardt}{Department of Electrical Engineering and Computer Science,\\ Bochum University of Applied Sciences, Germany}{jan-marc.reinhardt@hs-bochum.de}{https://orcid.org/0009-0005-8907-3832}{}
\author{Christian Rieck}{Institute of Mathematics, University of Kassel, Germany}{christian.rieck@mathematik.uni-kassel.de}{https://orcid.org/0000-0003-0846-5163}{}
\author{Christian Scheffer}{Department of Electrical Engineering and Computer Science,\\ Bochum University of Applied Sciences, Germany}{christian.scheffer@hs-bochum.de}{https://orcid.org/0000-0002-3471-2706}{}

\authorrunning{S.\,P. Fekete, J. Friemel, P. Kramer, J.-M. Reinhardt, C. Rieck, and C. Scheffer}

\Copyright{Sándor P. Fekete, Jonas Friemel, Peter Kramer, Jan-Marc Reinhardt, Christian Rieck, and Christian~Scheffer}

\ccsdesc{Theory of computation~Computational geometry}

\keywords{Uniform control, gathering, full tilt, polyominoes, synchronizing automata}

\funding{%
  \emph{Sándor P.\ Fekete, Jonas Friemel, Peter Kramer, Christian Scheffer}: Supported by the Deutsche Forschungsgemeinschaft (DFG, German Research Foundation) -- 530918134.\\
  \emph{Christian Rieck}: Supported by the Deutsche Forschungsgemeinschaft (DFG, German Research Foundation) -- 522790373.
}

\nolinenumbers

\usepackage[algo2e,linesnumbered,ruled,vlined]{algorithm2e}
\usepackage[basic]{complexity}
\usepackage[labelformat=simple]{subcaption}

\DeclareCaptionLabelFormat{subfiglabel}{\textsf{\textbf{#2}}}
\usepackage{tikz}
\usetikzlibrary{arrows.meta,automata,positioning}
\usepackage{xfrac}
\usepackage{xspace}

\theoremstyle{definition}
\newtheorem{problemInternal}[theorem]{Problem}
\newcommand{\problem}[3]{
\begin{problemInternal}
  {#1}.
  \begin{description}
  \item[Input:] {#2}
  \item[Goal:] {#3}
  \end{description}
\end{problemInternal}}
\newcommand{\paraProblem}[4]{
\begin{problemInternal}
  {#1}.
  \begin{description}
  \item[Input:] {#2}
  \item[Question:] {#3}
  \item[Parameters:] {#4}
  \end{description}
\end{problemInternal}}

\newcommand{\bigO}{\mathcal{O}}
\newcommand{\Tpose}{\ensuremath{\mathrm{T}}}
\newcommand{\ZZ}{\ensuremath{\mathbb{Z}^2}}
\newcommand{\N}{\ensuremath{\mathbb{N}}}
\newcommand{\D}{\mathbb{D}}
\newcommand{\udir}{\ensuremath{\text{\textup{\textsc{u}}}}}
\newcommand{\ddir}{\ensuremath{\text{\textup{\textsc{d}}}}}
\newcommand{\ldir}{\ensuremath{\text{\textup{\textsc{l}}}}}
\newcommand{\rdir}{\ensuremath{\text{\textup{\textsc{r}}}}}
\newcommand{\FT}{\ensuremath{\mathrm{FT}}\xspace}
\newcommand{\SSt}{\ensuremath{\mathrm{S1}}\xspace}
\newcommand{\sgs}[1][P]{\ensuremath{\operatorname{sgs}(#1)}}
\newcommand{\rt}[1][A]{\ensuremath{\operatorname{rt}(#1)}}
\newcommand{\Cerny}{{\v{C}}ern{\'{y}}}
\newcommand{\FPT}{\ComplexityFont{FPT}\xspace}
\newcommand{\XP}{\ComplexityFont{XP}\xspace}
\newcommand{\thmW}[1]{\W\textup{[#1]}}
\newcommand{\fullGathering}{\textup{\textsc{FullGathering}}\xspace}
\newcommand{\quickGathering}{\textup{\textsc{ShortestGatheringSequence}}\xspace}
\newcommand{\subsetGathering}{\textup{\textsc{SubsetGathering}}\xspace}
\newcommand{\paraGathering}{\textup{\textsc{ParaGathering}}\xspace}
\newcommand{\occupancy}{\textup{\textsc{Occupancy}}\xspace}
\newcommand{\shapeReconfiguration}{\textup{\textsc{ShapeReconfiguration}}\xspace}
\newcommand{\tiltCover}{\textup{\textsc{TiltCover}}\xspace}

\begin{document}
\maketitle

\begin{abstract}
  Motivated by targeted drug delivery, we investigate the gathering of particles in the full tilt model of externally controlled motion planning:
A set of particles is located at the tiles of a polyomino with all particles reacting uniformly to an external force by moving as far as possible in one of the four axis-parallel directions until they hit the boundary.
The goal is to choose a sequence of directions that moves all particles to a common position.
Our results include a polynomial-time algorithm for gathering in a completely filled polyomino as well as hardness reductions for approximating shortest gathering sequences and for determining whether the particles in a partially filled polyomino can be gathered.
We pay special attention to the impact of restricted geometry, particularly polyominoes without holes.
As corollaries, we make progress on an open question from~\cite{tilt-soda20} by showing that deciding whether a given position can be occupied remains \NP-hard in polyominoes without holes and provide initial results on the parameterized complexity of tilt problems.
Our results build on a connection we establish between tilt models and the theory of synchronizing~automata.

\end{abstract}

\section{Introduction}\label{sec:intro}

We investigate a problem motivated by targeted drug delivery~\cite{gathering-rl-iros22}: How can initially dispersed drug-carrying particles be gathered at a common position?
Problems of this type are closely related to advances in micro-scale robotics~\cite{science-microrobots-2025}.
However, environmental constraints and the severely limited capabilities of robotic agents at micro- and nano-scale rule out autonomous movement.
Instead, a global external force, e.g., an electromagnetic field, can be used to move and steer simple robotic particles.
The challenge is to use geometry to control a large swarm of particles through uniform signals within a non-uniform environment~\cite{tilt-video-socg15,tilt-iros13}.

Abstract mathematical models~\cite{tilt-algosensors13,tilt-limited-directions-jip20}, commonly called \emph{tilt models} in reference to gravity as a potential external force, have been developed to tackle this problem from an algorithmic perspective.
They model the environment as a \emph{polyomino}, i.e., a finite, connected region of the square tiling, with particles positioned at tiles and moving on axis-parallel paths up, down, left, or right.
Two major variants exist: In the \emph{single step model} (\SSt), particles can be precisely controlled via unit steps to adjacent positions, whereas the \emph{full tilt model} (\FT) takes imprecision into account by only allowing particles to move maximally until blocked.

Algorithms and complexity results for a variety of applications~\cite{assembly-algorithmica20,tilt-reconfiguration-nc21,tilt-sort-classify,tilt-mapping} have been discovered using tilt models.
To investigate algorithmic approaches to targeted drug delivery, we consider particles located at the tiles of a polyomino and look for a \emph{gathering sequence}, i.e., a sequence of moves that relocates all particles to the same tile.
It is already known that gathering sequences of length $\bigO(n_c D^2)$ always exist in \SSt, where $n_c$ is the number of convex corners of the polyomino and $D$ its diameter, but finding sequences of minimum length is \NP-hard~\cite{gathering-icra20}.
Very little is known about gathering in \FT, apart from the brief observation in~\cite{gathering-case16} that gathering sequences do not exist for all polyominoes in \FT.
We remedy this situation with a comprehensive study of gathering in \FT.

\begin{figure}[bth]
  \begin{subcaptionblock}{0.4\textwidth}
    \phantomcaption\label{sfig:gathering-ft-a}
    \centering
    \includegraphics[page=1]{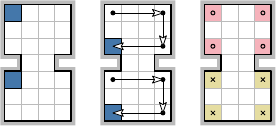}\\
    \captiontext*{}
  \end{subcaptionblock}\hfill
  \begin{subcaptionblock}{0.6\textwidth}
    \phantomcaption\label{sfig:gathering-ft-b}
    \centering
    \includegraphics[page=2]{gathering-full-tilt}\\
    \captiontext*{}
  \end{subcaptionblock}
  \caption{Two examples where gathering a set of particles is impossible in the full tilt model. Both examples depict, from left to right, a set of particles (blue squares) inside a polyomino, a sequence of moves leading to all possible configurations, and a partition of reachable positions yielding a congruence. Example~\subref{sfig:gathering-ft-a} prohibits gathering by keeping the particles in different congruence classes, whereas the particles in~\subref{sfig:gathering-ft-b} always occupy distinct positions of the same class.}\label{fig:gathering-ft}
\end{figure}

As a first observation, we can see that the full tilt model exhibits several interesting complications that render gathering impossible.
The authors of~\cite{gathering-case16} already recognized that particles may become trapped in separate parts of the polyomino, see \cref{sfig:gathering-ft-a}.
However, this is not the only difficulty.
Remarkably, even if all tiles reachable from the initial particle positions are mutually reachable, gathering may still be impossible, see \cref{sfig:gathering-ft-b}.

Apart from practical applications, tilt models fascinate with simple, intuitive problems that nevertheless exhibit surprising complexity.
A central question is how the computational complexity of a problem relates to the geometric complexity of the environment.
Two fundamental problems are \occupancy, which asks if any of a given set of particles can be moved to a specified tile, and \shapeReconfiguration, which asks if a given configuration of particles can be transformed into another given configuration.
Both are proven \PSPACE-complete in \FT for general polyominoes in~\cite{tilt-soda20}, where the authors pose the complexity for \emph{simple} polyominoes, i.e., polyominoes without holes, as an open problem.
We also consider a supplementary class of polyominoes we call \emph{mazes}: collections of horizontal and vertical lines that only intersect in \textsf{T}-shaped and \textbf{+}-shaped junctions, see \cref{sfig:polyomino-maze}.

\subsection{Our contributions}

We provide a wealth of insights into the gathering of particles in the full tilt model.
Let a given polyomino have $n$ corners, $n_c \leq n$ of which are convex.
We prove the following results.

\begin{itemize}
\item When particles are initially positioned at every tile of a polyomino, a gathering sequence of length $\bigO(n_c n^2)$ can be computed, if one exists at all (\cref{thm:alg-gathering,cor:gather-upper-bound}).
\item There are polyominoes with shortest gathering sequences of length $\Omega(n^2)$ (\cref{thm:gather-lower-bound}).
\item Finding a shortest gathering sequence is \NP-hard, even in simple mazes (\cref{thm:gathering-hard-simple-mazes}).
\item There is a polynomial-time $4$-approximation for minimizing the length of a gathering sequence in simple mazes (\cref{thm:simple-maze-approx}), but there is no polynomial-time constant-factor approximation in general simple polyominoes (unless $\P = \NP$; \cref{thm:inapproximability-simple}).
\item Determining whether a subset of tiles has a gathering sequence is \PSPACE-complete in mazes (\cref{cor:subset-gathering-mazes}) and \NP-hard in simple polyominoes (\cref{thm:subset-gathering-simple}).
\end{itemize}

The last of these results allows us to derive progress on the open question from~\cite{tilt-soda20} regarding the complexity of \occupancy and \shapeReconfiguration in simple polyominoes.
We show that both remain \NP-hard (\cref{cor:occupancy-hard-simple}).
Furthermore, we contribute to questions raised in~\cite{tilt-deterministic} by proving a natural generalization of \occupancy and \shapeReconfiguration, which we call \tiltCover, \NP-hard in simple polyominoes even for \emph{deterministic} move sequences, i.e., for repetitions of a fixed sequence of moves.
Finally, we establish initial results on the parameterized complexity of tilt problems by showing that a parameterized gathering problem, as well as \occupancy and \shapeReconfiguration{}, are \thmW{1}-hard when parameterized by the number of particles in a given configuration.
Throughout this paper, we utilize a so far overlooked connection between tilt models and automata theory, in particular synchronizing automata.
Since this topic is unlikely to be widely known among the computational geometry community, we give a brief survey in \cref{subsec:sync-automata}.
We expect that more insights for tilt models can be gained by exploring this approach further.
Conversely, the geometric nature of tilt models could lead to new ways to tackle open questions in automata theory.
As we shall see, the discrepancy between the upper and lower bounds on the worst-case lengths of gathering sequences is closely related to a longstanding open problem known as \Cerny's conjecture.

The rest of this text is structured as follows.
After a discussion of related work in \cref{subsec:relatedwork}, \cref{sec:preliminaries} introduces formal definitions and notation.
\Cref{sec:gathering} establishes the connection between tilt models and automata theory and uses it to gather particles in a completely filled polyomino.
Conversely, \cref{sec:simulation} demonstrates how particles can be used to simulate automata and discusses the consequences for gathering tasks.
We then focus on restricted environments, specifically simple polyominoes, for approximating shortest gathering sequences (\cref{sec:optimization}) and for gathering subsets (\cref{sec:subsets}).
\Cref{sec:consequences} examines a variety of consequences our techniques have for related problems.
Finally, \cref{sec:conclusion} provides concluding remarks and open questions.

\subsection{Related work}\label{subsec:relatedwork}

Apart from the aforementioned work~\cite{gathering-icra20,gathering-case16} on gathering in the single step model, tilt models have been investigated for a variety of purposes.
These include realizing permutations of labeled rectangular configurations~\cite{tilt-algosensors13,tilt-nc19,tilt-designing-worlds,tilt-rearranging}, simulating dual-rail logic circuits~\cite{tilt-nc19,tilt-designing-worlds,tilt-fan-out}, sorting and classifying polyominoes~\cite{tilt-sort-classify}, mapping an unknown region~\cite{tilt-mapping}, and filling and draining a polyomino~\cite{tilt-fill}.
Notably, gathering all particles in a completely filled polyomino is strictly more difficult than draining to a freely chosen sink: While gathering requires a single position $p$ to be reachable from everywhere, which suffices to guarantee drainability to a sink placed at $p$~\cite{tilt-fill}, this does not ensure gatherability, see \cref{sfig:gathering-ft-b}.
A major line of research is concerned with the assembly of structures from particles that can be glued together.
Introduced in~\cite{assembly-ral17}, a formal algorithmic analysis~\cite{assembly-algorithmica20} was followed by the introduction of more efficient assembly in stages~\cite{assembly-parallel}, universal constructors that can build any of a class of shapes encoded in move sequences~\cite{tilt-soda20,tilt-soda19}, and assembly via more fine-grained control in the single step model~\cite{assembly-iros23,assembly-single-step}.
Assembly is strongly related to universal reconfiguration, for which algorithms achieving a worst-case optimal number of moves in either full tilt or single step model were given in~\cite{tilt-reconfiguration-nc21}.

The study of the computational complexity of fundamental tilt problems was initiated in~\cite{tilt-algosensors13}, where the problems now known as \occupancy and \shapeReconfiguration were originally shown \NP-hard in \FT.
It was later shown that finding a shortest sequence of moves for labeled reconfiguration is \PSPACE-hard~\cite{tilt-designing-worlds}, and that \shapeReconfiguration is \PSPACE-complete with the addition of $2 \times 2$ particles~\cite{tilt-soda19}, before the complexity of \occupancy and \shapeReconfiguration was settled to be \PSPACE-complete with only unit particles~\cite{tilt-soda20}.
In the single step model, \occupancy is solvable in polynomial time~\cite{tilt-limited-directions-jip20}, unless dominoes are allowed as particles, in which case it becomes \PSPACE-complete even in rectangles~\cite{tilt-hardness-cccg20}, whereas \shapeReconfiguration remains \PSPACE-complete in \SSt with unit particles~\cite{tilt-hardness-cccg20}.
Compared to \FT, more complexity results are known for \SSt when the environment is restricted.
The \textsc{Relocation} problem, which asks if a specific labeled particle can be moved to a given position, is \NP-complete in monotone polyominoes when restricted to three directions~\cite{tilt-limited-directions-jip20}, and \NP-complete in squares when restricted to two directions~\cite{tilt-limited-directions-nc25}.
Most recently, the authors of~\cite{tilt-deterministic} proved \PSPACE-completeness of several tilt problems for \emph{deterministic} move sequences, i.e., for repetitions of a fixed sequence of moves.

Taking the maximal movement of the full tilt model and adding individual control leads to the rules of the game Ricochet Robots~\cite{randolph-endm06,randolph-jip17}.
Holzer and Schwoon~\cite{atomix} used ideas similar to ours in their \PSPACE-completeness proof for the related game Atomix, but we cannot use their construction directly because we need one that allows to analyze the length of gathering sequences.
Other related models examine uniform movement in a rectangle, e.g., by sweeping lines~\cite{sweeping-eurocg16,sweeping-cgt23}, or the rules of the game~2048~\cite{2048-fun16,2048-cccg20}.

The gathering of point particles inside a polygon has been considered by Bose and Shermer~\cite{repulsion-cg20}, but their setting uses repulsion from a point within the polygon, whereas tilt models could be interpreted to use repulsion from a point at infinity.
More closely related is the problem of localizing robots with severely limited capabilities~\cite{localization-icra20,localization-tor07}.
A sequence of movement commands that leads a robot from an initially unknown position in a polygonal environment to a definite position is equivalent to gathering all point particles in the environment at the same final position.

\section{Preliminaries}\label{sec:preliminaries}

In this section, we give the rigorous definitions omitted from \cref{sec:intro} in favor of intuition.
Every polyomino~$P$ has an associated \emph{dual graph}~$G_P=(V,E)$, see \cref{sfig:polyomino-maze}, where $V \subset \ZZ$ and $E = \{\{u,v\} : {u \in V}, {v \in V}, \|u-v\|_1 = 1\}$, and a \emph{boundary}~$\partial P$, an axis-aligned polygon with integer side lengths that separates $V$ from $\ZZ \setminus V$.
We identify the vertices of $V$ with the tiles of $P$ and call them \emph{pixels}.
As a general convention throughout this text, $n$ will denote the number of corners of $\partial P$, $n_c \leq n$ the number of convex corners, and $N = |V|$ the number of pixels of a given polyomino.
Pixels are uniquely determined as the intersection of two \emph{segments}, a \emph{row segment} and a \emph{column segment}, which are maximally contiguous parts of rows and columns; a \emph{corner pixel} is adjacent to two perpendicular sides of $\partial P$, i.e., next to a convex corner, see \cref{sfig:polyomino-simple}.
Formally, a maze is a \emph{thin} polyomino (it contains no $2 \times 2$ squares) in which all corner pixels have degree~$1$, see \cref{sfig:polyomino-maze}.

\begin{figure}[tbh]
  \begin{subcaptionblock}{0.16\textwidth}
    \phantomcaption\label{sfig:polyomino-simple}
    \centering
    \includegraphics[page=1]{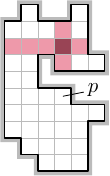}\\
    \captiontext*{}
  \end{subcaptionblock}\hfill
  \begin{subcaptionblock}{0.16\textwidth}
    \phantomcaption\label{sfig:polyomino-maze}
    \centering
    \includegraphics[page=2]{models}\\
    \captiontext*{}
  \end{subcaptionblock}\hfill
  \begin{subcaptionblock}{0.68\textwidth}
    \phantomcaption\label{sfig:moves}
    \centering
    \includegraphics[page=3]{models}\\
    \captiontext*{}
  \end{subcaptionblock}\hfill
  \caption{\subref{sfig:polyomino-simple}~A simple polyomino with a darkly shaded pixel at the intersection of two lightly shaded segments and a corner pixel~$p$. \subref{sfig:polyomino-maze}~A non-simple maze and its dual graph. \subref{sfig:moves}~Downwards moves in the blocking and merging variants of the full tilt model acting on a configuration.}\label{fig:models}
\end{figure}

A set $C \subseteq V$ of all pixels containing particles is called a \emph{configuration}.
Pixels $p \in C$ are \emph{occupied}, those in $V \setminus C$ are \emph{free}.
We use the set $\D = \{\udir,\ddir,\ldir,\rdir\}$ as shorthand for the directions up, down, left, and right.
Formally, a tilt model defines a function $\delta: \D \to (2^V \to 2^V)$, i.e., it maps directions to transformations of configurations.
A transformation $\delta(v)$, $v \in \D$, is called a \emph{move}.
$\delta$ is inductively extended to a function on $\D^*$, the set of (possibly empty) sequences of directions, by setting $\delta(wv) = \delta(v) \circ \delta(w)$, for $v \in \D$ and $w \in \D^*$, and $\delta(\varepsilon) = \operatorname{id}$, where $\varepsilon$ is the empty sequence and $\operatorname{id}$ the identity function.
Particles in the full tilt model are usually blocked when colliding with other particles, but for the purpose of gathering, they merge instead, see \cref{sfig:moves}.
A formal definition of the \emph{blocking} and \emph{merging} variants for the direction~$\ldir$ is as follows; definitions for the other directions are analogous.

\begin{description}
  \item[Full tilt (blocking):] $p \in \delta^\times(\ldir)(C)$ if and only if $p$ is one of the $|R \cap C|$ leftmost pixels of the row segment~$R$ containing $p$.
  \item[Full tilt (merging):] $p \in \delta^\cup(\ldir)(C)$ if and only if $p$ is the leftmost pixel of the row segment~$R$ containing $p$ and $R \cap C \neq \varnothing$.
\end{description}

Both variants agree on singleton configurations, and the set of singleton configurations is closed under all moves.
This gives rise to the function $\delta^1: \D \to (V \to V)$ defined as $\delta^1(v)(p) = q$ whenever $\delta^\times(v)(\{p\}) = \delta^\cup(v)(\{p\}) = \{q\}$ for $v \in \D$.
When $\delta$ is clear from the context, we simplify the notation by writing $C \cdot w$ instead of $\delta(w)(C)$ and $C \xrightarrow{w} C'$ to mean $C' = \delta(w)(C)$, and use moves~$\delta(v)$ and directions~$v \in \D$ interchangeably.

\begin{figure}[htb]
  \centering
  \includegraphics{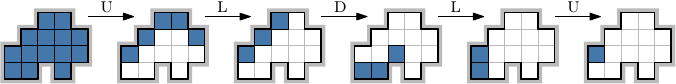}
  \caption{The sequence $\udir\ldir\ddir\ldir\udir$ is a gathering sequence for this polyomino in the full tilt model.}\label{fig:gathering-seq}
\end{figure}

A \emph{gathering sequence for a configuration $C$} is a sequence of directions $w \in \D^*$ such that $|\delta^\cup(w)(C)| = 1$, in which case $C$ is a \emph{gatherable configuration}; a \emph{gathering sequence for a polyomino $P$} is a gathering sequence for the configuration $V$, see \cref{fig:gathering-seq}, making $P$ a \emph{gatherable polyomino}.
We define $\sgs[C]$ to be the minimum length of a gathering sequence (i.e., the length of a \emph{shortest gathering sequence}) for a gatherable configuration~$C$, and assign $\sgs = \sgs[V]$ for a gatherable polyomino~$P$.

The uniform movement of particles allows us to reduce the behavior of configurations in the merging variant to the motion of single particles.

\begin{observation}\label{obs:gathering-localization}
For $w \in \D^*\!$ and $C \subseteq V$, $\delta^\cup(w)(C) = \bigcup_{p \in C} \{\delta^1(w)(p)\} = \bigcup_{p \in C} \delta^\times(w)(\{p\})$.
\end{observation}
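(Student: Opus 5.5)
The plan is to argue by induction on the length of $w$, after first settling the single-move case directly from the definitions.

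\emph{Base step.} For the empty sequence $w=\varepsilon$, all three expressions equal $C$, because $\delta(\varepsilon)=\operatorname{id}$ and $\delta^1(\varepsilon)(p)=p$.

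\emph{Single move.} Fix a direction $v\in\D$; by symmetry it suffices to treat $v=\ldir$. By the merging definition, $p\in\delta^\cup(\ldir)(C)$ exactly when $p$ is the leftmost pixel of its row segment $R$ and $R\cap C\neq\varnothing$. For a singleton $\{q\}$, both definitions send $q$ to the leftmost pixel of the row segment containing $q$. Hence $\delta^\times(\ldir)(\{q\})=\{\delta^1(\ldir)(q)\}$. The union over $q\in C$ of these pixels is precisely the set of leftmost pixels of row segments that meet $C$, which is $\delta^\cup(\ldir)(C)$. This gives
\[\delta^\cup(v)(C)=\textstyle\bigcup_{q\in C}\{\delta^1(v)(q)\}.\]
Since the identity holds for every $C$, it also shows that $\delta^\cup(v)$ distributes over unions: $\delta^\cup(v)(A\cup B)=\delta^\cup(v)(A)\cup\delta^\cup(v)(B)$.

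\emph{Inductive step.} Write $w=w'v$. Using $\delta(w'v)=\delta(v)\circ\delta(w')$ and the induction hypothesis,
\[\delta^\cup(w)(C)=\delta^\cup(v)\Bigl(\textstyle\bigcup_{p\in C}\{\delta^1(w')(p)\}\Bigr)=\textstyle\bigcup_{p\in C}\delta^\cup(v)(\{\delta^1(w')(p)\}).\]
The second equality is the distributivity established above. Each term on the right equals $\{\delta^1(v)(\delta^1(w')(p))\}=\{\delta^1(w)(p)\}$, where $\delta^1$ is extended to sequences by the same composition rule. This proves the first equality of the observation.

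\emph{Second equality.} Because singletons are closed under both $\delta^\times$ and $\delta^\cup$, and the two variants agree on singletons, a straightforward induction on $|w|$ gives $\delta^\times(w)(\{p\})=\{\delta^1(w)(p)\}$. This yields the second equality.

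There is no real obstacle here. The only point needing care is the single-move case, specifically verifying that merging semantics is exactly the union of the individual images, i.e., that $\delta^\cup(v)$ distributes over unions. The blocking variant plays no role except through its agreement with the merging variant on singletons.
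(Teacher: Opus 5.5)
Your proof is correct. The paper gives no formal proof of this observation; it only remarks that uniform particle movement lets the merging variant be reduced to single-particle motion. Your single-move check that $\delta^\cup(v)$ distributes over unions, followed by induction on $|w|$, is exactly the routine formalization of that remark.
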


The single step analogue of \cref{obs:gathering-localization} is implicitly contained in~\cite{gathering-icra20}.
It lies at the heart of the equivalence between gathering and localization mentioned in \cref{subsec:relatedwork}: One can either consider a configuration of particles to be gathered by merging, or move a single particle initially positioned at one of several possible positions.
Several simple consequences follow:
A gathering sequence for a configuration~$C$ is also a gathering sequence for every $C' \subseteq C$.
If $w$ is a gathering sequence for $C$, then so is $ww'$ for every $w' \in \D^*$.
Prepending works for $V$: If $w$ is a gathering sequence for $V$, then so is $w'w$ for every $w' \in \D^*$.

We will use two algebraic properties of the full tilt model. First, every move is idempotent, i.e., $\delta(v) \circ \delta(v) = \delta(v)$ for all $v \in \D$.
Secondly, when two moves in opposite directions are applied in sequence, the first is nullified, e.g., $\delta(\ldir) \circ \delta(\rdir) = \delta(\ldir)$.
Thus, we can generally assume that sequences of moves alternate between perpendicular directions.

We investigate three problems that have so far not been analyzed in the full tilt model.

\problem{\fullGathering}{A polyomino~$P$.}{Find a gathering sequence for $P$, or decide that $P$ is not gatherable.}

\problem{\quickGathering}{A gatherable polyomino~$P$.}{Compute a gathering sequence for $P$ of minimum length $\sgs$.}

\problem{\subsetGathering}{A polyomino~$P$ and a configuration~$C \subseteq V$.}{Decide if $C$ is gatherable.}

To investigate if the difficulty of gathering tasks arises only under asymptotic conditions with unbounded numbers of particles and long move sequences, we consider a general parameterized gathering problem.
We assume familiarity with fundamental notions of parameterized complexity, such as fpt-reductions and the complexity classes \FPT, \XP, and \W[$i$], $i \geq 1$; see, e.g., the textbook by Flum and Grohe~\cite{textbook-parameterized}.

\paraProblem{\paraGathering}{A polyomino $P$ and a configuration $C$ with $|C|=k$.}{Does a gathering sequence $w$ exist for $C$ with $|w| \leq \ell$?}{$k \geq 1$, $\ell \in \N \cup \{\infty\}$.}

When $\ell$ gets fixed to an integer, \paraGathering is trivially solvable in $2^{\ell+1} N^{\bigO(1)}$ time by enumerating and checking all $4 \cdot 2^{\ell-1}$ possible sequences of length $\ell$ with alternating horizontal and vertical moves.
Thus, \paraGathering is in \FPT when parameterized by $\ell$.

\section{Gathering particles by synchronizing automata}\label{sec:gathering}

The notation introduced in \cref{sec:preliminaries} differs slightly from established conventions and may be considered unnecessarily heavy. It does, however, expedite a point we want to make: Tilt models are transition functions of automata.
A \emph{semi-automaton}\footnote{All (semi-)automata considered in this text are complete, deterministic, and finite.} is a triple $(Q, \Sigma, \delta)$, where $Q$ is a finite, non-empty set of \emph{states}, $\Sigma$ is a finite, non-empty \emph{alphabet}, and $\delta: \Sigma \to (Q \to Q)$ is a \emph{transition function} mapping letters of $\Sigma$ to transformations of~$Q$. As usual, $\delta$ gets extended to a function over \emph{words}~$w \in \Sigma^*$.
An \emph{acceptor} is a 5-tuple $(Q,\Sigma,\delta,q_0,F)$, where $(Q,\Sigma,\delta)$ is a semi-automaton, $q_0 \in Q$ an \emph{initial state}, and $F \subseteq Q$ a set of \emph{accepting states}.
The \emph{language accepted by an acceptor $A$} is the set of words $L(A)=\{w \in \Sigma^* : q_0 \cdot w \in F\}$.
For simplicity, we will refer to both semi-automata and acceptors as \emph{automata}.
Now it is easy to see that a tilt model defines an automaton $(2^V, \D, \delta)$ for a polyomino~$P$ with dual graph~$G_P=(V,E)$. We can even restate some established problems in the language of automata theory.

\begin{example}\label{ex:occupancy-automaton}
Assume we are working in a specified tilt model and are given a polyomino $P$.
\begin{itemize}
  \item The \shapeReconfiguration problem asks if $L((2^V, \D, \delta, C, \{C'\}))$ is non-empty for given configurations $C,C' \subseteq V$.
  \item The \occupancy problem asks if $L((2^V, \D, \delta, C, \{C' \subseteq V: p \in C'\}))$ is non-empty for a given configuration $C \subseteq V$ and pixel $p \in V$.
\end{itemize}
\end{example}

Although interesting, this connection is of no immediate use due to the exponential size of the set of states. Indeed, we would not expect a polynomial description because \occupancy and \shapeReconfiguration are \PSPACE-complete in \FT~\cite{tilt-soda20}, whereas emptiness of $L(A)$ can be checked in time polynomial in the size of $A$~\cite{complexity-survey}.
The approach via automata shows its merit when we consider the analogue to gathering sequences, synchronizing words.

\subsection{A brief overview of synchronizing automata}\label{subsec:sync-automata}

A word~$w \in \Sigma^*$ is a \emph{synchronizing word} for a set~$Q' \subseteq Q$ if there is a state~$q \in Q$ such that $p \xrightarrow{w} q$ for all $p \in Q'$, in which case $Q'$ is a \emph{synchronizing set}.
The minimum length of a synchronizing word for a synchronizing set~$Q'$ is called the \emph{reset threshold} $\rt[Q']$.
A~\emph{synchronizing automaton} $A=(Q,\Sigma,\delta)$ has a synchronizing set of states~$Q$, in which case we define $\rt = \rt[Q]$.
We now briefly summarize some important results; more information can be found in several excellent surveys~\cite{complexity-survey,sandberg-survey04,volkov-survey08,volkov-survey22}.

It has been (re-)discovered multiple times how to check if an automaton is synchronizing and, if so, find a synchronizing word of length $\bigO(|Q|^3)$.
Eppstein's work~\cite{eppstein-reset} is particularly noteworthy for its careful analysis of the time and space requirements.
\Cerny's conjecture states that $\rt \leq (|Q|-1)^2$ holds for every synchronizing automaton~$A$.
It is named in honor of foundational work by \Cerny, see~\cite{cerny-translation} for a translation into English, that established the corresponding lower bound via a series of synchronizing automata.
Remarkably, this seemingly simple conjecture has resisted attempts at a general solution for decades, although it has been proven for special classes of automata, e.g., Eulerian automata~\cite{sync-eulerian}, monotonic automata~\cite{eppstein-reset}, and others~\cite{volkov-survey22}.
Since an improvement of the asymptotic bound $\bigO(|Q|^3)$ has proven elusive, researchers have looked at the coefficient $\alpha$ of the cubic term.
The current best value is $\alpha \leq 0.1654$ due to Shitov~\cite{upper-bound-shitov}, who expanded on ideas by Szyku{\l}a~\cite{upper-bound-szykula}.
Kiefer and Ryzhikov~\cite{minimize-rank} recently improved the running time of an algorithm for the more general problem of finding a word $w$ that minimizes $|Q \cdot w|$ (called the \emph{rank} of $w$).

A lower bound on the achievable approximation ratio for $\rt$ was improved several times~\cite{inapproximability-binary,inapproximability-log,inapproximability-scs}, until Gawrychowski and Straszak~\cite{strong-inapproximability} showed that it is \NP-hard to approximate $\rt$ within a factor $|Q|^{1-\varepsilon}$ for every $\varepsilon > 0$.
On the practical side, algorithm engineering efforts have resulted in implementations that find shortest synchronizing words for instances with more than 500 states~\cite{shortest-sync-algeng}.

Deciding if a subset $S \subseteq Q$ is synchronizing is \PSPACE-complete, as stated by Natarajan~\cite{natarajan-orienters} and formally proven by Rystsov~\cite{automata-rystsov83}.
This problem is strongly related to \textsc{IntersectionNonEmptiness}: Given a family of automata over the same alphabet, decide whether they accept a common word.
Kozen~\cite{intersection-kozen} originally proved this problem \PSPACE-complete, and it remains \NP-hard for tally automata~\cite{intersection-timecop,intersection-unary-stoc73}, i.e., automata over a unary alphabet.

\subsection{Efficient gathering in the full tilt model}

By \cref{obs:gathering-localization}, a gathering sequence for a polyomino $P$ corresponds precisely to a synchronizing word for the automaton $(V,\D,\delta^1)$.
Thus, we could solve \fullGathering by applying Eppstein's algorithms~\cite{eppstein-reset} to $(V,\D,\delta^1)$ to decide if it is synchronizing in $\bigO(N^2)$ time and, if so, find a gathering sequence for $P$ in $\bigO(N^3)$ time.
We now sketch algorithms with running times of $\bigO(n^2)$ and $\bigO(n_c n^2)$, respectively, that use only $\partial P$ as input.
This can make a huge difference as the number of pixels $N$ might not be polynomial in the~number~of~corners~$n$.

\begin{figure}[tbh]
  \begin{subcaptionblock}{0.3\textwidth}
    \phantomcaption\label{sfig:ft-automaton-a}
    \centering
    \includegraphics{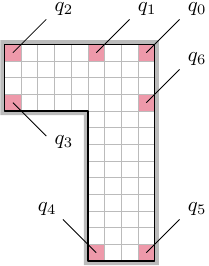}\\
    \captiontext*{}
  \end{subcaptionblock}\hfill
  \begin{subcaptionblock}{0.7\textwidth}
    \phantomcaption\label{sfig:ft-automaton-b}
    \centering
    \begin{tikzpicture}[node distance=1.5cm,on grid,auto,>={Stealth[round]},shorten >=1pt,
                        every state/.style={minimum size=0cm,inner sep=0.1cm}]
      \node[state] (q_6)                      {$q_6$};
      \node[state] (q_3) [right=of q_6]       {$q_3$};
      \node[state] (q_2) [right=of q_3]       {$q_2$};
      \node[state] (q_0) [above right=of q_2] {$q_0$};
      \node[state] (q_1) [below right=of q_0] {$q_1$};
      \node[state] (q_4) [right=of q_1]       {$q_4$};
      \node[state] (q_5) [right=of q_4]       {$q_5$};

      \path[->] (q_6) edge [bend right=50] node {$\ddir$}       (q_5)
                      edge [bend left]     node {$\ldir$}       (q_3)
                      edge [bend left=60]  node {$\udir$}       (q_0)
                      edge [loop below]    node {$\rdir$}       ()
                (q_3) edge [bend left]     node {$\rdir$}       (q_6)
                      edge [bend left]     node {$\udir$}       (q_2)
                      edge [loop above]    node {$\ddir,\ldir$} ()
                (q_2) edge [bend left]     node {$\ddir$}       (q_3)
                      edge [bend left]     node {$\rdir$}       (q_0)
                      edge [loop below]    node {$\udir,\ldir$} ()
                (q_0) edge [bend left]     node {$\ldir$}       (q_2)
                      edge [bend left=60]  node {$\ddir$}       (q_5)
                      edge [loop above]    node {$\udir,\rdir$} ()
                (q_1) edge                 node {$\rdir$}       (q_0)
                      edge [bend left]     node {$\ldir$}       (q_2)
                      edge [bend left]     node {$\ddir$}       (q_4)
                      edge [loop below]    node {$\udir$}       ()
                (q_4) edge [bend left]     node {$\udir$}       (q_1)
                      edge [bend left]     node {$\rdir$}       (q_5)
                      edge [loop below]    node {$\ddir,\ldir$} ()
                (q_5) edge [bend left]     node {$\ldir$}       (q_4)
                      edge [bend right=40] node {$\udir$}       (q_0)
                      edge [loop below]    node {$\ddir,\rdir$} ();
    \end{tikzpicture}\\
    \captiontext*{}
  \end{subcaptionblock}\hfill
  \caption{\subref{sfig:ft-automaton-a}~A polyomino $P$ and its significant pixels. \subref{sfig:ft-automaton-b}~The full tilt automaton $A(P)$.}\label{fig:ft-automaton}
\end{figure}

The key observation is that $\delta^\cup$ very quickly reduces the set of occupied pixels to one of size~$\bigO(n)$.
Consider the set $S \subseteq V$ of \emph{significant pixels} that contains all corner pixels and possibly \emph{helper pixels}, which are found by extending sides of $\partial P$ that meet at reflex corners until they hit another side of $\partial P$.
\Cref{sfig:ft-automaton-a} illustrates the significant pixels of a polyomino~$P$ with helper pixels $q_1$ and $q_6$.
Observe that $|S| \leq 2n-n_c$ because at most two helper pixels are included per reflex corner.
Any sequence of two moves in perpendicular directions changes the position of a particle from any pixel $p \in V$ to one in $S$, e.g., $\delta^1(\ddir\ldir)(p) \in S$.
$S$ is closed under $\delta^1$, i.e., after the move sequence $\ddir\ldir$ we can focus on $\delta^1$ restricted to~$S$.
We call the resulting automaton $A(P)=(S,\D,\delta^1)$ the \emph{full tilt automaton} of $P$, see \cref{sfig:ft-automaton-b}.

\begin{lemma}\label{lem:gather-sync-pa}
A polyomino $P$ is gatherable if and only if the full tilt automaton $A(P)$ is synchronizing, in which case $\rt[A(P)] \leq \sgs \leq \rt[A(P)] + 1$ holds.
\end{lemma}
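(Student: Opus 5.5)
The plan is to compare the whole pixel set $V$ with the set $S$ of significant pixels using \cref{obs:gathering-localization}: a sequence $w$ is a gathering sequence for $P$ exactly when $|\delta^1(w)(V)|=1$, and it is a synchronizing word for $A(P)$ exactly when $|\delta^1(w)(S)|=1$. Both directions of the equivalence and both inequalities then come from two facts stated before the lemma. The first is $S\subseteq V$, together with $S$ being closed under $\delta^1$. The second is that two perpendicular moves map all of $V$ into $S$. I will also use idempotence of moves.

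\emph{Lower bound and ``only if''.} Let $w$ be a gathering sequence for $P$. Then $\delta^1(w)(S)\subseteq\delta^1(w)(V)$, and the right-hand side is a singleton. Since $S$ is closed under $\delta^1$, the word $w$ is a synchronizing word of $A(P)$. Hence if $P$ is gatherable, then $A(P)$ is synchronizing, and taking $w$ of length $\sgs$ gives $\rt[A(P)]\le\sgs$.

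\emph{Upper bound and ``if''.} Let $w$ be a shortest synchronizing word of $A(P)$.

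First suppose $w$ is nonempty. Write $w=vw'$ with $v\in\D$, and choose $u\in\D$ perpendicular to $v$. By the two-move fact, $V\cdot uv\subseteq S$. Applying $v$ again and using idempotence gives
\[
V\cdot uv = V\cdot uvv \subseteq S\cdot v .
\]
Therefore
\[
V\cdot uw = (V\cdot uv)\cdot w' \subseteq (S\cdot v)\cdot w' = S\cdot w ,
\]
which is a singleton. So $uw$ is a gathering sequence of length $\rt[A(P)]+1$.

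Now suppose $w=\varepsilon$, so $|S|=1$. Every pixel of $V\cdot uv$ lies in $S$, so any two perpendicular moves already gather $V$. This gives a gathering sequence of length at most $2$. To reach the bound $\rt[A(P)]+1=1$, I will argue that a polyomino with only one significant pixel must be a single pixel. Indeed, every polyomino has at least four convex corners and hence a corner pixel at each of them. All of these coincide only for a $1\times1$ polyomino, and then $\sgs=0$.

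The main obstacle is the upper bound $\sgs\le\rt[A(P)]+1$. The naive argument, prepending two moves that map $V$ into $S$, only yields $\rt[A(P)]+2$. The trick above removes one move by reusing the first letter of $w$ as the second of the two perpendicular moves, which is justified by idempotence of moves. The other point requiring care is the degenerate case $\rt[A(P)]=0$, handled above.
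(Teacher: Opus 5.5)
Your proof is correct and matches the paper's argument. The lower bound comes from $S \subseteq V$. The upper bound comes from prepending a single move perpendicular to the first letter $v$ of a synchronizing word and using idempotence of $v$ to absorb the duplicated letter; the paper fixes $\ddir$ and $\ldir$ by rotation. Your justification that $\rt[A(P)]=0$ forces $P$ to be a single pixel spells out what the paper only asserts in one line.
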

\begin{proof}
A gathering sequence for $P$ is a synchronizing word for $A(P)$ because $S \subseteq V$.
Conversely, assume $w$ is a synchronizing word for $S$.
If $w=\varepsilon$, then $P$ is a single pixel and $\varepsilon$ a gathering sequence.
Otherwise, assume without loss of generality that $w=\ldir{}w'$ for some $w' \in \D^*$ (rotate $P$ if necessary).
Then $\ddir\ldir{}w = \ddir\ldir\ldir{}w'$ is a gathering sequence for $P$ and can be shortened to $\ddir\ldir{}w' = \ddir{}w$, because moves in the full tilt model are idempotent.
\end{proof}

It is a well-known fact~\cite{natarajan-orienters,volkov-survey22} that an automaton $(Q,\Sigma,\delta)$, with $|Q| > 1$, is synchronizing if and only if every unordered pair of states $\{s,t\} \subseteq Q$ has a synchronizing word.
This suggests a strategy of iteratively merging pairs, which is also the foundation of known strategies for gathering in \SSt~\cite{gathering-icra20,gathering-case16} and for robot localization~\cite{localization-icra20,localization-tor07}.
To efficiently determine synchronizing words for pairs in $A(P)$, we construct the \emph{pair automaton} ${A(P)^{\leq 2}=(S^{\leq 2},\D,\delta^{\leq 2})}$, where $S^{\leq 2} = \{\{p,q\}: p,q \in S\}$ and $\delta^{\leq 2}(v)(\{p,q\}) = \{\delta^1(v)(p)\} \cup \{\delta^1(v)(q)\}$.
Note that~$S^{\leq 2}$ also contains all singletons $\{p\} \subseteq S$.
A word $w$ with $p \cdot w = q \cdot w$ in $A(P)$ corresponds precisely to a path labeled $w$ from $\{p,q\}$ to $\{p \cdot w\}$ in the digraph underlying $A(P)^{\leq 2}$.
Using breadth-first search on this digraph to build a shortest-path forest allows to pre-compute the next step on a shortest path to a singleton for every pair of states, if such a path exists.

To cut down the number of pairs that need to be considered, we initially reduce $S$ to corner pixels only by repeatedly applying two perpendicular moves, e.g., $\ddir\ldir$.
Observe that if $p \in S$ is a helper pixel, then $p \cdot \ddir\ldir < p$ in terms of their lexicographic order, as at least one coordinate decreases.
Thus, $\bigO(n)$ applications of $\ddir\ldir$ suffice to eliminate all helper pixels (and all corner pixels not at lower-left corners) from $S$.
This way we need to check at most $\bigO(n_c)$ pairs, improving the running time for non-simple polyominoes with many reflex corners.

\begin{algorithm2e}[tbh]
  \DontPrintSemicolon
  \KwIn{The boundary $\partial P$ of a polyomino $P$.}
  \KwOut{A gathering sequence for $P$ in \FT, if one exists.}
  Compute the full tilt automaton $A(P) = (S,\D,\delta^1_{\FT})$.\;
  Construct the pair automaton $A(P)^{\leq 2}$.\;
  Use BFS to build a shortest-path forest in the digraph underlying $A(P)^{\leq 2}$.\;
  $w \leftarrow \ddir\ldir$\;
  $X \leftarrow S$\;
  \While{$X$ contains helper pixels\label{line:preprocess-start}}
        {$w \leftarrow w\ddir\ldir$\;
         $X \leftarrow \bigcup_{p \in X} p \cdot \ddir\ldir$\label{line:preprocess-end}\;
        }
  \While{$|X| > 1$\label{line:main-loop-start}}
        {Select $p,q \in X$ with $p \neq q$.\;
         \eIf{there is $w' \in \D^*$ such that $|\{p,q\} \cdot w'| = 1$ in $A(P)^{\leq 2}$}
             {$w \leftarrow ww'$\;
              $X \leftarrow \bigcup_{p \in X} p \cdot w'$\;
             }
             {\KwRet{$P$ is not gatherable.}\;}
         }
  \KwRet{w}\;
  \caption{\fullGathering in \FT.}\label{alg:full-gather}
\end{algorithm2e}

\begin{theorem}\label{thm:alg-gathering}
Given the boundary $\partial P$ of a polyomino~$P$ with $n_c \leq n$ convex corners, a gathering sequence for $P$, if one exists, can be computed in~$\bigO(n_c n^2)$ time.
\end{theorem}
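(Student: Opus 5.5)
We prove \cref{thm:alg-gathering} by showing that \cref{alg:full-gather} is correct and runs in $\bigO(n_c n^2)$ time.

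\emph{Correctness.} Every pair of states in a synchronizing automaton has a synchronizing word, and $w'$ synchronizes $\{p,q\}$ exactly when the pair automaton has a path from $\{p,q\}$ to a singleton. So if the second loop finds a pair with no such path, $A(P)$ is not synchronizing, and by \cref{lem:gather-sync-pa} $P$ is not gatherable. Otherwise, each iteration of the second loop merges $p$ and $q$, so $|X|$ strictly decreases. When the loop ends, $|X|=1$. Since $X = S\cdot u$ for the prefix $u$ following the initial $\ddir\ldir$, we get $V \cdot \ddir\ldir u = S \cdot \ddir\ldir u \subseteq X$. This uses that $V\cdot \ddir\ldir \subseteq S$, that $S$ is closed under $\delta^1$, and idempotence, together with \cref{obs:gathering-localization}. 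Hence $w$ gathers $P$. The first loop terminates because each application of $\ddir\ldir$ lexicographically decreases every helper pixel, and there are only $\bigO(n)$ significant pixels. After the first loop, $X$ consists only of corner pixels that are fixed points of $\ddir\ldir$, so $|X|\le n_c$.

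\emph{Running time.} First, compute $S$ and the transitions $\delta^1$ on $S$ from $\partial P$ in $\bigO(n^2)$ time, for instance by naive ray shooting against all $\bigO(n)$ edges for each of the $\bigO(n)$ significant pixels and each direction. The pair automaton has $\bigO(n^2)$ states and $4$ outgoing edges per state. A reverse BFS from all singletons therefore builds the shortest-path forest in $\bigO(n^2)$ time and records for each pair its successor and its distance. The first loop runs $\bigO(n)$ iterations, each costing $\bigO(n)$, so it takes $\bigO(n^2)$ in total.

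The \textbf{main obstacle} is bounding the second loop. It has at most $n_c-1$ iterations. A shortest path in the pair digraph has length at most $|S^{\le 2}| = \bigO(n^2)$. Applying $w'$ letter by letter to all $|X|\le n_c$ states therefore costs $\bigO(n_c n^2)$ per iteration, and $\bigO(n_c^2 n^2)$ overall, which is too much.

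To fix this, I would track the $\bigO(n)$ significant pixels differently. Maintain $X$ as a set of at most $n_c$ states. Apply each letter of $w'$ to $X$, costing $\bigO(|X|)$ per letter. Then choose $p,q$ so that the total length of all chosen $w'$ is $\bigO(n^2)$ summed over iterations, which bounds the total work by $\bigO(n_c \cdot n^2)$. I expect this bound to fail in general, however.

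The simpler accounting I would commit to is the following. Each $w'$ has length $\bigO(n^2)$, there are at most $n_c$ iterations, and the total length of $w$ is $\bigO(n_c n^2)$. Updating $X$ can be done lazily: store for each element of $X$ only its current state, and note that merged elements disappear. The cost of one iteration is then $|w'|\cdot|X|$. To avoid the extra $n_c$ factor, I would precompute, for each pair's shortest path, the action of the whole word $w'$ on all $\bigO(n)$ states of $S$. This is done by dynamic programming along the BFS forest: the map $S\to S$ induced by the path from a pair equals the first letter composed with the successor pair's map. That gives $\bigO(n)$ work per pair and $\bigO(n^3)$ total, which is again too much unless we restrict attention to corner states, giving $\bigO(n_c)$ work per pair and $\bigO(n_c n^2)$ total.

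With this table, each iteration updates $X$ in $\bigO(n_c)$ time. Writing out $w$ explicitly costs $\bigO(n_c n^2)$, matching the claimed bound. The remaining point to verify is that restricting the table to corner states suffices. This should hold because after the first loop all tracked states are corners, and the images of corners under later words are looked up only for the elements currently in $X$.
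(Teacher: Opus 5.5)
\textbf{There is a gap in the main-loop running time.} Your correctness argument matches the paper, and so do your costs for building $A(P)$, the pair automaton, the BFS forest and the $\ddir\ldir$ preprocessing. The paper uses Sarnak--Tarjan point location for the first step; your naive $\bigO(n^2)$ ray shooting also fits the budget. A small point: you only need $V\cdot\ddir\ldir\subseteq S$ to get $V\cdot w\subseteq S\cdot u=X$, not the equality $V\cdot\ddir\ldir u=S\cdot\ddir\ldir u$ you wrote.

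The gap is in the main loop, the step you yourself flag. The fix you commit to, a table of $c\cdot w_{pq}$ restricted to corner pixels $c$, fails for two independent reasons.
\begin{enumerate}
\item The table cannot be filled by your recursion $c\cdot w_{pq}=(c\cdot a)\cdot w_{p'q'}$. Single moves send corner pixels to helper pixels: in \cref{fig:ft-automaton}, $q_4\cdot\udir=q_1$ and $q_3\cdot\rdir=q_6$, where $q_1,q_6$ are helpers. So the entry the recursion needs for $\{p',q'\}$ is not in the restricted table.
\item Your justification ``after the first loop all tracked states are corners'' only holds before the first merge. After $X\leftarrow X\cdot w'$, the set $X$ consists of images of corners, and these in general include helper pixels. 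The next lookup then falls outside the table.
\end{enumerate}

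So the proposal does not establish $\bigO(n_c n^2)$. From your ingredients you honestly get one of two weaker bounds:
\begin{itemize}
\item $\bigO(n_c^2 n^2)$, by applying each $w'$ letter by letter to $|X|\le n_c$ states;
\item $\bigO(n^3)$, by a dynamic-programming table over a domain closed under $\delta^1$, which is essentially all of $S$. This matches the claim only when $n_c\in\Theta(n)$, e.g.\ for simple polyominoes, where $n_c=(n+4)/2$.
\end{itemize}
Appending $(\ddir\ldir)^{\bigO(n)}$ after each merge to renormalize $X$ into the lower-left corners would repair the second problem, at $\bigO(n\,n_c)$ cost per iteration. It does not repair the first.

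The paper does not restrict anything to corners. It bounds the main loop by $\bigO(n_c)$ iterations and appeals to Eppstein's implementation techniques to make each iteration run in $\bigO(n^2)$ time. To complete your argument you need an update of $X$ under $w'$ costing $\bigO(n^2)$ per iteration, possibly amortized, with total preprocessing within $\bigO(n_c n^2)$. Your restricted table does not provide this.
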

\begin{proof}
The steps to compute a gathering sequence, if one exists, are listed in \cref{alg:full-gather}.
Correctness follows from \cref{lem:gather-sync-pa} and the well-known strategy to compute synchronizing words by repeatedly merging two states~\cite{natarajan-orienters,volkov-survey22}.
As to the running time, $A(P)$ can be constructed in $\bigO(n \log n)$ time using a data structure by Sarnak and Tarjan~\cite{planar-point-location} for planar point location, allowing us to pre-process $\partial P$ in $\bigO(n \log n)$ time such that afterwards we can find, for any point $p$ inside $P$, the side of $\partial P$ intersecting the projection of $p$ to the left, to the right, up, or down in $\bigO(\log n)$ time.
Thus, we can first find the helper pixels and then compute the functions $\delta^1(v)$, $v \in \D$, on every significant pixel in total time $\bigO(n \log n)$.
Afterwards, $A(P)^{\leq 2}$ can be computed in $\bigO(n^2)$ time.
Applying breadth-first search to $A(P)^{\leq 2}$ takes time linear in the size of $A(P)^{\leq 2}$, i.e., $\bigO(n^2)$ time.
The pre-processing loop in lines~\ref{line:preprocess-start}--\ref{line:preprocess-end} runs for $\bigO(n)$ iterations of $\bigO(n)$ time each and ensures that afterwards $|X| \leq n_c$.
Thus, the main loop starting with line~\ref{line:main-loop-start} is executed at most $\bigO(n_c)$ times.
We refer to Eppstein~\cite{eppstein-reset} for techniques that make every iteration of the main loop run in $\bigO(n^2)$ time, for a total running time of~$\bigO(n_c n^2)$.
\end{proof}

If we are only interested in the (non-)existence of a gathering sequence, we can stop after computing the shortest-path forest.
A gathering sequence exists if and only if every pair of states has been assigned a successor on a path to a singleton~\cite{eppstein-reset}.

\begin{corollary}\label{cor:alg-gathering-check}
Given the boundary $\partial P$ of a polyomino~$P$ with $n$ corners, one can decide in $\bigO(n^2)$ time whether $P$ is gatherable.
\end{corollary}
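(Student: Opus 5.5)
The plan is to reduce the decision question to reachability in the pair automaton $A(P)^{\leq 2}$, reusing the first half of \cref{alg:full-gather} and skipping the main loop. By \cref{lem:gather-sync-pa}, $P$ is gatherable if and only if $A(P)=(S,\D,\delta^1)$ is synchronizing. By the well-known pairwise criterion cited before \cref{alg:full-gather}, this holds (for $|S|>1$) if and only if every unordered pair $\{p,q\}\subseteq S$ has a synchronizing word. Such a word exists exactly when some singleton $\{r\}$ is reachable from $\{p,q\}$ in the digraph underlying $A(P)^{\leq 2}$. The question therefore becomes whether every node of $S^{\leq 2}$ can reach the set of singleton nodes.

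The steps, in order, are as follows. First, as in the proof of \cref{thm:alg-gathering}, compute $S$ and the four maps $\delta^1(v)|_S$ in $\bigO(n\log n)$ time using Sarnak--Tarjan point location. Here $|S|\leq 2n-n_c=\bigO(n)$. Second, build $A(P)^{\leq 2}$. It has $\bigO(n^2)$ nodes and out-degree $4$, hence $\bigO(n^2)$ edges, and each edge takes constant time to compute. Third, reverse all edges and run a single multi-source BFS from all singleton nodes. A pair is marked exactly when it can reach a singleton in the original digraph. This BFS is the shortest-path forest from the algorithm and runs in linear time, $\bigO(n^2)$. Fourth, answer ``gatherable'' if and only if every pair is marked; the trivial case $|S|=1$ is handled directly.

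Correctness of the pairwise criterion deserves a remark, and I would include it for completeness. If every pair is synchronizable, repeatedly pick two distinct states of the current image $X$ and append a word merging them. Each round strictly decreases $|X|$, because the word merges that pair and images never grow. After fewer than $|S|$ rounds a single state remains. The converse is immediate, since a synchronizing word for $S$ synchronizes every pair.

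The main point requiring care is the running time: no iteration over the $\bigO(n_c)$ merging rounds is needed. The existence check depends only on the static reachability structure, which is computed once in $\bigO(n^2)$, and this bound dominates the $\bigO(n\log n)$ preprocessing. The helper-pixel elimination of lines~\ref{line:preprocess-start}--\ref{line:preprocess-end} is unnecessary here, since it only speeds up word construction.
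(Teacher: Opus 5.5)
Your proposal is correct and matches the paper's argument: the paper likewise stops \cref{alg:full-gather} once the shortest-path forest in the pair automaton $A(P)^{\leq 2}$ has been built (your reverse multi-source BFS from the singleton nodes), and it declares $P$ gatherable if and only if every pair has been assigned a successor on a path to a singleton, all in $\bigO(n^2)$ time. Your added proof of the pairwise criterion and your remark that the helper-pixel preprocessing is unnecessary for this decision version are both sound.
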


The idea underlying the pair automaton reveals initial insight into \paraGathering when parameterized by the number of particles $k$.
A generalized $k$-subset automaton $A(P)^{\leq k}$, with all subsets of pixels of size at most $k$ as states, can be used to find shortest gathering sequences in $\bigO(N^k)$ time.
Thus, \paraGathering is in \XP when parameterized by $k$ and the dual graph is given as input.

\subsection{Worst-case bounds for gathering sequences in {\boldmath \FT}}

Once a polyomino is known to be gatherable, how long can a shortest gathering sequence be in the worst case?
An upper bound matches the running time of the algorithm.

\begin{corollary}\label{cor:gather-upper-bound}
For every gatherable polyomino $P$, $\sgs \in \bigO(n_c n^2)$.
\end{corollary}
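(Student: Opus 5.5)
The bound follows by counting the length of the sequence $w$ that \cref{alg:full-gather} outputs for a gatherable polyomino. By \cref{lem:gather-sync-pa}, $A(P)$ is synchronizing whenever $P$ is gatherable. The algorithm therefore never rejects, and by the correctness argument of \cref{thm:alg-gathering} it returns a gathering sequence for $P$. So $\sgs \leq |w|$, and it suffices to bound $|w|$ in terms of $n$ and $n_c$.

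The first part of $w$ is the initial $\ddir\ldir$ followed by the preprocessing loop in lines~\ref{line:preprocess-start}--\ref{line:preprocess-end}. Each iteration appends two letters. Each application of $\ddir\ldir$ strictly decreases every helper pixel in lexicographic order, and the significant pixels form a set of size $|S| \leq 2n-n_c$ that is closed under $\delta^1$. Hence no particle can pass through more than $|S|$ distinct significant pixels, and the loop runs $\bigO(n)$ times. This contributes $\bigO(n)$ letters. Afterwards $X$ consists only of corner pixels, in fact only lower-left corner pixels, so $|X| \leq n_c$.

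The second part is the main loop. Each iteration merges at least two elements of $X$, so $|X|$ drops by at least one and there are at most $n_c - 1$ iterations. In each iteration the appended word $w'$ is taken from the BFS shortest-path forest of $A(P)^{\leq 2}$. It is a shortest path from a pair $\{p,q\}$ to some singleton, so it visits no state of $S^{\leq 2}$ twice. Therefore $|w'| \leq |S^{\leq 2}| - 1 \leq \binom{|S|+1}{2} = \bigO(n^2)$. Note that $p$ and $q$ still lie in $S$ after the first two moves, since $S$ is closed under $\delta^1$, so the pair automaton is the right object to use.

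Adding the two parts gives $|w| \in \bigO(n) + \bigO(n_c n^2) = \bigO(n_c n^2)$, using $n_c \geq 1$ (in fact $n_c \geq 4$). The only point that needs care is the per-pair bound on $|w'|$: it relies on the BFS path being simple, which holds automatically for shortest paths, and on its length being bounded by the number of pair states $\bigO(|S|^2)$, not by the running time. Everything else is direct bookkeeping from the proof of \cref{thm:alg-gathering}.
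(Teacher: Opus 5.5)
Your proof is correct and follows the paper's route of bounding the length of the word returned by \cref{alg:full-gather}. The paper simply reads the bound off the $\bigO(n_c n^2)$ running time of \cref{thm:alg-gathering}, while you count letters directly ($\bigO(n)$ from preprocessing plus at most $n_c-1$ shortest pair-merging words of length at most $|S^{\leq 2}|-1 \in \bigO(n^2)$), which has the minor advantage of not relying on Eppstein's per-iteration implementation techniques; one harmless slip is your aside that $X$ afterwards contains only lower-left corner pixels, which is false in general (for an $a \times b$ rectangle with $a,b \geq 2$ the preprocessing loop never runs and $X=S$ consists of all four corner pixels), but you only need that $X$ consists of corner pixels, which already gives $|X| \leq n_c$.
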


The basic idea for a lower bound, depicted in \cref{fig:gathering-lower-bound}, is to place two particles on a directed cycle of length $2m+1$, spaced apart by $m$ positions.
Moving against the direction of the cycle leads to dead ends, except at $p_{2m-1}$.
There, $\ldir\udir$ allows to skip one step on the cycle, which can be used to reduce the distance between the particles by $1$.
To reduce the distance again, the whole cycle has to be traversed, resulting in $\Omega(m)$ iterations of $\Omega(m)$ moves to gather the particles, for a total of $\Omega(m^2) = \Omega(n^2)$.

\begin{figure}[tbh]
  \centering
  \includegraphics{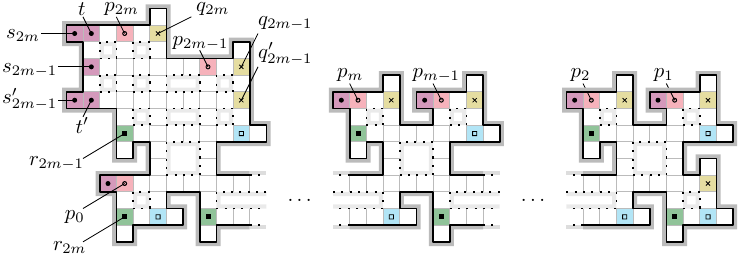}
  \caption{A family of polyominoes $P_m$ with $\sgs[P_m] \in \Omega(n^2)$ in \FT. Important classes of pixels are marked with distinct colors and symbols. Dotted areas indicate optional holes that can turn the simple polyomino into a maze, ensuring the construction works for both special cases.}\label{fig:gathering-lower-bound}
\end{figure}

\begin{theorem}\label{thm:gather-lower-bound}
There is an infinite family $P_m$, $m \in \N$, of gatherable simple polyominoes such that $\sgs[P_m] \in \Omega(n^2)$, where the number of corners $n$ grows with $m$.
\end{theorem}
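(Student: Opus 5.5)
We follow the idea sketched before the statement. We build $P_m$ from $\Theta(m)$ ``staircase'' gadgets that form a directed cycle of corner pixels $p_0,\dots,p_{2m}$. From each $p_i$ exactly one alternating two-move sequence advances the particle to $p_{i+1 \bmod 2m+1}$. All other moves either fix the particle or lead into a dead-end pocket, from which the only exit returns to the cycle without net progress. The exception is the gadget at $p_{2m-1}$, where a shortcut $\ldir\udir$ jumps directly to $p_{2m+1 \bmod 2m+1}=p_0$, skipping one position. Each gadget uses $\bigO(1)$ corners, so $n \in \Theta(m)$. The polyomino is simple, and the optional dotted holes yield the maze version.

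Gatherability gives the upper side and is needed for the statement. We exhibit an explicit gathering sequence. By \cref{lem:gather-sync-pa} it suffices to synchronize $A(P_m)$, and after $\ddir\ldir$ every particle sits on the cycle or in a pocket whose exit leads to the cycle. We then show every pair of cycle states can be merged. Traverse the cycle so that exactly one of the two particles takes the shortcut while the other takes the regular step. This happens because at $p_{2m-1}$ the move $\ldir$ sends a particle sitting at some other position $p_j$ to a position that still follows the normal cycle after the subsequent $\udir$. Repeating this reduces the cyclic distance by one per lap. By the pair-merging criterion, $A(P_m)$ is synchronizing.

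The lower bound works through a potential argument on the pair automaton. Place particles at $p_0$ and $p_m$ (or consider these two pixels of $V$). Define the cyclic distance $d$ between the two particles' cycle positions, extending it to pocket pixels by their return point. Claim 1: every move changes $d$ by at most $1$ in absolute value, and changes it at all only when one particle executes the shortcut at $p_{2m-1}$. Claim 2: between two consecutive distance changes, the particle that uses the shortcut must travel around the entire cycle. Moving ``backwards'' is impossible, since reverse moves end in dead ends that return to the same position. Hence at least $\Omega(m)$ moves are needed between changes. Because the initial $d$ is $m$ in both directions (the cycle length is odd, so neither particle is ahead), $\Omega(m)$ changes are required. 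This totals $\Omega(m^2)=\Omega(n^2)$ moves, and $\sgs[P_m]\ge$ this since gathering $V$ gathers the subset $\{p_0,p_m\}$.

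The main obstacle is the careful gadget design that makes Claims 1 and 2 rigorous. In particular, full tilt moves act on all pixels simultaneously, so we must verify via the transition table of $A(P_m)$ that no move sequence applied while one particle is in a pocket lets the other make accidental progress or shrink $d$ by more than one. I would first fix the gadget explicitly, write $\delta^1$ on the $\bigO(1)$ significant pixels per gadget, and prove Claims 1 and 2 by case analysis over the gadget types: regular stair, pocket, and shortcut.
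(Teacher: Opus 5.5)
\textbf{There is a genuine gap: the gadget you commit to cannot satisfy your Claims~1 and~2.} Your overall strategy matches the paper's: two particles at cyclic distance $m$ on a cycle of odd length $2m+1$, one desynchronizing spot near $p_{2m-1}$, and a full lap between consecutive decreases. But since the theorem is an existence statement, the construction is the substance of the proof, and yours fails.

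Your claims need a single regular-step word $u$ that advances \emph{every} cycle position at once. Suppose the step words at two positions $p_i\neq p_j$ differed. By your own stipulation, exactly one two-move word advances a given position, and every other move fixes it or returns it from a pocket without progress. Then applying $u_i$ advances the particle at $p_i$ but not the one at $p_j$. Either the distance changes away from the shortcut, or, if the odd one out is $p_{2m-1}$, a particle can wait there while the other catches up.

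A single two-move alternating word $u=xy$ cannot close a cycle. Each full tilt move changes only one coordinate, and only in its own direction, so every displacement $p_{i+1}-p_i$ lies in the same closed quadrant. These displacements sum to zero around the cycle, so all of them vanish, and no such cycle exists. Your gatherability argument relies on the same design: the other particle ``takes the regular step'' under $\ldir\udir$ while the shortcut particle lands on $p_0$.

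\textbf{The missing idea is a uniform regular step that uses all four directions.} The paper uses $\rdir\ddir\ldir\udir$, so the cycle can close while all particles on it still move in lockstep. It groups the pixels on this four-move cycle, plus a few extra pixels near $p_{2m-1}$, into classes $V_p,V_q,V_r,V_s$, with further classes for the degree-$1$ dead ends. These classes form a congruence for every move, so both particles always lie in the same class. It assigns a cyclic index with $\operatorname{idx}(a\cdot v)\in\{\operatorname{idx}(a),(\operatorname{idx}(a)+1)\bmod M\}$ for every move $v$. It then checks that two pixels of the same class receive different increments in only one situation: the move $\udir$ takes one particle from $t'$ or $s_{2m-1}$ to $t$, while the other stays at some $s_i$. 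This is how $p_{2m-1}\cdot\ldir\udir=t$ arises. So the shortcut is not a jump $p_{2m-1}\to p_0$ between cycle corners; it is a one-index gain inside the class $V_s$.

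With this structure your potential argument goes through essentially as written: the particle at $t$ must complete a lap before the next decrease. As it stands, though, the deferred ``careful gadget design'' is not a detail to fill in; the design has to be replaced. A minor point: with $M=2m+1$, the two directed distances between $p_0$ and $p_m$ are $m$ and $m+1$, not both $m$.
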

\begin{proof}
Pick $m \in \N$ and let $M=2m+1$.
\Cref{fig:gathering-lower-bound} depicts a polyomino~$P_m$, which consists of one unique gadget to the left followed by $m-1$ repetitions of a simpler gadget, all connected sequentially at the bottom.
Several classes of pixels are marked with distinct colors and symbols; the most important ones are $V_p$, $V_q$, $V_r$, and $V_s$.
Each class $V_x$ contains pixels indexed $x_0$ to $x_{2m}$, with $V_q$ containing the extra pixel $q'_{2m-1}$ and $V_s$ the additional pixels $s'_{2m-1}$, $t$, and $t'$.
These classes are easily verified to form a congruence with respect to $\delta^1(v)$ for all $v \in \D$, e.g., $\delta^1(\udir)(r_i) = p_{(i+1) \bmod M}$ and $\delta^1(\ddir)(s) \in V_s$ for all $s \in V_s$.
There are more congruence classes, in particular the remaining pixels of degree $1$; they are implicitly indexed like their unique neighbors but remain unmarked to avoid clutter.
Note that the union of these classes is closed under the application of moves $\delta^1(v)$.

Particles can move from $p_i$ to $p_{(i+1) \bmod M}$ via the sequence $\rdir\ddir\ldir\udir$.
We define the \emph{cyclic index} $\operatorname{idx}(x_i) = i$ for a pixel $x_i$ with index $i$, and set $\operatorname{idx}(t) = 2m$ and $\operatorname{idx}(t') = 2m-1$, thereby extending $\operatorname{idx}$ to a total function on pixels in the relevant classes.
The \emph{cyclic distance} of two pixels $a,b$ is defined as $\Delta(a,b) = \min\{|\operatorname{idx}(a) - \operatorname{idx}(b)|, M - |\operatorname{idx}(a) - \operatorname{idx}(b)|\}$.
Clearly, a gathering sequence $w \in \D^*$ for a pair of pixels $\{a,b\}$ satisfies $\Delta(a \cdot w, b \cdot w) = 0$.
We now show that any sequence $w \in \D^*$ with $\Delta(p_0 \cdot w, p_m \cdot w) = 0$ has length $|w| \in \Omega(m^2)$.

Consider two particles initially placed at $p_0$ and $p_m$ with cyclic distance $\Delta$, i.e., $\Delta = \Delta(p_0, p_m) = m$ at the outset.
The congruence ensures that both particles are always in the same class.
By construction, all moves $v$ satisfy $\operatorname{idx}(a \cdot v) \in \{\operatorname{idx}(a), (\operatorname{idx}(a)+1) \bmod M\}$ for all pixels $a$.
Therefore, to change $\Delta$, we need two pixels $a,b$ in the same class such that $\operatorname{idx}(a \cdot v) = \operatorname{idx}(a)$ while $\operatorname{idx}(b \cdot v) = (\operatorname{idx}(b)+1) \bmod M$ for some $v \in \D$.
There is exactly one place in $P_m$ where this can happen: One particle must be moved to $t$ via $\udir$, while the other remains at some $s_i$.
This changes $\Delta$ from $\Delta(a,s_i)$ to $\Delta(t,s_i) = \Delta(a,s_i) \pm 1$, where $a \in \{t',s_{2m-1}\}$.
After every decrease of $\Delta$, the particle located at $t$ has to traverse the whole cycle through $p_0,p_1, \ldots, p_{2m-1}$ before $\Delta$ can be decreased again via another move to~$t$.
Therefore, reducing $\Delta$ to $0$ requires $m$ traversals of $\Omega(m)$ moves, for a total of $\Omega(m^2)$.

To see that $P_m$ is gatherable, observe that $C \cdot \rdir\ddir\ldir\udir \subseteq V_p$, for all configurations~$C$.
Afterwards, we can successively merge pairs in $V_p$ by moving along the cycle to $p_{2m-1}$ and reducing their cyclic distance via $p_{2m-1} \cdot \ldir\udir = t$.
Any gathering sequence $w$ for $P_m$ must be a gathering sequence for $\{p_0,p_m\}$ and as such satisfy $\Delta(p_0 \cdot w,p_m \cdot w) = 0$.
Since the construction guarantees $m \in \Omega(n)$, this implies $\sgs[P_m] \in \Omega(n^2)$.
\end{proof}

\section{Simulating binary automata with uniformly controlled particles}\label{sec:simulation}

If \Cerny's conjecture is true, then the $\Omega(n^2)$ lower bound on $\sgs$ is actually tight.

\begin{conjecture}\label{conj:gathering-upper-bound}
For every gatherable polyomino $P$ with $n$ corners, $\sgs \in \bigO(n^2)$.
\end{conjecture}

The geometric structure of polyominoes should make \cref{conj:gathering-upper-bound} easier to prove than \Cerny's conjecture.
Alas, we must dampen this hope, because particles in polyominoes can be used to simulate binary automata, i.e., automata with alphabet $\Sigma = \{0,1\}$.

\begin{figure}[tbh]
  \centering
  \includegraphics{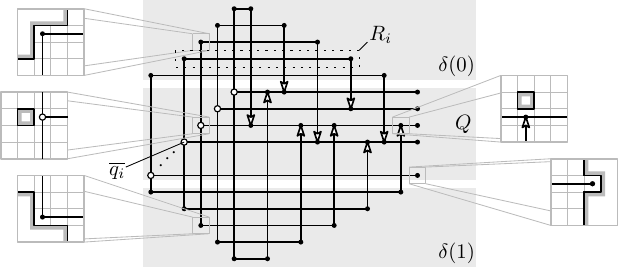}
  \caption{Schematic overview of the construction of $P(A)$ from a binary automaton $A$. Three blocks represent, from top to bottom, transitions for~$0$, states $q_i \in Q$, and transitions for~$1$. We zoom in to show how the schema is realized by the boundary $\partial P(A)$. See \cref{fig:simulation-example} for a specific example.}\label{fig:simulation-scheme}
\end{figure}

To simulate an automaton $A=(Q,\{0,1\},\delta)$, the main idea is to build a polyomino $P(A)$ that has every state $q \in Q$ represented by a pixel $\overline{q}$, and use fixed sequences of moves to embody transitions.
A single particle is placed inside $P(A)$, its presence at $\overline{q}$ signifying that $A$ is in state $q$.
We use the clockwise cycle $\udir\rdir\ddir\ldir$ and the counter-clockwise cycle $\ddir\rdir\udir\ldir$ to express transitions for $0$ and $1$, respectively.
By carefully constructing the boundary of $P(A)$, we guarantee that all minimal sequences of moves between two pixels $\overline{q_i}$ and $\overline{q_j}$ can be decomposed into combinations of $\udir\rdir\ddir\ldir$ and $\ddir\rdir\udir\ldir$, thus ensuring that minimal sequences of moves between pixels representing states are in correspondence with transitions in $A$.

Without loss of generality, let $Q=\{q_0,q_1,\ldots,q_{|Q|-1}\}$, and set $\overline{Q'} = \{\overline{q}: q \in Q'\}$ for sets $Q' \subseteq Q$.
Our construction of $P(A)$ is schematically illustrated in \cref{fig:simulation-scheme}.
It consists of three vertically arranged blocks, the middle one emulating $Q$.
Every state $q_i \in Q$ gets assigned a row of width $6|Q|$ including the representative pixel $\overline{q_i}$.
These rows are stacked, with left-aligned rows of width $6|Q|-1$ in between, and the $\overline{q_i}$ are aligned on an upward diagonal, from $\overline{q_{|Q|-1}}$ in the bottom left up to $\overline{q_0}$.
A $1 \times 1$ hole is placed to the left of every $\overline{q_i}$ except $\overline{q_{|Q|-1}}$, which is next to the outer boundary.

The top and bottom blocks embody the transitions for $0$ and $1$, respectively.
To implement the transitions $\delta(0)(q_i)$ in the top block, rows $R_i$ of width $6i+3$ are stacked from top to bottom and aligned on the left with $\overline{q_i}$.
A $1 \times 1$ hole is placed below the pixel at the intersection of the rightmost column of $R_i$ and the row of $\overline{q_j}$ when $q_j = \delta(0)(q_i)$.
This allows us to simulate a transition for $0$ by the sequence of moves $\udir\rdir\ddir\ldir$.
Similar rows and holes are constructed for $\delta(1)(q_i)$ in the bottom block, except now the rows are arranged from bottom to top and of width $6(|Q|-i)-1$, allowing us to simulate $1$ by $\ddir\rdir\udir\ldir$.
As a minor detail, note that the hole placed to realize $\delta(1)(q_{|Q|-1})$ ends up connected to the outer face, i.e., $\delta(1)(q_{|Q|-1})$ does not actually add another hole.
The full construction requires $6|Q|$ convex corners and $6(3|Q|-2)$ reflex corners, for a total of $n=12(2|Q|-1)$, and involves $N \in \bigO(|Q|^2)$ pixels.
Thus, both $\partial P(A)$ and $G_{P(A)}$ can be computed in polynomial time.

\begin{figure}[tbh]
  \begin{subcaptionblock}{0.16\textwidth}
    \phantomcaption\label{sfig:simulation-example-auto}
    \centering
    \begin{tikzpicture}[node distance=1.5cm,on grid,auto,>={Stealth[round]},shorten >=1pt,
                        every state/.style={minimum size=0cm,inner sep=0.1cm}]
      \node[state,initial above,initial text=] (q_0)                      {$q_0$};
      \node[state,accepting]                   (q_1) [below=of q_0]       {$q_1$};
      \node[state]                             (q_2) [below=of q_1]       {$q_2$};

      \path[->] (q_0) edge              node {$0$}   (q_1)
                      edge [loop right] node {$1$}   ()
                (q_1) edge              node {$0$}   (q_2)
                      edge [loop right] node {$1$}   ()
                (q_2) edge [loop below] node {$0,1$} ();
    \end{tikzpicture}\\
    \captiontext*{}
  \end{subcaptionblock}\hfill
  \begin{subcaptionblock}{0.42\textwidth}
    \phantomcaption\label{sfig:simulation-example-poly}
    \centering
    \includegraphics[page=2]{auto2poly-example}\\
    \captiontext*{}
  \end{subcaptionblock}\hfill
  \begin{subcaptionblock}{0.42\textwidth}
    \phantomcaption\label{sfig:simulation-example-maze}
    \centering
    \includegraphics[page=1]{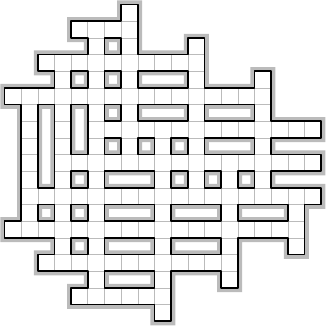}\\
    \captiontext*{}
  \end{subcaptionblock}\hfill
  \caption{\subref{sfig:simulation-example-auto}~The binary automaton $A_0$ that accepts all words containing exactly one $0$. \subref{sfig:simulation-example-poly}~The polyomino $P(A_0)$. Colors and symbols indicate congruence classes of pixels reachable from representatives $\overline{q_i}$. \subref{sfig:simulation-example-maze}~$P(A_0)$ turned into a maze $M(A_0)$ by adding holes and pixels of degree $1$.
}\label{fig:simulation-example}
\end{figure}

For a specific example consider the binary automaton $A_0$ that accepts the words containing exactly one $0$, see \cref{sfig:simulation-example-auto}.
The polyomino $P(A_0)$ resulting from our construction is depicted in \cref{sfig:simulation-example-poly}, where marked pixels show congruence classes of pixels reachable from $\overline{Q}$, in particular those traversed on cycles $\udir\rdir\ddir\ldir$ and $\ddir\rdir\udir\ldir$.

\begin{observation}\label{obs:simulation-isomorphism}
For all $p,q \in Q$, $p \xrightarrow{0} q \Leftrightarrow \overline{p} \xrightarrow{\udir\rdir\ddir\ldir} \overline{q}$ and $p \xrightarrow{1} q \Leftrightarrow \overline{p} \xrightarrow{\ddir\rdir\udir\ldir} \overline{q}$.
\end{observation}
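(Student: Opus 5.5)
Because $\delta^1(v)$ is a function for every $v \in \D$, both sides of each equivalence are determined by a single particle. So it suffices to prove the forward directions: $\overline{p} \cdot \udir\rdir\ddir\ldir = \overline{\delta(0)(p)}$ and $\overline{p} \cdot \ddir\rdir\udir\ldir = \overline{\delta(1)(p)}$ for every $p \in Q$. The reverse directions then follow at once, since $p \mapsto \overline{p}$ is injective. I will trace a single particle starting at $\overline{q_i}$ through the four moves, using the explicit coordinates of the construction in \cref{fig:simulation-scheme}. I fix coordinates so that $\overline{q_i}$ sits at column $c_i$, with the columns increasing by a fixed offset along the upward diagonal as $i$ decreases. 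Each $\overline{q_i}$ has a $1\times 1$ hole (or the outer boundary, for $i=|Q|-1$) immediately to its left, so $\overline{q_i}$ is the leftmost pixel of its row segment.

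For the symbol $0$, I take the four moves in order.
\begin{enumerate}
\item Move $\udir$: the column through $\overline{q_i}$ must run straight up through the middle block and end exactly at the top-block row $R_i$. That row is left-aligned with $\overline{q_i}$, and nothing above its left end continues the column. So the particle stops at the left end of $R_i$. Here I must check that the in-between rows of width $6|Q|-1$ and the rows of the other $\overline{q_j}$ contain no hole in column $c_i$. The only holes in the middle block lie directly left of some $\overline{q_j}$, and hence in column $c_j - 1 \neq c_i$. In the top block, the rows $R_{i'}$ for $i' < i$ lie above $R_i$, and I must confirm that the column above $R_i$'s left end is closed. The stacking order (top to bottom by increasing $i$) with left ends at $c_{i'}$ is what should make this hold.
\item Move $\rdir$: the particle slides to the right end of $R_i$, at column $6i+3$ relative to its left end. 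This requires that $R_i$ is a full row segment, which holds because the transition holes are placed below, not inside, the rows.
\item Move $\ddir$: from the rightmost pixel of $R_i$, the particle descends its column until it hits the hole placed below the pixel in the row of $\overline{q_j}$, where $q_j = \delta(0)(q_i)$. I must check that no earlier obstacle lies in that column. The widths $6i+3$ are pairwise distinct and differ by $6$, so the rightmost columns of distinct $R_i$ are distinct. They are also never equal to any hole column $c_k - 1$, nor to a right end of the width-$(6|Q|-1)$ rows, nor to the boundary; the spacing of $6$ with offset $3$ is what should guarantee this. Hence the particle stops in the row of $\overline{q_j}$.
\item Move $\ldir$: the particle slides left in the row of $\overline{q_j}$. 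That row has only the hole left of $\overline{q_j}$, so the particle stops at $\overline{q_j}$. Here I must also check that no transition hole lies in the row of $\overline{q_j}$ to the right of $\overline{q_j}$. Such holes are placed below that row, not inside it, so they do not split it.
\end{enumerate}

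For the symbol $1$, the argument is identical after reflecting vertically: the bottom block has rows of width $6(|Q|-i)-1$ stacked bottom to top, and the sequence is $\ddir\rdir\udir\ldir$. I also need the remark that for $i=|Q|-1$ the hole merges with the outer face; this does not change the stopping pixel.

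I expect the main obstacle to be the third step, the downward (respectively upward) move. It requires a careful coordinate bookkeeping argument showing that the column of each right end is free of all other obstacles until the intended hole. I would handle it by writing every relevant column modulo $6$: representative columns, hole columns, right ends of the $0$-rows, right ends of the $1$-rows, and right ends of the in-between rows. I would then show that these residue classes, or the resulting concrete values, never collide. Once that is verified, the observation is an immediate trace.
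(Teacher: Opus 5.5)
Your reduction to the forward identities $\overline{p}\cdot\udir\rdir\ddir\ldir=\overline{\delta(0)(p)}$ and $\overline{p}\cdot\ddir\rdir\udir\ldir=\overline{\delta(1)(p)}$ is correct, with the converse following from injectivity of $p\mapsto\overline{p}$. Tracing a single particle is also exactly what the paper relies on: it presents the observation as immediate from the construction and its figures, with no further argument. However, the whole content of the observation is the coordinate bookkeeping, and you defer it (``should make this hold'', ``should guarantee this''). So the facts you do assert about the construction must be right, and several are not. As written, the proposal does not establish the claim.

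\textbf{Holes in the middle block.} The middle block is not free of transition holes. Every $0$-hole sits directly below a state row and every $1$-hole directly above one, so they lie in the in-between rows, or just outside the middle block when the target is $q_{|Q|-1}$ resp.\ $q_0$. Hence your $\udir$ step must also rule out that $c_i$ is the right-end column of some top- or bottom-block row. In addition, $c_j-1\neq c_i$ needs a diagonal offset of at least $2$. Your $\rdir$ step ignores $1$-holes: they are placed above state rows, and the one above $\overline{q_0}$'s row could fall inside a top-block row.

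\textbf{Right ends of the top rows.} Distinct widths do not give distinct right ends, because the left ends $c_i$ shift along the diagonal. $R_i$ ends at $c_i+6i+2$ (not $+6i+3$), and with an offset of $6$ per state all these right ends would coincide. What you need is that $i\mapsto c_i+6i+2$ is strictly increasing and never left of any $c_j$.
\begin{itemize}
\item Monotonicity is what lets the $\ddir$ move from the end of $R_i$ pass through the rows $R_{i+1},\dots,R_{|Q|-1}$ stacked below it. Your third step only inspects the middle block.
\item Your first step likewise never checks that $R_{i+1},\dots,R_{|Q|-1}$ contain column $c_i$.
\item Lying right of every $c_j$ is what makes the final $\ldir$ start right of the hole next to $\overline{q_j}$.
\end{itemize}
These are inequalities between column positions, which residues mod~$6$ cannot deliver.

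\textbf{The $1$-case.} It is not a vertical reflection of the $0$-case. The bottom rows have widths $6(|Q|-i)-1$, not $6i+3$, so their right ends must be computed and checked separately. This includes verifying that no bottom right end coincides with a top right end. Otherwise a $1$-hole could halt some $\ddir$ move early, or a $0$-hole some $\udir$ move.
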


\begin{lemma}\label{lem:simulation-shorten}
For all $p \in Q$, with representative pixels $\overline{p} \in P(A)$, and all $w \in \D^*$~with $\overline{p} \cdot w \in \overline{Q}$, there is $w' \in \{\udir\rdir\ddir\ldir, \ddir\rdir\udir\ldir\}^*$ with $|w'| \leq |w|$ that satisfies $\overline{q} \cdot w' = \overline{q} \cdot w$ for~all~$q \in Q$.
\end{lemma}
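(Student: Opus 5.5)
We prove the lemma by analysing how a single particle starting at a representative pixel can move through $P(A)$. The tool is the congruence structure indicated in \cref{sfig:simulation-example-poly}: the pixels reachable from $\overline{Q}$ are partitioned into classes, and every move $\delta^1(v)$ maps classes to classes. We then show that every such class transformation induced by a word $w$ that returns to $\overline{Q}$ is already induced by a word from $\{\udir\rdir\ddir\ldir,\ddir\rdir\udir\ldir\}^*$ of no greater length. Since moves act uniformly on all particles (\cref{obs:gathering-localization}), it suffices to control the sequence of moves, not the individual particle.

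\textbf{Step 1: normalize $w$.} Moves are idempotent and opposite moves cancel, so $w$ can be shortened without changing $\delta^1(w)$ until it alternates between horizontal and vertical moves. We track a particle starting at some $\overline{q}$. Each $\overline{q_i}$ has a hole immediately to its left (or the outer boundary, for $\overline{q_{|Q|-1}}$), so $\ldir$ fixes every representative. Its row extends to the right boundary, so $\rdir$ sends every representative to the right end of its row; this position is a dead end, and from it only vertical moves make progress.

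\textbf{Step 2: case analysis from $\overline{Q}$.} If the first move is $\ldir$, it is the identity on $\overline{Q}$ and can be deleted. Suppose the first move is $\udir$. A particle at $\overline{q_i}$ rises through its column into the top block and stops in row $R_i$, because the rows $R_j$ are aligned on the left with $\overline{q_j}$ and the columns are diagonally offset. From $R_i$ we check each continuation. The move $\ldir$ leads to a pixel from which the only nontrivial continuation returns to $\overline{Q}$ or repeats a congruence class already seen, so that subword can be cut out. The move $\rdir$ reaches the right end of $R_i$; then $\ddir$ falls to the hole in the row of $\overline{\delta(0)(q_i)}$, and $\ldir$ returns exactly to $\overline{\delta(0)(q_i)}$ by \cref{obs:simulation-isomorphism}. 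Any deviation from this path, such as $\udir$ after $\rdir$, or $\rdir$ or $\udir$ after $\ddir$, either undoes progress (an idempotent or cancelling pattern) or lands in a class from which the shortest way back to $\overline{Q}$ passes through a class already visited. The case where the first move is $\ddir$ is symmetric and uses the bottom block and $\ddir\rdir\udir\ldir$. If the first move is $\rdir$, all particles sit at the right ends of their rows, and a following $\udir$ or $\ddir$ reaches the same classes as the corresponding cycle prefix. Through congruence classes this yields a finite "class automaton". In it, every walk from the class $\overline{Q}$ back to $\overline{Q}$ decomposes into simple excursions corresponding to the two cycles, together with detours that loop to a class already visited and can therefore be removed. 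Removing a detour shortens the word and preserves $\delta^1$ on all of $\overline{Q}$, because the detour returns to the same class uniformly for every particle, so the subword acts as the identity on the relevant pixels.

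\textbf{Step 3: conclude by induction on $|w|$.} We peel off the first excursion, replace it by $\udir\rdir\ddir\ldir$ or $\ddir\rdir\udir\ldir$ with length at most that of the excursion, and recurse on the remainder.

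The main obstacle is Step 2, namely verifying exhaustively that the geometry admits no unintended shortcuts between the blocks or rows. Examples are a particle in row $R_i$ that falls into a different state row under $\ddir$ before reaching the rightmost column, or interference from the holes of other transitions. The widths $6i+3$ and $6(|Q|-i)-1$ and the diagonal placement are presumably chosen to rule out such shortcuts. I would first try to certify the claim by listing the congruence classes explicitly, as the paper's figure does, and checking the four moves on each class.
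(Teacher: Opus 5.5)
Your architecture is the paper's: cut the trajectory of $\overline{p}$ at its returns to $\overline{Q}$, replace each excursion using the congruence of \cref{sfig:simulation-example-poly}, and induct. However, the step that carries the weight rests on an invalid inference. You delete a detour because it ``returns to the same class uniformly for every particle, so the subword acts as the identity on the relevant pixels''. A congruence only guarantees that all particles always lie in a common class; a closed walk in the quotient need not fix any pixel. The two cycles are themselves closed walks at the class $\overline{Q}$, and they act as $\delta(0)$ and $\delta(1)$, not as the identity. For the same reason, arguing with ``class transformations'' is too coarse: every $w$ with $\overline{p}\cdot w\in\overline{Q}$ maps the class $\overline{Q}$ to itself.

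What you need is a labelling fact: strictly inside an excursion, moves preserve the originating state. For every nonempty prefix $u$ of $\udir\rdir\ddir$ or $\ddir\rdir\udir$, the map $\overline{q}\mapsto\overline{q}\cdot u$ must be injective, and moves among these intermediate classes must respect this labelling (e.g.\ $\overline{q}\cdot\udir\rdir\ldir=\overline{q}\cdot\udir$ and $\overline{q}\cdot\udir\rdir\ddir\udir=\overline{q}\cdot\udir\rdir$). Then the label changes only when the particle leaves $\overline{q}\cdot\udir\rdir\ddir$ or $\overline{q}\cdot\ddir\rdir\udir$ horizontally into the target row. Only with this fact does deleting a loop that avoids $\overline{Q}$ preserve $\overline{q}\cdot w$ for every $q$ simultaneously. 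The paper relies on the same fact, which is what the labelled classes in its figure encode, but it does not delete detours. Instead it reads off the effect of the whole excursion from the last position of type $\overline{p}\cdot\udir\rdir\ddir$ or $\overline{p}\cdot\ddir\rdir\udir$ before the return.

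Second, your account of $\overline{q}\cdot\rdir$ is wrong. The state rows have width $6|Q|$ while the left-aligned in-between rows have width $6|Q|-1$, so $\overline{q}\cdot\rdir$ is a protruding pixel that $\udir$ and $\ddir$ leave fixed. The only move that changes it is $\ldir$, which returns it to $\overline{q}$; this is why the paper's proof lists $\overline{p}\cdot\rdir$ among the possible positions just before re-entering $\overline{Q}$. Consequently your $\rdir$ case (``a following $\udir$ or $\ddir$ reaches the same classes as the corresponding cycle prefix'') is false. Moreover, excursions such as $\rdir\ldir$ or $\rdir\udir\ldir$ have length less than $4$ and act as the identity on $\overline{Q}$. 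Your Step~3, which replaces every excursion by a four-move cycle of no greater length, must instead replace these by the empty word.

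Row ends are also reached at the end of genuine transitions, e.g.\ $\overline{q}\cdot\udir\rdir\ddir\rdir\ldir=\overline{q}\cdot\udir\rdir\ddir\ldir$. So the class of the last position before re-entry does not determine the effect of an excursion. You have to look back to the last transition pixel, as the paper does.
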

\begin{proof}
Let $w = w_1 w_2 \ldots w_{|w|}$, $w_i \in \D$, such that $\overline{p} \cdot w \in \overline{Q}$ for some $p \in Q$.
Furthermore, let $p_i = \overline{p} \cdot (w_1 w_2 \ldots w_i)$, and set $p_0 = \overline{p}$.
The proof is by induction on $\lambda = |\{i: {0 \leq i \leq |w|},\allowbreak {p_i \in \overline{Q}}\}|$.
If $\lambda = 1$, then $w = \varepsilon$ and we can choose $w' = \varepsilon$.
Otherwise, split the sequence at the next pixel from $\overline{Q}$ by choosing the largest $\ell$ such that $\{p_1,p_2,\ldots,p_{\ell-1}\} \cap \overline{Q} = \varnothing$ and setting $w_{\text{start}} = w_1 w_2 \ldots w_\ell$ and $w_{\text{rest}} = w_{\ell+1} \ldots w_{|w|}$.
Then, $\ell \geq 4$, as there are no shorter (non-empty) paths between pixels in $\overline{Q}$, and $w_\ell = \ldir$, because that is the only way to reach $\overline{Q}$.
Now, $p_{\ell-1}$ must be one of $\overline{p} \cdot \udir\rdir\ddir$, $\overline{p} \cdot \ddir\rdir\udir$, and $\overline{p} \cdot \rdir$.
Consider the largest index $\ell' \leq \ell$ such that $p_{\ell'} \neq \overline{p} \cdot \rdir$.
Then, $p_{\ell'}$ is either $\overline{p} \cdot \udir\rdir\ddir$ or $\overline{p} \cdot \ddir\rdir\udir$.
In the first case, set $w_{\text{start}}' = \udir\rdir\ddir\ldir$, otherwise $w_{\text{start}}' = \ddir\rdir\udir\ldir$.
The classification of reachable pixels into the congruence classes depicted in \cref{sfig:simulation-example-poly} ensures that $w_{\text{start}}'$ is equivalent to $w_{\text{start}}$ on $\overline{Q}$.
Apply the induction hypothesis to $w_{\text{rest}}$ to get an equivalent $w_{\text{rest}}' \in \{\udir\rdir\ddir\ldir, \ddir\rdir\udir\ldir\}^*$ with $|w_{\text{rest}}'| \leq |w_{\text{rest}}|$.
The concatenation $w' = w_{\text{start}}' w_{\text{rest}}'$ now fulfills the requirements.
\end{proof}

Having established these observations, we can now see how the polyomino $P(A)$ captures properties of the automaton $A$.
In particular, the length of a shortest gathering sequence for a subset of representative pixels $\overline{S}$ (if one exists) is exactly four times as long as a shortest synchronizing word for the subset of states $S$.
Furthermore, a gathering sequence for $P(A)$ exists if and only if one exists for the complete set of representative pixels $\overline{Q}$, and it requires at most five additional moves.

\begin{theorem}\label{thm:simulation-acceptance}
A word $w \in \{0,1\}^*$ is accepted by an automaton $A=(Q,\{0,1\},\delta,q_0,F)$ if and only if there is a sequence of moves $\overline{w} \in \D^*$ such that $\overline{q_0} \cdot \overline{w} \subseteq \overline{F}$ in $P(A)$.
\end{theorem}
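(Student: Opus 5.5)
The plan is to prove both directions by translating between words over $\{0,1\}$ and move sequences. The translation is the homomorphism $h\colon\{0,1\}^*\to\D^*$ with $h(0)=\udir\rdir\ddir\ldir$ and $h(1)=\ddir\rdir\udir\ldir$. Throughout, $\overline{q_0}\cdot\overline{w}$ denotes the configuration $\{\overline{q_0}\}\cdot\overline{w}$, which is a singleton.

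For the forward direction, assume $w=w_1\cdots w_k$ is accepted, so $q_0\cdot w\in F$. Set $\overline{w}=h(w)$. By induction on $k$, using \cref{obs:simulation-isomorphism} once per letter, we obtain $\overline{q_0}\cdot h(w_1\cdots w_i)=\overline{q_0\cdot w_1\cdots w_i}$ for every $i$. In particular $\overline{q_0}\cdot\overline{w}=\{\overline{q_0\cdot w}\}\subseteq\overline{F}$.

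For the converse, assume some $\overline{w}\in\D^*$ satisfies $\overline{q_0}\cdot\overline{w}\subseteq\overline{F}$.
\begin{enumerate}
\item Since $\overline{F}\subseteq\overline{Q}$, we have $\overline{q_0}\cdot\overline{w}\in\overline{Q}$, so \cref{lem:simulation-shorten} applies with $p=q_0$.
\item The lemma yields $w'\in\{\udir\rdir\ddir\ldir,\ddir\rdir\udir\ldir\}^*$ with $\overline{q}\cdot w'=\overline{q}\cdot\overline{w}$ for all $q\in Q$. In particular $\overline{q_0}\cdot w'=\overline{q_0}\cdot\overline{w}\in\overline{F}$.
\item The two blocks $h(0)$ and $h(1)$ have equal length $4$ and differ in their first letter. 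Hence $w'$ factors uniquely into blocks, and we can write $w'=h(u)$ for a unique $u\in\{0,1\}^*$.
\item Applying the same induction via \cref{obs:simulation-isomorphism} as in the forward direction gives $\overline{q_0\cdot u}=\overline{q_0}\cdot h(u)\in\overline{F}$.
\item The map $q\mapsto\overline{q}$ is injective, so $q_0\cdot u\in F$, and therefore $u$ is accepted.
\end{enumerate}

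This shows that the language accepted by $A$ is nonempty exactly when a suitable move sequence exists. The statement, however, pairs a \emph{given} $w$ with some $\overline{w}$, and the converse above produces an accepted word $u$ that may differ from $w$. I will therefore read the theorem as: $w$ is accepted if and only if $\overline{w}:=h(w)$ satisfies $\overline{q_0}\cdot\overline{w}\subseteq\overline{F}$. Under this reading the converse follows directly from \cref{obs:simulation-isomorphism} and injectivity of $q\mapsto\overline{q}$, and \cref{lem:simulation-shorten} only serves to show that arbitrary move sequences reach nothing beyond the words $h(u)$.

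The main obstacle is the converse: an arbitrary move sequence could in principle realize something that is not a word over $\{0,1\}$. \cref{lem:simulation-shorten} rules this out, so nothing new needs to be proved there. The only remaining care is the unique decomposition of $w'$ into blocks and the exact reading of the quantifiers in the statement.
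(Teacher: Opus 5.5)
Your proof is correct and follows the paper's route: \cref{obs:simulation-isomorphism} for the forward direction, and \cref{lem:simulation-shorten} followed by \cref{obs:simulation-isomorphism} for the converse, with the block decomposition and the injectivity of $q\mapsto\overline{q}$ filled in. Your remark on the quantifiers is apt. Like the paper's own argument, your steps 1--5 actually establish that some word is accepted if and only if some move sequence carries $\overline{q_0}$ into $\overline{F}$. That existential reading is the one under which \cref{lem:simulation-shorten} is genuinely needed.
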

\begin{proof}
For the forward direction, apply \cref{obs:simulation-isomorphism}. For the backward direction, first apply \cref{lem:simulation-shorten} to get $\overline{w}' \in \{\udir\rdir\ddir\ldir, \ddir\rdir\udir\ldir\}^*$, then apply \cref{obs:simulation-isomorphism} to $\overline{w}'$.
\end{proof}

\begin{theorem}\label{thm:subset-sync-gather}
A set $S \subseteq Q$ of states of an automaton $A=(Q,\{0,1\},\delta)$ is synchronizing if and only if its set of representatives $\overline{S}$ is gatherable in $P(A)$, in which case $\sgs[\overline{S}] = 4 \rt[S]$.
\end{theorem}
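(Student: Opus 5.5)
We prove both directions, together with the length identity, by translating words over $\{0,1\}$ into move sequences and back. Let $\varphi: \{0,1\}^* \to \D^*$ be the homomorphism with $\varphi(0) = \udir\rdir\ddir\ldir$ and $\varphi(1) = \ddir\rdir\udir\ldir$, so that $|\varphi(u)| = 4|u|$. Applying \cref{obs:simulation-isomorphism} letter by letter and inducting on $|u|$ gives $\overline{q} \cdot \varphi(u) = \overline{q \cdot u}$ for every $q \in Q$ and every $u \in \{0,1\}^*$. By \cref{obs:gathering-localization}, the merging configuration $\overline{S} \cdot \varphi(u)$ is therefore exactly $\overline{S \cdot u}$.

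For the forward direction, suppose $u$ synchronizes $S$. Then $S \cdot u$ is a singleton, so $\overline{S} \cdot \varphi(u)$ is a singleton. Hence $\overline{S}$ is gatherable, and $\sgs[\overline{S}] \leq 4\rt[S]$.

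For the backward direction, let $w$ be a gathering sequence for $\overline{S}$ of minimum length. The obstacle is that $w$ need not end in $\overline{Q}$, whereas \cref{lem:simulation-shorten} requires $\overline{p} \cdot w \in \overline{Q}$. To remove this obstacle, consider $w\ldir$, which is still gathering. The particle trajectories stay within the union of the congruence classes reachable from $\overline{Q}$, as in \cref{sfig:simulation-example-poly}. Inspecting those classes should show that after any prefix, one extra $\ldir$ (or a short completion of the current cycle) returns every particle to $\overline{Q}$.

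A cleaner way to handle this, and the one I would try first, is to truncate instead of extend. Take the longest prefix $w_0$ of $w$ after which the particles lie in $\overline{Q}$. By the congruence, all particles sit in the same class and follow the same pattern, so they all visit $\overline{Q}$ at the same times. For the remaining suffix, which never returns to $\overline{Q}$, argue that the particles can only merge at a last step $\ldir$ into $\overline{Q}$, or not at all. The reason is that the intermediate pixels $\overline{q}\cdot\rdir$, $\overline{q}\cdot\udir\rdir\ddir$, and so on are distinct for distinct $q$, because their rows are distinct, except possibly at transition holes, where merging coincides with the corresponding merge in $\overline{Q}$. This yields a gathering sequence of length at most $|w|$ ending in $\overline{Q}$. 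Verifying this injectivity on the intermediate classes is the main step I expect to require care.

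Once all particles end in $\overline{Q}$, apply \cref{lem:simulation-shorten} with any $p \in S$. It gives $w' \in \{\udir\rdir\ddir\ldir,\ddir\rdir\udir\ldir\}^*$ with $|w'| \leq |w|$ and the same action on all of $\overline{Q}$. Write $w' = \varphi(u)$. Then $\overline{S \cdot u} = \overline{S} \cdot w'$ is a singleton, so $u$ synchronizes $S$. Moreover $4\rt[S] \leq 4|u| = |w'| \leq |w| = \sgs[\overline{S}]$. Combined with the forward direction, this gives $\sgs[\overline{S}] = 4\rt[S]$.
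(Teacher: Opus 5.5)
\paragraph{Gap in the truncation step.} Your skeleton is the paper's. You use \cref{obs:simulation-isomorphism} for $\sgs[\overline{S}]\le 4\rt[S]$, then \cref{lem:simulation-shorten} followed by \cref{obs:simulation-isomorphism} for the reverse inequality. The paper applies \cref{lem:simulation-shorten} directly to a shortest gathering sequence and never checks the hypothesis $\overline{p}\cdot w\in\overline{Q}$. You are right that this needs an argument, but your truncation argument is false.

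Take $p\neq p'$ in $S$ with $\delta(0)(p)=\delta(0)(p')=q$. Because $\overline{p}\cdot\udir\rdir\ddir\ldir=\overline{q}$, the pixel $\overline{p}\cdot\udir\rdir\ddir$ lies in the row segment whose leftmost pixel is $\overline{q}$, and so does $\overline{p'}\cdot\udir\rdir\ddir$. These two pixels share a row, contrary to your ``their rows are distinct''. Consequently $\udir\rdir\ddir\rdir$ gathers $\{\overline{p},\overline{p'}\}$ at $\overline{q}\cdot\rdir\notin\overline{Q}$. This sequence has length $4=4\rt[\{p,p'\}]$, so shortest gathering sequences really can end outside $\overline{Q}$.

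In this example your $w_0$ is $\varepsilon$, and the particles are still apart after it. They merge in the suffix, which never returns to $\overline{Q}$, via a final $\rdir$. So the claim that merging there can only be a final $\ldir$ into $\overline{Q}$ or nothing fails, and truncating to $w_0$ destroys the gathering property. Your other idea, appending $\ldir$, restores the endpoint but costs a move. Then \cref{lem:simulation-shorten} only gives $4\rt[S]\le\sgs[\overline{S}]+1$, not equality.

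\paragraph{Repair: change the last move.} Write a shortest gathering sequence as $w=w_1v$ with $v\in\D$.
\begin{itemize}
\item By minimality $|\overline{S}\cdot w_1|\ge 2$. Since the single move $v$ maps this set to one pixel, all of $\overline{S}\cdot w_1$ lies in one segment in direction $v$.
\item These pixels belong to a single congruence class. From the classes in \cref{sfig:simulation-example-poly}, distinct pixels of one class share a segment only when they lie in a row segment whose leftmost pixel is a representative $\overline{q}$. These are the positions reached via $\udir\rdir\ddir$ or $\ddir\rdir\udir$.
\item Hence $v$ is horizontal, and $w_1\ldir$ gathers $\overline{S}$ at $\overline{q}\in\overline{Q}$ with $|w_1\ldir|=|w|$.
\end{itemize}
Now \cref{lem:simulation-shorten} applies to $w_1\ldir$, and the rest of your argument yields $4\rt[S]\le\sgs[\overline{S}]$. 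With this patch your proof is complete, and it makes explicit a step the paper's proof leaves out.
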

\begin{proof}
For the forward direction, let $w \in \{0,1\}^*$ be a synchronizing word for $S$ of minimum length.
Use \cref{obs:simulation-isomorphism} to build a gathering sequence $\overline{w} \in \{\udir\rdir\ddir\ldir, \ddir\rdir\udir\ldir\}^*$ for $\overline{S}$ satisfying $|\overline{w}| = 4 |w|$.
Assume, for sake of contradiction, that there is a gathering sequence $\overline{w}' \in \D^*$ with $|\overline{w}'| < |\overline{w}|$.
First apply \cref{lem:simulation-shorten} to $\overline{w}'$ and then \cref{obs:simulation-isomorphism} to get a synchronizing word $w' \in \{0,1\}^*$ for $S$ with $|w'| < |w|$, contradicting the assumption that $w$ has minimum length.
The backward direction is very similar.
\end{proof}

\begin{corollary}\label{cor:sync-gather}
An automaton $A=(Q,\{0,1\},\delta)$ is synchronizing if and only if $P(A)$ is gatherable, in which case $4 \rt \leq \sgs[P(A)] \leq 4 \rt + 5$ holds.
\end{corollary}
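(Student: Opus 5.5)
We derive the corollary from \cref{thm:subset-sync-gather} applied to $S=Q$, plus a short preprocessing argument that connects the whole polyomino $P(A)$ to its representative pixels $\overline{Q}$.

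\emph{Synchronizing implies gatherable, with the upper bound.} Suppose $A$ is synchronizing. By \cref{thm:subset-sync-gather}, $\overline{Q}$ is gatherable with $\sgs[\overline{Q}] = 4\rt$. It then suffices to find a prefix $u \in \D^*$ with $|u| \leq 5$ and $V \cdot u \subseteq \overline{Q}$ in $P(A)$; appending a shortest gathering sequence for $\overline{Q}$ gives $\sgs[P(A)] \leq 4\rt + 5$. I would construct $u$ in two steps.
\begin{itemize}
\item First apply two perpendicular moves, for instance $\ddir\ldir$. This sends every pixel to a significant (corner or helper) pixel.
\item Then check against the construction in \cref{fig:simulation-scheme} that a short tail brings every such pixel into $\overline{Q}$. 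The natural candidate is a suffix like $\udir\rdir\ddir\ldir$, or a rotation of it, which ends with $\ldir$, because $\ldir$ is the only move that reaches $\overline{Q}$.
\end{itemize}
The intended total is something like $u=\ldir\udir\rdir\ddir\ldir$ or $\ddir\ldir\udir\rdir\ddir\ldir$, trimmed by idempotence.

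This step is the main obstacle. It needs a careful case analysis of where the corner pixels of the three blocks go:
\begin{itemize}
\item the right ends of the rows $R_i$ in the top and bottom blocks,
\item the rows of width $6|Q|-1$ between the state rows,
\item the leftmost column.
\end{itemize}
I expect the left alignment of the blocks to make the final $\ldir$ land every particle at the left end of some state row. The $1\times1$ holes to the left of each $\overline{q_i}$ should then make these left ends exactly the representatives $\overline{q_i}$, at least after preceding vertical moves that place particles in the rows of $\overline{Q}$. The precise word of length at most $5$ would be chosen by tracing these cases. If I found a pixel class that needs more moves, I would fall back on a slightly larger constant.

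\emph{Gatherable implies synchronizing, with the lower bound.} Suppose $P(A)$ is gatherable. A gathering sequence for $V$ is also a gathering sequence for the subset $\overline{Q} \subseteq V$, by the consequences of \cref{obs:gathering-localization}. So $\overline{Q}$ is gatherable, and \cref{thm:subset-sync-gather} shows that $Q$ is synchronizing with $4\rt = \sgs[\overline{Q}] \leq \sgs[P(A)]$. This gives both the backward direction of the equivalence and the lower bound $4\rt \leq \sgs[P(A)]$.
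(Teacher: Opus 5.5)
Your skeleton is the same as the paper's. You apply \cref{thm:subset-sync-gather} with $S=Q$. You get the backward direction and $4\rt\le\sgs[P(A)]$ from the fact that a gathering sequence for $V$ also gathers $\overline{Q}$; that part is fine. You get the upper bound from a word $u$ with $V\cdot u\subseteq\overline{Q}$ and $|u|\le 5$. The gap is that this last step, which carries the entire content of the constant $5$, is never established. You neither fix a word nor check it, and falling back on ``a slightly larger constant'' would prove a weaker statement than the one claimed.

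Your main plan also cannot reach length $5$. As you note, $u$ must end with $\ldir$. After deleting repeated or opposite consecutive letters, $u$ alternates between horizontal and vertical moves, so a word that begins with a vertical move and ends with $\ldir$ has even length. Your $\ddir\ldir$-prefix-plus-cycle plan therefore gives $\ddir\ldir\udir\rdir\ddir\ldir$ of length $6$. Idempotence does not shorten it, because all its consecutive letters are perpendicular. A length-$5$ word must start with a horizontal move. Your other candidate, $\ldir\udir\rdir\ddir\ldir$, has the right shape but is unverified.

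The paper uses $u=\rdir\ddir\rdir\udir\ldir$: a single $\rdir$ followed by the counter-clockwise cycle $\ddir\rdir\udir\ldir$ that simulates the letter $1$. After the first $\rdir$, every particle lies at the right end of a row segment, so only those pixels need to be traced through the cycle against the construction. Note that state rows are one pixel longer than the in-between rows, so particles in state rows then sit in a column the in-between rows do not reach. With this $u$ in place, the rest of your argument goes through unchanged.
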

\begin{proof}
Observe that $V \cdot \rdir\ddir\rdir\udir\ldir \subseteq \overline{Q}$, i.e., $P(A)$ is gatherable if and only if $\overline{Q}$ is, with $\sgs[\overline{Q}] \leq \sgs[P(A)] \leq \sgs[\overline{Q}] + 5$.
The claim now follows from \cref{thm:subset-sync-gather} for $S=Q$.
\end{proof}

\begin{corollary}\label{cor:weak-cerny}
If $\sgs \in \bigO(n^2)$ for every gatherable polyomino $P$ with $n$ corners, then every synchronizing binary automaton $A=(Q,\{0,1\},\delta)$ has $\rt \in \bigO(|Q|^2)$.
\end{corollary}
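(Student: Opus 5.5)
The argument combines Corollary~\ref{cor:sync-gather} with the corner count of the construction $P(A)$. Let $A=(Q,\{0,1\},\delta)$ be a synchronizing binary automaton. We build the polyomino $P(A)$ as described in this section. By Corollary~\ref{cor:sync-gather}, $P(A)$ is gatherable, and moreover
\[ 4\,\rt \leq \sgs[P(A)]. \]
So $\rt \leq \sgs[P(A)]/4$, and any upper bound on $\sgs[P(A)]$ in terms of the number of corners of $P(A)$ transfers directly to $\rt$.

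Next we bound the number of corners. The construction has exactly $n = 12(2|Q|-1)$ corners, which is linear in $|Q|$. By the hypothesis there is a constant $c$ such that every gatherable polyomino $P$ with $n$ corners satisfies $\sgs \leq c\,n^2$ for all sufficiently large $n$. Applying this to $P(A)$ gives
\[ \rt \leq \tfrac{c}{4}\bigl(12(2|Q|-1)\bigr)^2 \leq 144\,c\,|Q|^2, \]
which is $\bigO(|Q|^2)$.

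The only point needing care is the meaning of the asymptotic hypothesis. It should be read as a uniform bound over all gatherable polyominoes: a single constant $c$ and a threshold $n_0$ such that $\sgs \leq c\,n^2$ whenever $n \geq n_0$. Small automata then contribute only finitely many cases, and $\rt$ is finite for each of them, so they fit under the same $\bigO$ by enlarging the constant.

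No real obstacle remains. The work lies in Corollary~\ref{cor:sync-gather} and in the linear corner count of $P(A)$, and both are already established.
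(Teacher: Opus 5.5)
Your proof is correct and follows the paper's argument: you combine $4\rt \leq \sgs[P(A)]$ from Corollary~\ref{cor:sync-gather} with the linear corner count $n = 12(2|Q|-1)$ of $P(A)$ and the hypothesis. Your explicit constant $144c$ and your remark on reading the hypothesis as a uniform bound (with finitely many small cases absorbed into the constant) fill in details that the paper leaves implicit.
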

\begin{proof}
Let $A$ be a synchronizing binary automaton.
Then $P(A)$ is gatherable, by \cref{cor:sync-gather}, and $\sgs[P(A)] \in \bigO(n^2)$, by this claim's hypothesis.
Now, $\sgs[P(A)] \in \bigO(|Q|^2)$, since $n=12(2|Q|-1)$.
By \cref{cor:sync-gather}, $4 \rt \leq \sgs[P(A)]$, i.e., $\rt \in \bigO(|Q|^2)$.
\end{proof}

While a quadratic upper bound on the reset threshold for synchronizing binary automata would not settle \Cerny's conjecture, it would be remarkable progress.
After decades of research, the existence of such an upper bound is still an open problem, suggesting that a proof for \cref{conj:gathering-upper-bound} may be hard to find.

It is easy to adapt the construction of $P(A)$ to produce a maze $M(A)$, at the cost of possibly requiring $\Omega(|Q|^2)$ corners, by adding more holes as well as pixels of degree $1$ next to corner pixels, as illustrated for $M(A_0)$ in \cref{sfig:simulation-example-maze}.

\begin{corollary}\label{cor:subset-gathering-mazes}
\emph{\subsetGathering} is \PSPACE-complete, even in mazes.
\end{corollary}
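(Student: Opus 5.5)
We prove the two directions separately: membership in \PSPACE{} directly, and hardness by combining the reduction of \cref{thm:subset-sync-gather} with the \PSPACE-completeness of deciding whether a subset of states of an automaton is synchronizing.

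\textbf{Membership.} By \cref{obs:gathering-localization}, $C$ is gatherable if and only if some sequence $w$ makes $|\{\delta^1(w)(p): p \in C\}| = 1$. A nondeterministic algorithm guesses $w$ one move at a time. It stores only the current configuration $C \cdot w'$, which needs $\bigO(N)$ bits. The number of distinct configurations is at most $2^N$, so a counter of $N$ bits bounds the number of guessed moves. This places the problem in \NPSPACE{}, and Savitch's theorem gives \PSPACE{}.

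\textbf{Hardness.} The synchronizing-subset problem is \PSPACE-complete~\cite{natarajan-orienters,automata-rystsov83}, and it stays \PSPACE-complete for binary alphabets. If the cited hardness is only stated for larger alphabets, we binarize in polynomial time. The standard encoding replaces an alphabet of size $k$ by states $(q,i)$, $0 \le i < k$. Letter $0$ applies the $i$-th transition and resets the index to $0$; letter $1$ increments $i$ cyclically. The given subset $S$ is mapped to $S \times \{0\}$, and one checks that synchronizability is preserved.

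Given a binary automaton $A$ and $S \subseteq Q$, we build $P(A)$ in polynomial time and output the configuration $\overline{S}$. By \cref{thm:subset-sync-gather}, $\overline{S}$ is gatherable if and only if $S$ is synchronizing, so this is a polynomial-time reduction.

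\textbf{Mazes.} We replace $P(A)$ by the maze $M(A)$ described in the text, obtained by adding holes and degree-$1$ pixels. We must check that \cref{obs:simulation-isomorphism} and \cref{lem:simulation-shorten} still hold in $M(A)$. Only the equivalence direction is needed, not the length bounds. The added holes thin the regions into corridors, and the added degree-$1$ pixels serve as new corner pixels. We argue that both preserve the endpoints of every maximal move between the relevant segments. Hence the congruence classes of pixels reachable from $\overline{Q}$ are the same as in $P(A)$, up to the extra dead-end pixels, and any move into a dead end is undone by the next perpendicular move.

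This verification is the main obstacle, since it relies on the concrete geometry of $M(A)$ in \cref{sfig:simulation-example-maze}. Once it is done, the shortening argument carries over unchanged, and $M(A)$ has polynomial size because it uses $\bigO(|Q|^2)$ corners and pixels.
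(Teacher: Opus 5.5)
Your proposal matches the paper's proof: membership via a nondeterministic polynomial-space search plus Savitch's theorem, hardness by feeding a binary instance of subset synchronizability into $P(A)$ and applying \cref{thm:subset-sync-gather}, and the same passage to $M(A)$, which the paper likewise asserts without detailed verification. Your $(q,i)$ binarization is a correct, self-contained replacement for the paper's citation of Vorel, because the image of $S \times \{0\}$ always has the form $T \times \{j\}$. One small slip in your maze sketch: a particle at a degree-$1$ dead end is unaffected by perpendicular moves and leaves only by the opposite move along its segment, so it is that move, not the next perpendicular one, that makes such detours removable.
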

\begin{proof}
It is easy to see that a non-deterministic algorithm for \subsetGathering could run in polynomial space, which implies containment in $\NPSPACE = \PSPACE$~\cite{savitch}.

To prove \PSPACE-hardness, we reduce from \textsc{SubsetSynchronizability}: Given an automaton $A=(Q,\Sigma,\delta)$ and a set $S \subseteq Q$, determine if $S$ is synchronizing.
This problem was originally shown \PSPACE-hard by Rystsov~\cite{automata-rystsov83}, and Vorel~\cite{subset-sync-lower-bound} proved it remains so for $\Sigma = \{0,1\}$.
We construct $P(A)$ for a given binary automaton $A=(Q,\{0,1\},\delta)$ in polynomial time.
By \cref{thm:subset-sync-gather}, $S \subseteq Q$ is synchronizing if and only if $\overline{S}$ is gatherable.
This property remains true when $P(A)$ is turned into a maze $M(A)$.
\end{proof}

We want to note that \PSPACE-hardness of \subsetGathering in general polyominoes also follows easily from the \PSPACE-hardness proof for \occupancy from~\cite{tilt-soda20}, but we welcome having an alternative proof in keeping with our automata theme.

\begin{corollary}\label{cor:quick-gathering-inapprox}
Unless $\P = \NP$, there is no polynomial-time approximation algorithm for \emph{\quickGathering} with approximation ratio $n^{1-\varepsilon}$, for any $\varepsilon > 0$.
\end{corollary}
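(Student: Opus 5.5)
The plan is an approximation-preserving reduction from approximating the reset threshold of synchronizing automata. By Gawrychowski and Straszak~\cite{strong-inapproximability}, it is \NP-hard to approximate $\rt$ within $|Q|^{1-\varepsilon}$ for every $\varepsilon>0$. We need this hardness for \emph{binary} synchronizing automata, since $P(A)$ only simulates those. So the first step is to check whether their construction already produces binary automata. If it does not, we apply the standard encoding of an automaton over an alphabet of size $k$ into a binary automaton with $\bigO(k|Q|)$ states. In that encoding, reset thresholds are preserved up to a factor $\Theta(\log k)$ or $\Theta(k)$ together with an additive term. Since $k$ is polynomial in $|Q|$, the blow-up is only polynomial. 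Because $\varepsilon$ is arbitrary, a ratio $|Q'|^{1-\varepsilon'}$ for the new automaton $A'$ still translates into a ratio $|Q|^{1-\varepsilon}$ for the original one, after adjusting $\varepsilon$.

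Given a binary synchronizing automaton $A$, we build $P(A)$ in polynomial time. By \cref{cor:sync-gather}, $P(A)$ is gatherable (so it is a valid \quickGathering instance), and
\[
4\rt \leq \sgs[P(A)] \leq 4\rt+5 .
\]
The polyomino has $n=12(2|Q|-1)\in\Theta(|Q|)$ corners. Suppose an algorithm returns a gathering sequence $w$ with $|w|\leq n^{1-\varepsilon}\sgs[P(A)]$. We convert $w$ back to a synchronizing word. The idea is to consider $V\cdot\rdir\ddir\rdir\udir\ldir\subseteq\overline{Q}$, or more directly to apply \cref{lem:simulation-shorten} from a representative pixel to the suffix of $w$ that starts and ends in $\overline{Q}$. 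This turns $w$ into a word over $\{\udir\rdir\ddir\ldir,\ddir\rdir\udir\ldir\}$ of length at most $|w|$ that gathers $\overline{Q}$, with only $\bigO(1)$ extra moves. By \cref{obs:simulation-isomorphism}, this yields a synchronizing word for $A$ of length at most $|w|/4+\bigO(1)$. Its length is at most
\[
n^{1-\varepsilon}(4\rt+5)/4+\bigO(1)\in\bigO\bigl(|Q|^{1-\varepsilon}\rt\bigr),
\]
using $\rt\geq 1$ for $|Q|>1$.

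The last step absorbs the constants: $\bigO(|Q|^{1-\varepsilon})\leq|Q|^{1-\varepsilon/2}$ for large $|Q|$, and small instances are solved by brute force. This gives a polynomial-time $|Q|^{1-\varepsilon/2}$-approximation of $\rt$, which is impossible unless $\P=\NP$.

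The step I expect to be the main obstacle is the binary-alphabet issue: confirming that the Gawrychowski--Straszak inapproximability survives the restriction to two letters with only polynomial state blow-up. The conversion of an arbitrary gathering sequence back to a synchronizing word is a minor detail in comparison: $w$ need not start in $\overline{Q}$, so we first prepend the five moves and then invoke \cref{lem:simulation-shorten}.
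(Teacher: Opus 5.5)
Your main line is the paper's argument: build $P(A)$, use $4\rt \leq \sgs[P(A)] \leq 4\rt+5$ from \cref{cor:sync-gather} with $n=12(2|Q|-1)$, absorb constants into some $0<\varepsilon'<\varepsilon$, and invoke Gawrychowski and Straszak. The paper applies their $|Q|^{1-\varepsilon}$ bound directly to binary automata and does no alphabet reduction. Two steps in your write-up are wrong as stated.

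\textbf{The binary fallback.} Encoding $k$ letters in binary multiplies the number of states by $\Theta(k)$. If $k=|Q|^{c}$ for a fixed $c>0$, a $|Q'|^{1-\varepsilon'}$-approximation for $A'$ only gives a ratio of order $(|Q|^{1+c})^{1-\varepsilon'}$ for $A$. This is below $|Q|^{1-\varepsilon}$ only when $\varepsilon' > c/(1+c)$. A polynomial blow-up of the instance changes the exponent multiplicatively, so ``adjusting $\varepsilon$'' gives at best hardness of ratio $|Q'|^{1/(1+c)-\delta}$ for binary automata, not $|Q'|^{1-\varepsilon'}$ for every $\varepsilon'>0$. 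It would only work if $k$, and the distortion of $\rt$ under the encoding, were $|Q|^{o(1)}$. So this fallback cannot close the binary case; you must take the cited bound as already covering binary automata, as the paper does.

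\textbf{The conversion of $w$.} Since $\overline{Q}\subseteq V$, $w$ already gathers $\overline{Q}$, so nothing needs fixing at the start. What \cref{lem:simulation-shorten} requires is that $\overline{p}\cdot w$ \emph{ends} in $\overline{Q}$. Prepending $\rdir\ddir\rdir\udir\ldir$ does not ensure this, and a suffix of $w$ is not a gathering sequence for $\overline{Q}$. Append instead. The word $w\,\rdir\ddir\rdir\udir\ldir$ is still a gathering sequence, and since $V\cdot\rdir\ddir\rdir\udir\ldir\subseteq\overline{Q}$ it sends $\overline{Q}$ to a single pixel of $\overline{Q}$. The lemma then gives $w'\in\{\udir\rdir\ddir\ldir,\ddir\rdir\udir\ldir\}^*$ with $|w'|\leq|w|+5$. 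By \cref{obs:simulation-isomorphism} this yields a synchronizing word of length at most $(|w|+5)/4$, and your remaining estimate goes through unchanged.
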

\begin{proof}
Let $A$ be a synchronizing binary automaton.
By \cref{cor:sync-gather}, a polynomial time $n^{1-\varepsilon}$\nobreakdash-approximation for $\sgs[P(A)]$ could be turned into a polynomial-time $|Q|^{1-\varepsilon'}$\nobreakdash-\hspace{0pt}approximation for $\rt$, $0 < \varepsilon' < \varepsilon$.
Gawrychowski and Straszak~\cite{strong-inapproximability} showed that the existence of such an approximation algorithm implies $\P = \NP$.
\end{proof}

Note that, since $N \in \bigO(|Q|^2)$ holds even for $M(A)$, the reasoning used in \cref{cor:quick-gathering-inapprox} also excludes the possibility of a polynomial-time $N^{\sfrac{1}{2}-\varepsilon}$-approximation for mazes (unless~$\P = \NP$).

\begin{corollary}\label{cor:subset-gathering-lower-bound}
There are infinitely many polyominoes $P$ with gatherable configurations $C$ such that $\sgs[C] \in 2^{\Omega(n)}$, where $n$ is the number of corners of~$P$.
\end{corollary}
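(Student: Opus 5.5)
The plan is to transfer a known exponential lower bound for subset synchronization in binary automata to gathering via \cref{thm:subset-sync-gather}. It is known (Vorel~\cite{subset-sync-lower-bound}; earlier constructions go back to Rystsov and Martyugin) that there are infinitely many binary automata $A=(Q,\{0,1\},\delta)$ with a synchronizing subset $S \subseteq Q$ such that $\rt[S] \in 2^{\Omega(|Q|)}$. I would take such a family as the starting point.

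For each automaton $A$ in this family, build the polyomino $P(A)$ as in \cref{sec:simulation}, and take the configuration $C = \overline{S}$. By \cref{thm:subset-sync-gather}, $\overline{S}$ is gatherable in $P(A)$, and $\sgs[\overline{S}] = 4\rt[S] \in 2^{\Omega(|Q|)}$. The construction has $n = 12(2|Q|-1)$ corners, so $|Q| \in \Theta(n)$ and hence $\sgs[C] \in 2^{\Omega(n)}$. Since the family of automata is infinite and $|Q|$ grows without bound, the resulting polyominoes are pairwise distinct, and we obtain infinitely many of them.

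The main obstacle is the input itself: we need a binary family with an exponential subset reset threshold that is also linear in $|Q|$ in the exponent. If the cited lower bound is only available over larger alphabets, I would first binarize. The standard encoding replaces each letter by a binary codeword and adds $\bigO(|Q|\log|\Sigma|)$ auxiliary states. This keeps the exponent at $\Omega(|Q|)$ provided the alphabet is of polynomial size, or better of constant size. Alternatively, I would build a family directly: take $k$ disjoint cycles whose lengths are the first $k$ primes, with a letter $0$ that rotates all cycles. Synchronizing a chosen subset then requires a rotation count divisible by the product of the primes, giving length $2^{\Omega(|Q|/\log|Q|)}$. Since this is weaker than $2^{\Omega(n)}$, I would rely on the cited $2^{\Omega(|Q|)}$ binary bound rather than the prime-cycle construction.
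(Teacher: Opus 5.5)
Your proposal is correct and matches the paper's proof. The paper likewise takes Vorel's infinite family of binary automata with $\rt[S] \geq 2^{|Q|/21}$, builds $P(A)$, and combines \cref{thm:subset-sync-gather} with $n = 12(2|Q|-1) < 24|Q|$ to get $\sgs[\overline{S}] = 4\rt[S] \in 2^{\Omega(n)}$, so your fallback discussion of binarization and prime cycles is unnecessary.
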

\begin{proof}
Vorel~\cite{subset-sync-lower-bound} showed that there are infinitely many binary automata $A=(Q,\{0,1\},\delta)$ and synchronizing sets $S \subseteq Q$ such that $\rt[S] \geq 2^\frac{|Q|}{21}$.
Thus, by \cref{thm:subset-sync-gather}, constructing $P(A)$ yields polyominoes (with $n < 24|Q|$ corners) and gatherable configurations $\overline{S}$ such that $\sgs[\overline{S}] \geq 4 \rt[S] > 2^{\frac{n}{21 \cdot 24}} = 2^{\frac{n}{504}} \in 2^{\Omega(n)}$.
\end{proof}

\section{Approximating gathering sequences in simple polyominoes}\label{sec:optimization}

We first show that \quickGathering{} is \NP-hard even in simple mazes.
To do this, we reduce from \textsc{ShortestCommonSupersequence} where, given an alphabet~$\Sigma$ and a set $S$ of words over $\Sigma$, we want to find a shortest word $\overline{w}$ such that for each $w \in S$, there exists a sequence $i_1 < \dots < i_{|w|}$ with $w_j = \overline{w}_{i_j}$ for each~$j$, i.e., $\overline{w}$ is a supersequence of each word $w \in S$.
This problem is already \NP-hard if $\Sigma = \{0,1\}$~\cite{scs-binary}.
Gerbush and Heeringa~\cite{inapproximability-scs} used a reduction from a generalized problem to prove inapproximability of the reset threshold of synchronizing automata, which inspired our approach.

\begin{figure}[tb]
    \begin{subcaptionblock}{0.22\textwidth}%
        \centering%
        \phantomcaption\label{sfig:scs-reduction-binary-gadget-a}%
        \includegraphics[page=1]{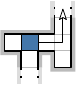}%

        \captiontext*{}%
        \phantomcaption\label{sfig:scs-reduction-binary-gadget-b}%
        \includegraphics[page=2]{scs-reduction-binary}%

        \captiontext*{}%
    \end{subcaptionblock}%
    \hfill%
    \begin{subcaptionblock}{0.75\textwidth}%
        \phantomcaption\label{sfig:sgs-reduction-binary-example}%
        \centering%
        \includegraphics[page=3]{scs-reduction-binary}%

        \captiontext*{}%
    \end{subcaptionblock}%
    \caption{%
        \subref{sfig:scs-reduction-binary-gadget-a}~$0$-gadget, move sequence $\rdir\udir$.
        \subref{sfig:scs-reduction-binary-gadget-b}~$1$-gadget, move sequence $\ldir\udir$.
        \subref{sfig:sgs-reduction-binary-example}~The gathering instance for the words $10$, $001$, $01$, and $111$.
        A shortest supersequence is $10011$ and the corresponding shortest gathering sequence is $\ldir\udir\,\rdir\udir\,\rdir\udir\,\ldir\udir\,\ldir\udir\,\rdir$.
        The particles gather in the red pixel.
    }\label{fig:sgs-reduction-binary}%
\end{figure}

\begin{theorem}\label{thm:gathering-hard-simple-mazes}
    \quickGathering{} is \NP-hard in simple mazes.
\end{theorem}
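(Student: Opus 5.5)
We reduce from \textsc{ShortestCommonSupersequence} over the binary alphabet $\Sigma=\{0,1\}$, which is \NP-hard~\cite{scs-binary}. Given a set $S=\{w^{(1)},\dots,w^{(k)}\}$ of binary words, we build a simple maze $P_S$ in polynomial time. It consists of a horizontal ``spine'' corridor at the bottom, with one vertical ``chain'' attached for each word $w^{(j)}$. Each chain is a vertical stack of gadgets, one per letter, read from the bottom upward: a $0$-gadget is advanced by $\rdir\udir$ and a $1$-gadget by $\ldir\udir$, as in \cref{fig:sgs-reduction-binary}. All chains end at the top, and from their tops a final move funnels everything into a common target pixel. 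The intended correspondence is this: a supersequence $\overline{w}=a_1\dots a_L$ yields the move sequence $\sigma(a_1)\cdots\sigma(a_L)\rdir$ (possibly with a constant-length prefix and suffix), where $\sigma(0)=\rdir\udir$ and $\sigma(1)=\ldir\udir$. Conversely, every gathering sequence of length $2L+c$ yields a supersequence of length at most $L$, so $\sgs[P_S]=2\,\mathrm{scs}(S)+c$ for a fixed constant $c$.

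The gadgets must behave as follows. Consider a particle sitting at the ``current'' pixel of gadget $i$ in some chain. The combination $\rdir\udir$ or $\ldir\udir$ that matches the gadget's letter advances it to the next gadget's entry pixel. The wrong combination only returns it to the same entry pixel: the horizontal move goes into a dead-end pocket, and $\udir$ brings it back into the same row. A single vertical move without the preceding correct horizontal one also leaves the particle in place. Pockets are placed on alternating sides of the chain so that the maze stays thin, hole-free, and has all corner pixels of degree~$1$. The chains are attached to the spine so that, after an initial short sequence such as $\ddir\ldir$ followed by a fixed constant-length prefix, every particle of the full polyomino sits at the entry of the first gadget of some chain. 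This holds because particles in the spine or inside the chains are first pushed down into distinguished pixels. Then by \cref{obs:gathering-localization} it suffices to track one particle per chain. Because the chains are separated, their tops gather only once every chain has been fully traversed.

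The proof then has two directions. The forward direction applies $\sigma$ to a shortest supersequence and checks gadget by gadget that each chain advances exactly at the letters of its subsequence embedding and waits otherwise. For the backward direction, take any gathering sequence, and by idempotence and the rule that opposite moves cancel, assume it alternates horizontal and vertical moves. Split it into horizontal–vertical pairs, and read a pair starting with $\rdir$ as $0$ and one starting with $\ldir$ as $1$. Each chain's particle advances at most one gadget per pair, and only on its matching letter. Since gathering requires every particle to reach the top of its chain, the resulting word is a supersequence of every $w^{(j)}$, and its length is at most half the move count minus a constant.

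The main obstacle will be making the gadget analysis airtight in the backward direction, i.e., showing there are no ``shortcuts''. We must rule out any move pattern, such as two vertical moves in a row or a move of the wrong type at a chain top, that advances some particle by more than one gadget, or lets it skip a gadget, or jump to another chain via the spine. We also need the chains' geometry to be compatible, so that the long horizontal moves in one chain do not push particles from a neighbouring chain into unintended positions. A uniform gadget width and the pockets, together with a congruence-class argument like the one used for \cref{fig:gathering-lower-bound} and \cref{lem:simulation-shorten}, should handle this. A pixel-level case check of the gadget's reachable positions under each of the four moves then completes the argument.
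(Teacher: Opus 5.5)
Your reduction has the same skeleton as the paper's: binary \textsc{ShortestCommonSupersequence}, one chain of $\rdir\udir$/$\ldir\udir$ symbol gadgets per word, and a supersequence of length $k$ corresponding to a gathering sequence of length $2k$ plus a constant. However, the construction you describe breaks exactly the property the theorem is about. You connect the chains twice: at their bottoms through the spine, and at their tops. The top connection is unavoidable, because one final move can send particles from different chains to a common pixel only if they already lie in a common row or column segment. With two or more chains this closes a cycle, so $P_S$ has holes and is not a simple maze, contrary to your ``hole-free'' claim. The spine exists only to support your normalization prefix, and that step cannot work in full tilt anyway. Each move carries a particle along a single segment and every gadget forces a turn, so a particle starting near the top of a long chain needs $\Omega(|w^{(j)}|)$ moves to return to the first gadget. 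Any straight bypass that shortens this descent would equally let $\udir$ skip gadgets. Even granting a prefix, you never show that an optimal sequence must spend those $c$ moves. You would then only get $2\,\mathrm{scs}(S)+1 \le \sgs[P_S] \le 2\,\mathrm{scs}(S)+c$, from which $\mathrm{scs}(S)$ cannot be read off.

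The missing idea is that no normalization is needed. Drop the spine and hang all chains below a single row segment $R$ in which the particles gather, so the polyomino is hole-free. In each chain, the particle farthest from $R$ is the bottleneck. Every other particle of that chain starts in the same or a later gadget, and a horizontal move collapses its gadget row to one of its ends, so it then behaves like that gadget's entry particle. Since a supersequence of $w$ is also a supersequence of every suffix of $w$, under $\sigma(\overline{w})$ such a particle reaches $R$ no later than the far particle, and afterwards merely shuttles between the ends of $R$. Thus $\sigma(\overline{w})\rdir$ gathers all of $V$. Conversely, a gathering sequence for $V$ in particular gathers the $|S|$ far-end particles. A minimal one alternates, starts horizontally, contains no $\ddir$, and must end with a horizontal move, because particles from different chains can meet only at an end of $R$. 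Reading its $\rdir\udir$ and $\ldir\udir$ pairs as $0$ and $1$ yields a supersequence, which gives exactly $\sgs[P_S] = 2\,\mathrm{scs}(S)+1$.
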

\begin{proof}
    Given a set~$S$ of words over $\{0,1\}$, $|S| > 1$, we construct a simple maze as follows:
    We start with a (sufficiently long) row segment~$R$.
    For each word $w \in S$, we create a vertical path gadget consisting of the symbol gadgets shown in \cref{sfig:scs-reduction-binary-gadget-a,sfig:scs-reduction-binary-gadget-b} in the order imposed by~$w$.
    We attach this gadget to the bottom of~$R$, see \cref{sfig:sgs-reduction-binary-example}.
    Clearly, this construction is a maze without holes.
    Also, it is polynomial as it consists of $|S|$ path gadgets, each consisting of at most~$\max_{w \in S} 7|w|$~pixels.
    In each path gadget, we focus on the particle furthest from~$R$ as any gathering sequence for these particles is also a gathering sequence for the~full~polyomino.

    We show that there exists a supersequence for~$S$ of length~$k$ if and only if there exists a gathering sequence of length $2k + 1$ in the corresponding maze.
    Let $\overline{w} \in \{0,1\}^\ast$ be a supersequence for~$S$ with $|\overline{w}| = k$.
    We replace each $0$ in $\overline{w}$ with the moves $\rdir\udir$ and each $1$ with $\ldir\udir$.
    In each of the path gadgets, the particle progresses to the next symbol gadget if and only if the symbol in $\overline{w}$ aligns with the moves for its current symbol gadget.
    Otherwise, it remains in its row.
    Thus, after the $2k$ moves, all particles have progressed to the row segment~$R$.
    To gather them in~$R$'s rightmost pixel, we add one final~$\rdir$ move.

    Conversely, let~$\overline{w}' \in \D^\ast$ be a gathering sequence.
    Without loss of generality, let~$|\overline{w}'|$ be minimal.
    Then, $\overline{w}'$ alternates between horizontal and vertical moves and the first move is horizontal because a vertical move has no effect in the first step.
    Additionally, $\overline{w}'$ does not contain any~$\ddir$-move as particles can only gather in the row segment~$R$ and all shortest sequences to any pixel in~$R$ consist of only the moves $\udir$, $\ldir$, and $\rdir$.
    Finally, since $|S| > 1$, the last move is horizontal as particles from different path gadgets have to gather in the leftmost or rightmost pixel of~$R$.
    Therefore, $\overline{w}'$ is a sequence of length $2k + 1$ for some $k$, consisting of $\rdir\udir$ and $\ldir\udir$ subsequences and ending with a horizontal move.
    We ignore the final move and replace each $\rdir\udir$ with a $0$ and each $\ldir\udir$ with a $1$ to construct a word~$\overline{w}$ of length~$k$.
    As before, a particle progresses to the next symbol gadget exactly if the word in the path gadget contains the corresponding symbol at the position of the particle's gadget.
    Therefore, since~$\overline{w}'$ is a gathering sequence and all particles end up in~$R$, $\overline{w}$ is a supersequence for~$S$.
\end{proof}

Note that the corresponding decision problem is \NP{}-complete as a gathering sequence for a gatherable polyomino can be guessed and verified in polynomial time by \cref{cor:gather-upper-bound}.

The proof of \cref{thm:gathering-hard-simple-mazes} immediately yields \thmW{1}-hardness of \paraGathering.

\begin{corollary}\label{cor:gathering-w1-hard-via-sgs}
\paraGathering is \thmW{1}-hard when parameterized by the size $k$ of the initial configuration, even when the given polyomino $P$ is a gatherable simple maze.
\end{corollary}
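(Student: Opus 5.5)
We give an fpt-reduction from \textsc{ShortestCommonSupersequence} parameterized by the number of words $|S|$, reusing the construction of \cref{thm:gathering-hard-simple-mazes}. For the binary alphabet, this parameterized problem is known to be \thmW{1}-hard; this follows from the \thmW{1}-hardness results for \textsc{ShortestCommonSupersequence} and \textsc{LongestCommonSubsequence} parameterized by the number of strings, with the binary-alphabet case due to Pietrzak. Given $(S,k)$ with $S \subseteq \{0,1\}^*$ and $|S|>1$, we build the simple maze $P$ from the proof of \cref{thm:gathering-hard-simple-mazes}. We then take as the configuration $C$ the set of particles furthest from $R$ in each path gadget, one per word, so $|C| = |S|$, and set $\ell = 2k+1$. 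The parameter of the new instance is therefore exactly the old parameter, and the construction takes polynomial time, as already argued there.

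For correctness, the equivalence proven in \cref{thm:gathering-hard-simple-mazes} was shown precisely for these bottom particles. A supersequence of length $k$ yields a gathering sequence of length $2k+1$ for $C$. Conversely, a shortest gathering sequence for $C$ decodes into a supersequence of length $k$. The argument there reduced the full polyomino to these $|S|$ particles and never used the other particles, so it transfers verbatim to the configuration $C$. To get ``$\le \ell$'' instead of ``$=$'', we note that shorter supersequences and shorter gathering sequences correspond under the same map, and that any gathering sequence of length at most $2k+1$ decodes into a supersequence of length at most $k$.

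To see that $P$ is a gatherable simple maze, observe that the word $\overline{w} = w_1 w_2 \cdots w_{|S|}$ obtained by concatenating all words in $S$ is a supersequence for $S$. Hence $\overline{w}$ translated as in the proof, followed by a final $\rdir$, gathers the bottom particles of every gadget. Every other particle lies above the bottom particle in its own gadget along the vertical path, so it sits at some intermediate symbol gadget. We need to check that such a particle either reaches $R$ or merges with the bottom particle of its gadget. Merging happens because a particle at symbol gadget $j$ in the path follows the same deterministic progression as the bottom particle once the bottom particle has reached gadget $j$. Since the bottom particle passes through every gadget of its path, and particles on the same path in the same gadget coincide after the next move pair, every particle in the gadget either merges with the bottom particle or enters $R$ first. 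Finally, all particles in $R$ are collected by the final $\rdir$.

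The main obstacle is verifying this last step in the gadgets of \cref{sfig:scs-reduction-binary-gadget-a,sfig:scs-reduction-binary-gadget-b}. In particular, we must check that particles at non-representative pixels of a symbol gadget, such as the side pixels, are brought into line after one horizontal-then-$\udir$ pair. If this fails, we would append each word once more, or use the fact that gatherability can be decided via \cref{cor:alg-gathering-check}. The \thmW{1}-hardness itself is unaffected, since it relies only on $C$.
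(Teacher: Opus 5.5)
Your proposal is correct and matches the paper's proof. Like the paper, you restrict the construction of \cref{thm:gathering-hard-simple-mazes} to the particles furthest from $R$, which gives an fpt-reduction from binary \textsc{ShortestCommonSupersequence} parameterized by the number of words (\thmW{1}-hard by Pietrzak), with $|C| = |S|$ and $\ell = 2k+1$.

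Your gatherability worry is settled by the claim in that theorem's proof that every gathering sequence for the bottom particles gathers the whole maze. Combined with your observation that the concatenation of all words is a supersequence, this makes your fallbacks unnecessary. The fallback via \cref{cor:alg-gathering-check} would not have helped anyway, since deciding gatherability does not establish it.
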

\begin{proof}
Consider the reduction in the proof of \cref{thm:gathering-hard-simple-mazes}, but only place particles in the positions furthest from $R$, as depicted in \cref{sfig:sgs-reduction-binary-example}.
This is an fpt-reduction from \textsc{ShortestCommonSupersequence} over a binary alphabet, parameterized by the number of given sequences, to \paraGathering in a gatherable simple maze parameterized by $k$, with matching values for the parameters.
The former has been proven \thmW{1}-hard by Pietrzak~\cite{scs-parameterized}.
\end{proof}

On the other hand, we can approximate the length of a shortest gathering sequence to within a factor of~$4$ in simple mazes.
To achieve this, we first observe that all such instances have exactly one row or column segment where the particles can gather.
Throughout the remainder of this section, we only consider segments consisting of at least two pixels.

\begin{lemma}\label{lem:simple-maze-properties}
    Let~$P$ be a gatherable simple maze with~$N > 1$ pixels.
    Then all row or column segments of~$P$ have at least one end point that is a corner pixel and exactly one segment has two corner pixel end points.
\end{lemma}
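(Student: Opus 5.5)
Since $P$ is a simple maze, its dual graph is a tree: thinness gives no $2\times 2$ squares, and with no holes every cycle in $G_P$ would enclose a hole or a $2\times2$ block. Segments of length at least two are then the edges of a tree-like structure whose junctions are \textsf{T}- or \textbf{+}-shaped. The plan has three steps: show every segment has at least one corner end, show at least one segment has two corner ends, and show at most one does.

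\textbf{Every segment has a corner end.} Suppose a segment $R$ (say, a row) has both end pixels $a,b$ that are not corner pixels. In a maze every corner pixel has degree $1$. So an end pixel of a segment of length at least two that is not a corner pixel must sit in the interior of a perpendicular column segment, forming a \textsf{T}-junction. The column through $a$ continues both up and down from $a$, and likewise at $b$.

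Removing the edges of $R$ splits the tree $G_P$ into components. I claim that no particle can leave $R$ once it sits at an end pixel. A horizontal move sends any particle of $R$ to $a$ or $b$. A vertical move from $a$ or $b$ leaves $R$ but never returns into its interior.

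More concretely, I would use the two branches hanging off $a$, one above and one below. Particles starting in the branch above $a$ and in the branch below $b$ ... This needs care. A cleaner route is the following. Consider the subtrees $T_a^{\uparrow},T_a^{\downarrow}$ attached at $a$ and $T_b^{\uparrow},T_b^{\downarrow}$ attached at $b$. The aim is a congruence argument like the one in \cref{fig:gathering-ft}: pixels are partitioned by which side of the "cut" they lie on. The danger is that this invariant may fail, since moves can carry particles through $R$.

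The alternative I would try first is an extremal argument. Any gathering position lies at an end of some segment, and the final gathering move pushes all particles into an end of a segment $R^*$. For all particles to arrive there, every particle must previously have been in $R^*$. Then the relevant move sequences are analyzed in the tree.

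\textbf{At least one segment has two corner ends.} Gatherability (\cref{lem:gather-sync-pa}) forces the final pixel $g$ to be a corner pixel, since it is the image of a move. Let $R^*$ be the segment traversed by the last move. Just before that move all particles lie in $R^*$. I would argue that the other end of $R^*$ must also be a corner pixel: otherwise particles in the perpendicular segment through that end could never all be brought into $R^*$ simultaneously. Particles in the upper and lower branches there are mirror-separated, as in the congruence of \cref{fig:gathering-ft}.

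\textbf{At most one such segment.} Suppose two distinct segments $R_1,R_2$ each have two corner ends. A segment with two corner ends of degree $1$ is an isolated path, which forces $P=R_1$ unless junctions occur along its interior. Both segments must then carry junctions on their interiors. Consider particles at the two ends of $R_1$ and at the two ends of $R_2$. Each pair can only be merged by a move along its own segment, and I expect a parity or orientation invariant to prevent both merges.

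The main obstacle is formalizing these invariants rigorously in a general tree-shaped maze; I expect this to take the bulk of the work.
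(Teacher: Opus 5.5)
Your proposal is a plan rather than a proof, and in each part the invariant you reach for is not the one that works.

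\textbf{First part.} The key fact is about \emph{entering} $R$, not leaving it. If an end pixel $a$ of a row segment $R$ is not a corner pixel, then in a maze it lies in the interior of its column segment $S_1$. So a particle can come to rest at $a$ only by a horizontal move within $R$, and a particle outside $R$ can never stop at $a$. Since the dual graph is a tree, particles starting in the branches of $S_1$ above or below $a$ therefore stay in these branches forever. The same holds at $b$. The two regions are disjoint and nonempty, so $P$ is not gatherable. Your claim that ``no particle can leave $R$ once it sits at an end pixel'' is false, since a vertical move from $a$ leaves $R$.

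\textbf{Second part.} Your justification that the upper and lower branches at a non-corner end are ``mirror-separated'' is also wrong: a single vertical move merges particles from both branches. Once you have the correct fact, your last-move idea for existence does go through. It is a genuinely different route from the paper's, which follows non-corner end points from segment to segment until a segment repeats, contradicting simplicity. In your route, the segment $R^*$ that contains all particles before the final merging move cannot have a non-corner end. The branches beyond such an end are nonempty and can never enter $R^*$. This also makes your claim that the gathering pixel is a corner pixel unnecessary. That claim was unjustified anyway, since images of moves are in general only segment end points.

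\textbf{Uniqueness.} The planned ``parity or orientation invariant'' preventing both merges cannot exist. The end points of each two-corner segment are merged by a single move along it. For example, $\ldir\udir$ merges the end-point pairs of a horizontal and a vertical such segment simultaneously. The actual obstruction lies \emph{between} the segments. Let $R_1$ and $R_2$ have corner end points $r_1,r_2$ and $s_1,s_2$, respectively. Corner pixels of a maze have degree~$1$, so $\{r_1,r_2\}\cdot v \subseteq \{r_1,r_2\}$ for every $v \in \D$: a perpendicular move does nothing, and a parallel move lands on an end. Likewise $\{s_1,s_2\}\cdot v \subseteq \{s_1,s_2\}$. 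These two invariant sets are disjoint, so particles starting at $r_1$ and $s_1$ can never be gathered.
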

\begin{proof}
    Let~$P$ be a gatherable simple maze and let~$R$ be a segment whose end points~$r_1,r_2 \in R$ are not corner pixels.
    By the definition of mazes and segments, $r_1$ and~$r_2$ have degree~$3$ and belong to distinct segments~$S_1$ and~$S_2$ perpendicular to~$R$.
    In particular,~$r_i$ is not an end point of~$S_i$.
    Since~$P$ is gatherable, there must be a pixel in~$V$ reachable from any pixel in both~$S_1$ and~$S_2$.
    Note that $S_1$ and $S_2$ are only connected via~$R$ and their only connection to~$R$ is via~$r_1$ and~$r_2$ as any other path would imply that~$P$ is not simple.
    However, pixels located at the end points of~$S_i$ cannot reach~$r_i$ to enter~$R$,~a~contradiction.

    Now, let~$P$ be a gatherable simple maze and let~$R$ and~$S$ be two distinct segments whose end points~$r_1,r_2 \in R$ and $s_1,s_2 \in S$ are all corner pixels.
    By the definition of mazes, $r_1$, $r_2$, $s_1$, and $s_2$ have degree~$1$.
    Therefore, $\{r_1,r_2\} \cdot v \subseteq \{r_1,r_2\}$ and $\{s_1,s_2\} \cdot v \subseteq \{s_1,s_2\}$ for any $v \in \D$, i.e., no particle located at an end point of~$R$ or~$S$ can leave its segment.
    Consequently, $P$ is not gatherable, a contradiction.

    Finally, assume~$P$ has no segment whose end points are both corner pixels, i.e., all segments have exactly one corner pixel.
    Consider any segment~$R_1$ and its non-corner end point~$r_1$, which has degree~$3$.
    Pixel~$r_1$ is part of another segment~$R_2$ with a different non\nobreakdash-corner end point~$r_2$.
    Extending the argument, any non\nobreakdash-corner end point~$r_i$ of a segment~$R_i$ is part of another segment~$R_{i+1}$ with non\nobreakdash-corner end point~$r_{i+1}$.
    Since~$P$ has only finitely many segments, there must be $i \neq j$ with $R_i = R_j$.
    This implies that $P$ is not simple,~a~contradiction.
\end{proof}

\begin{theorem}\label{thm:simple-maze-approx}
    Let~$P$ be a gatherable simple maze.
    Then we can compute a gathering sequence for $P$ of length at most $4 \cdot \sgs$ in polynomial time.
\end{theorem}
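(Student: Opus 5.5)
By \cref{lem:simple-maze-properties}, a gatherable simple maze $P$ with $N>1$ has a unique segment $R$ whose two end points are both corner pixels. Every other segment has exactly one corner end point and one degree-$3$ end point, where it attaches to a perpendicular segment. My plan is to view the segments as a rooted tree $T$ with root $R$: the parent of a segment is the segment containing its non-corner end point. Following the chain of non-corner end points eventually reaches $R$, because by simplicity no cycle can occur, exactly as in the last part of the proof of \cref{lem:simple-maze-properties}. Any particle can leave its segment only through the non-corner end point, so it moves monotonically up $T$ toward the root. In particular, all particles must end in $R$, and indeed at one of its two end points.

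Next I would establish a lower bound on $\sgs$. Consider any leaf segment $L$ with corner end point $c$; a particle at $c$ sits at a degree-$1$ pixel. Let the chain of segments from $L$ up to $R$ be $L=S_0,S_1,\dots,S_d=R$. Moving from $S_i$ into $S_{i+1}$ requires two moves: one along $S_i$ toward its non-corner end point, and one along $S_{i+1}$ in the direction that pulls the particle into $S_{i+1}$. Similarly to the gadgets in \cref{thm:gathering-hard-simple-mazes}, each climb step $i$ is therefore labelled by a two-letter pattern $\sigma_i\in\D^2$. Any gathering sequence $w$ must contain, for every leaf, its pattern word $\sigma_0\sigma_1\cdots\sigma_{d-1}$ as a subsequence in which consecutive letters within one step are consecutive in $w$ after removing idempotent repeats, followed by a final horizontal or vertical move along $R$. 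Hence $\sgs \ge \max_{\text{leaf}} 2d_{\text{leaf}}$. This is the key bound: $\sgs$ is at least twice the maximum depth $h$ of $T$, counted in segments.

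For the algorithm, I would process the particles greedily level by level from the deepest level. At each round, every occupied non-root segment should advance one level up. The difficulty is that segments at the same depth may need different pattern pairs. Since alternating moves are used, only two consecutive moves are possible per step, and only a bounded number of patterns exist up to rotation: the approach direction along $S_i$ and the pull direction along $S_{i+1}$. I would show that a fixed block of length at most $8$ (for instance, cycling through both orientations such as $\ldir\udir\rdir\udir$ combined with the vertical analogues) advances every particle by at least one level without ever moving a particle down the tree. Moves only push particles toward segment ends, and a particle at a corner end point stays in its segment or moves to the non-corner end point, then up. Applying $h$ such blocks, plus a final move along $R$, gathers everything in $8h+1 \le 4\sgs+1$ moves.

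The main obstacle is tightening this to exactly $4\cdot\sgs$ and verifying that one block truly advances every level. I expect the fix to be a refinement of the lower bound: count each step as needing its approach and pull moves, and argue that in a simple maze the pulls alternate between horizontal and vertical. This would give a block of length $4$ per level, via $\ldir\udir\rdir\udir$-type blocks or their rotations chosen by the parent orientation, together with the bound $\sgs\ge 2h+1$, which yields $4h+1\le 4\sgs$.
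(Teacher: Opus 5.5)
Your overall architecture (segments form a tree rooted at $R$, a depth-based lower bound, a fixed periodic block of moves) matches the paper's, but the lower bound at its core is false, so there is a genuine gap. The move along $S_{i+1}$ that pulls a particle off the junction pixel of $S_i$ can be exactly the move that carries it to the non-corner end of $S_{i+1}$. So the ``pull'' of step $i$ and the ``approach'' of step $i+1$ coincide, and a particle climbs one level per move.

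A concrete counterexample is the simple maze with $R=\{(0,0),\dots,(4,0)\}$, $S_1=\{(2,0),\dots,(2,-3)\}$ and $S_0=\{(0,-2),(1,-2),(2,-2)\}$. It has $h=2$, yet $\rdir\udir\rdir$ gathers it at $(4,0)$, so $\sgs\le 3<2h$.

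The correct bound, which the paper uses, is $\sgs\ge h+1$. Each move raises a particle by at most one level. Moreover, the deepest particle first enters $R$ at a junction pixel interior to $R$, while gathering must happen at an end of $R$, because the particles at the two degree-$1$ ends of $R$ never leave them. With this bound your length-$8$ blocks only give an $8$-approximation, and your repair relies on the false inequality $\sgs\ge 2h+1$.

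The target inequality $4h+1\le 4\sgs$ does hold, but only via $\sgs\ge h+1$, and only with a length-$4$ block that advances all particles simultaneously. Your candidate $\ldir\udir\rdir\udir$ contains no $\ddir$, so a particle in a vertical segment whose junction is at its lower end never climbs. Every rotation of that block likewise misses one direction, and tailoring blocks per level cannot help, since segments at the same depth may need all four directions.

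You also have to rule out descent. Contrary to your claim that particles leave a segment only through its non-corner end, a particle sitting on a junction pixel is sent back into the child segment by a move along the child.

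The paper handles both issues with the single cycle $\udir\rdir\ddir\ldir$ (tracking particles that start at corner pixels, which suffices). It strictly alternates axes, so once a particle reaches a junction by a move along the child, the next move carries it along the parent and it never returns. It also contains all four directions, so every particle reaches the junction of its current segment within four moves. Hence $(\udir\rdir\ddir\ldir)^h$ followed by one move along $R$ gathers $P$ using $4h+1\le 4\sgs$ moves.
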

\begin{proof}
    Assume~$N > 1$.
    By \cref{lem:simple-maze-properties}, there exists a unique segment~$R$ whose end points are both corner pixels and any other segment has exactly one corner pixel.
    Clearly, all particles have to gather in~$R$ as particles initially located in~$R$'s corner pixels cannot enter any other segment.
    It suffices to focus on particles located at corner pixels as all other particles are also gathered by gathering sequences for corner pixels in simple mazes.
    Now consider a particle~$p$ located in the corner pixel of a horizontal segment~$S_1 \neq R$.
    To reach~$R$, $p$ has to pass a sequence of alternating horizontal and vertical segments~$S_1, \dots, S_\ell$ with~$S_\ell = R$.
    Therefore, the shortest sequence moving~$p$ to~$R$ consists of alternating $\{\ldir,\rdir\}$ and $\{\udir, \ddir\}$ moves.
    Note that~$p$ cannot return to~$S_i$ after progressing to~$S_{i+1}$ and then moving in any direction perpendicular to~$S_i$ because the intersection pixel between any consecutive segments~$S_i,S_{i+1}$ has degree~$3$ by the definition of mazes.

    Let~$d$ be the maximum length of a shortest sequence from any corner pixel to~$R$ and consider the sequence~$\udir\rdir\ddir\ldir$.
    Any particle initially located at a corner pixel progresses to the next segment with at least one of the four moves.
    Thus, the sequence~$(\udir\rdir\ddir\ldir)^d$ is at most four times as long as the shortest sequence moving all particles to~$R$.
    Finally, the last move in any gathering sequence is perpendicular to~$R$ to gather all particles in one of $R$'s~end~points.
\end{proof}

For arbitrary simple polyominoes, there is likely no polynomial-time algorithm approximating $\sgs$ within any constant factor.
To show this, we extend our reduction from \textsc{ShortestCommonSupersequence} to unbounded alphabets.
If the alphabet size is not fixed, the problem is hard to approximate within any constant factor $\alpha \geq 1$ unless $\P = \NP$~\cite{scs-inapproximability}.

We use the same approach to construct a polyomino as in \cref{thm:gathering-hard-simple-mazes}, i.e., we attach~a path gadget consisting of symbol gadgets for each given word to a baseline segment where particles gather.
It suffices to generalize the symbol gadget for larger alphabets, see \cref{sfig:no-constant-approx-simple-symbol}.

\begin{figure}[tb]
    \begin{subcaptionblock}{326pt}%
        \phantomcaption\label{sfig:no-constant-approx-simple-symbol}%
        \centering%
        \includegraphics[page=1]{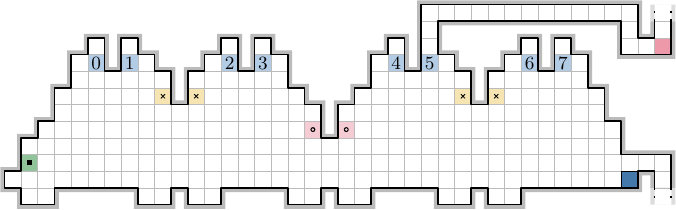}%

        \captiontext*{}%
    \end{subcaptionblock}%
    \hfill%
    \begin{subcaptionblock}{45pt}%
        \phantomcaption\label{sfig:no-constant-approx-simple-start}%
        \centering%
        \includegraphics[page=2]{no-constant-approx-simple}%

        \captiontext*{}%
    \end{subcaptionblock}%
    \caption{%
        \subref{sfig:no-constant-approx-simple-symbol}~The gadget for symbol~$5$ with alphabet $\Sigma = \{0, \dots, 7\}$.
        It takes~$18$ steps to move the bottom particle (or a particle in the entry to its right) to the top red pixel via any of the eight possible exits.
        With~$\udir\ldir$, the particle reaches the green pixel.
        From here, it repeatedly progresses to the next color with one of two possible four-move sequences until it gets to the exit or a dead end.
        If the particle is at the exit, it reaches the top red pixel with~$\udir\rdir\ddir\rdir$.
        Otherwise, this sequence returns the particle to its origin.
        In this example, where the exit is located at symbol~$5$, the full sequence to the top red pixel is $\udir\ldir\,\udir\rdir\udir\ldir\,\ddir\rdir\udir\rdir\,\ddir\ldir\udir\ldir\,\udir\rdir\ddir\rdir$.
        \subref{sfig:no-constant-approx-simple-start}~By attaching this subpath to the start of each path gadget, we ensure that all gathering sequences include the move sequence $\ddir\ldir\udir\ldir\udir\rdir\ddir\rdir$.
        The sequence forces all particles located anywhere in a path gadget to gather at a symbol gadget's entry.
    }\label{fig:no-constant-approx-simple}%
\end{figure}

\begin{theorem}\label{thm:inapproximability-simple}
    Unless $\P = \NP$, there is no polynomial-time approximation algorithm for \quickGathering{} in simple polyominoes with approximation ratio $\alpha \in \bigO(1)$.
\end{theorem}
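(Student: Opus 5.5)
The plan is a gap-preserving reduction from \textsc{ShortestCommonSupersequence} over an unbounded alphabet $\Sigma=\{0,\dots,s-1\}$. That problem admits no polynomial-time constant-factor approximation unless $\P=\NP$~\cite{scs-inapproximability}. Given a set $S$ of words, I build a simple polyomino in the same way as in \cref{thm:gathering-hard-simple-mazes}. There is a long baseline row segment $R$, and for each $w\in S$ a vertical path gadget is attached below $R$. This path gadget is made of the generalized symbol gadgets of \cref{sfig:no-constant-approx-simple-symbol}, one per letter of $w$, stacked in order, with the start subpath of \cref{sfig:no-constant-approx-simple-start} attached at the bottom. The polyomino stays simple because each path gadget is a tree-like, hole-free appendage of $R$. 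Its size is polynomial in $s$ and $\sum_{w}|w|$, since a symbol gadget has $\bigO(s)$ corners.

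\textbf{Forward direction.} Given a supersequence $\overline w$ of length $k$, I encode each letter $a$ as a fixed move sequence $\sigma_a$ of constant length $L=18$ (more precisely, of length $\Theta(1)$ independent of $a$). Following the figure, $\sigma_a$ is $\udir\ldir$, then a chain of four-move blocks that steer the particle through the colour levels toward exit $a$, then $\udir\rdir\ddir\rdir$. A particle sitting at the entry of a gadget for symbol $a$ exits into the next gadget. A particle in a gadget for $b\neq a$ hits a dead end and is returned to its entry by the final $\udir\rdir\ddir\rdir$. I would state this as a lemma analogous to \cref{obs:simulation-isomorphism}, verified from the gadget geometry. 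Prepending the forcing sequence $\ddir\ldir\udir\ldir\udir\rdir\ddir\rdir$ first brings every particle to a symbol gadget entry. Applying $\sigma_{\overline w_1}\cdots\sigma_{\overline w_k}$ then moves every particle into $R$, and a final horizontal move gathers them. So $\sgs\le Lk+c$ for a constant $c$.

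\textbf{Backward direction.} Given any gathering sequence $u$, I would extract a supersequence of length at most $|u|/L'$ for a constant $L'$. This is the main obstacle: in a general simple polyomino, moves need not decompose neatly into whole symbol sequences. I would mimic \cref{lem:simulation-shorten}. Track the particle of each path gadget, cut $u$ at the moments some particle reaches a new gadget entry (or $R$), and show two things. First, any segment of $u$ that advances some particle must contain the distinguishing subsequence of some $\sigma_a$, so it has length at least a constant $L'$. Second, all particles that advance during one segment must be in gadgets of one common symbol $a$, because the exit choice is encoded by the sequence of left/right choices through the colour levels. To get this I would design the gadget so that its reachable pixels form congruence classes, as in \cref{sfig:simulation-example-poly}. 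Particles in non-matching gadgets then either stay in their gadget or are reset to its entry, and they never skip ahead. Reading off the symbol of each advancing segment yields a word $\overline w$ of length at most $|u|/L'$ that is a supersequence of every $w\in S$, since each particle must traverse its gadgets in order to reach $R$.

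\textbf{Conclusion.} With $k^*$ the optimal supersequence length, this gives $L'k^*\le\sgs\le Lk^*+c$. An $\alpha$-approximation for \quickGathering would then give a supersequence of length at most $(\alpha L k^*+\alpha c)/L'$. Since $k^*\ge1$, this is an $\bigO(\alpha)$-approximation for \textsc{ShortestCommonSupersequence}, contradicting~\cite{scs-inapproximability} for a suitably large constant target ratio.
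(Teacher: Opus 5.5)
Your overall route matches the paper's: reduce from \textsc{ShortestCommonSupersequence} over an unbounded alphabet, attach one path of symbol gadgets per word to a baseline $R$, use the start subpath to force all particles to gadget entries, encode each letter as a move block, and read a supersequence off any gathering sequence. The gap is your claim that each letter costs a \emph{constant} number $L=18$ of moves. You pair this with a lower bound of only a constant $L'$ per advancing segment.

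The $18$ in the figure is $2+4\ell+4$ for $|\Sigma|=2^\ell=8$. In general a block is $\udir\ldir$, then $\ell=\log_2|\Sigma|$ binary choices of four moves each, then $\udir\rdir\ddir\rdir$, so $L=4\log_2|\Sigma|+6$. This growth is unavoidable. Over a fixed alphabet of size $s$, \textsc{ShortestCommonSupersequence} has the trivial $s$-approximation $(0\,1\cdots(s-1))^{\max_{w\in S}|w|}$, so the inapproximability you invoke only concerns instances with $|\Sigma|\to\infty$. But there are only $2^{L+1}$ alternating move sequences of length $L$, so a bounded-length encoding cannot keep unboundedly many symbols apart.

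With $L=\Theta(\log|\Sigma|)$, your final bound $(\alpha Lk^*+\alpha c)/L'$ is $\bigO(\alpha k^*)$ only if also $L'=\Omega(\log|\Sigma|)$. If $L'$ is just a constant, you only obtain an $\bigO(\alpha\log|\Sigma|)$-approximate supersequence, which contradicts nothing.

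So the backward direction needs a stronger claim: any two advance events occurring at different times are separated by at least roughly $4\log_2|\Sigma|$ moves. The reason is that a particle must pass all $\ell$ binary-choice levels of a gadget, at four moves each, before it can reach the next entry. This is where your congruence-class argument must do real work. It has to guarantee that when one particle advances, no other particle is left partway up its gadget from where it could advance cheaply afterwards; non-matching particles must be back in the entry class. This matching per-letter lower bound is what the paper's correspondence encodes: supersequences of length $k$ correspond to gathering sequences of length a constant plus $k\cdot\Theta(\ell)$. It is what makes the distortion a constant independent of $|\Sigma|$.
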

\begin{proof}
    Without loss of generality, we assume $|\Sigma| = 2^\ell$ for some $\ell \in \mathbb{N}$, adding dummy symbols if necessary.
    Given a set~$S$ of words over $\Sigma$, $|S| > 1$, we use the same idea to construct a polyomino as in \cref{thm:gathering-hard-simple-mazes} by attaching a path gadget consisting of symbol gadgets for each $w \in S$ to a horizontal baseline~$R$, where the particles will gather.
    Our symbol gadget construction is shown in \cref{sfig:no-constant-approx-simple-symbol} for the symbol~$5$ with alphabet $\Sigma = \{0, \dots, 7\}$.
    At the start of each path gadget, we prepend the construction shown in \cref{sfig:no-constant-approx-simple-start}, which ensures that all particles located anywhere in a path gadget in the initial configuration gather at a symbol gadget's entry point in the gadget's bottom-right corner.

    From each symbol gadget's entry point, all possible exits in the top are reachable within~$2 + 4\ell$ moves.
    After an initial~$\udir\ldir$, there are~$\ell$ sequential choices of two different four-move sequences ($\udir\rdir\udir\ldir$ vs.\ $\ddir\rdir\udir\rdir$ or $\udir\ldir\udir\rdir$ vs.\ $\ddir\ldir\udir\ldir$).
    Intuitively, the $i$th choice determines the $i$th bit of the symbol to be emulated.
    If the particle is at an exit, it progresses to the entry of the next symbol gadget with the move sequence $\udir\rdir\ddir\rdir$.
    Otherwise, it returns to the entry of its current symbol gadget with the same sequence.
    A final~$\udir\ldir$ gathers the particles that progressed through all of their symbol gadgets in~$R$.
    Thus, by the same arguments as in \cref{thm:gathering-hard-simple-mazes}, there exists a supersequence of length~$k$ for~$S$ if and only if there exists a gathering sequence of length~$10 + k \cdot (2 + 4\ell)$ in the corresponding simple polyomino.

    By construction, any gathering sequence for the polyomino corresponds to a supersequence for~$S$.
    Therefore, any $c$-approximate solution for \quickGathering{} in simple polyominoes yields a $\Theta(c)$-approximate solution for \textsc{ShortestCommonSupersequence}.
    As \textsc{ShortestCommonSupersequence} is hard to approximate within any constant factor unless $\P = \NP$~\cite{scs-inapproximability}, the same holds for \quickGathering{} in simple polyominoes.
\end{proof}

\section{Gathering subsets in simple polyominoes}\label{sec:subsets}

Like previous reductions for the full tilt model~\cite{tilt-soda20,tilt-algosensors13}, our \PSPACE-hardness proof for \subsetGathering in \cref{cor:subset-gathering-mazes} requires polyominoes with holes.
We can retain a weaker hardness result for simple polyominoes by reducing from \textsc{IntersectionNonEmptiness} of \emph{tally automata}, i.e., automata over the unary alphabet $\{0\}$: Given a family $A_i = (Q_i, \{0\}, \delta_i, q_i, F_i)$ of tally automata, determine whether $\bigcap_{i} L(A_i) \neq \varnothing$, i.e., whether the $A_i$ accept a common word.
The \NP-hardness of this variant goes back to seminal work by Stockmeyer and Meyer~\cite{intersection-unary-stoc73}.
A newer, independent proof by Morawietz, Rehs, and Weller~\cite{intersection-timecop} emphasizes that this variant remains \NP-hard if the graphs underlying the automata are cycles.
Inspection of the proof reveals that we may also assume the lengths of all cycles to be odd and greater than~$1$, since they are chosen to be products of large primes.
Finally, we may assume all automata to have at least one non-accepting state (or the automaton would accept all words).
We implicitly include these restrictions when talking about tally automata in the following.

\begin{figure}[tbh]
  \begin{subcaptionblock}{0.25\textwidth}
    \phantomcaption\label{sfig:tally-gadgets-even-yes}
    \centering
    \includegraphics[page=1]{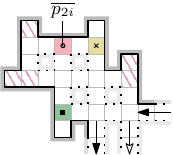}\\
    \captiontext*{}
  \end{subcaptionblock}\hfill
  \begin{subcaptionblock}{0.25\textwidth}
    \phantomcaption\label{sfig:tally-gadgets-odd-yes}
    \centering
    \includegraphics[page=2]{tally-reduction-gadgets}\\
    \captiontext*{}
  \end{subcaptionblock}\hfill
  \begin{subcaptionblock}{0.25\textwidth}
    \phantomcaption\label{sfig:tally-gadgets-even-no}
    \centering
    \includegraphics[page=3]{tally-reduction-gadgets}\\
    \captiontext*{}
  \end{subcaptionblock}\hfill
  \begin{subcaptionblock}{0.25\textwidth}
    \phantomcaption\label{sfig:tally-gadgets-odd-no}
    \centering
    \includegraphics[page=4]{tally-reduction-gadgets}\\
    \captiontext*{}
  \end{subcaptionblock}\medskip

  \begin{subcaptionblock}{0.28\textwidth}
    \phantomcaption\label{sfig:tally-gadgets-start}
    \centering
    \includegraphics[page=5]{tally-reduction-gadgets}\\
    \captiontext*{}
  \end{subcaptionblock}\hfill
  \begin{subcaptionblock}{0.5\textwidth}
    \phantomcaption\label{sfig:tally-gadgets-bridge}
    \centering
    \includegraphics[page=6]{tally-reduction-gadgets}\\
    \captiontext*{}
  \end{subcaptionblock}\hfill
  \begin{subcaptionblock}{0.22\textwidth}
    \phantomcaption\label{sfig:tally-gadgets-end}
    \centering
    \includegraphics[page=7]{tally-reduction-gadgets}\\
    \captiontext*{}
  \end{subcaptionblock}\hfill
  \caption{Gadgets used in the reduction from tally automata intersection. Pixels marked with colors and symbols indicate the cycle $\rdir\ddir\ldir\udir$ between pixels representing states, which is continued between gadgets via the black arrows. The stroked tiling pattern highlights pixels with a safe path to the goal area using the white arrows. Dotted areas specify optional holes that turn the simple polyomino into a maze. \subref{sfig:tally-gadgets-even-yes}~Accepting state with even index. \subref{sfig:tally-gadgets-odd-yes}~Accepting state with odd index. \subref{sfig:tally-gadgets-even-no}~Rejecting state with even index. \subref{sfig:tally-gadgets-odd-no}~Rejecting state with odd index. \subref{sfig:tally-gadgets-start}~Start gadget including $\overline{p_0}$ and access to the goal area to the left. \subref{sfig:tally-gadgets-bridge}~Bridge gadget. \subref{sfig:tally-gadgets-end}~End gadget connecting $\overline{p_0}$ to $\overline{p_1}$.}\label{fig:tally-gadgets}
\end{figure}

Given a family $A_i = (Q_i, \{0\}, \delta_i, q_i, F_i)$, $1 \leq i \leq k$, of $k$ tally automata, we construct a simple polyomino $P(A_1,\ldots,A_k)$ and an initial configuration as follows.

Consider an automaton $A_i$ with $|Q_i| = 2m+1$ for some $m \geq 1$.
Number its states $p_0,\ldots,p_{2m}$ such that $p_0 \notin F_i$ and $p_j \rightarrow p_{(j+1) \bmod |Q_i|}$ for all $0 \leq j < |Q_i|$.
Place a \emph{start gadget} containing a pixel $\overline{p_0}$ representing $p_0$, followed horizontally by $m-1$ \emph{bridge gadgets} and a final \emph{end gadget}; see \cref{sfig:tally-gadgets-start,sfig:tally-gadgets-bridge,sfig:tally-gadgets-end}.
Above the partial construction, $2m$ gadgets are placed, from right to left, for the remaining states $p_j \in \{p_1,\ldots,p_{2m}\}$, depending on two factors: the parity of $j$ and whether $p_j$ is accepting.
Gadgets for the four possible combinations are depicted in \cref{sfig:tally-gadgets-even-yes,sfig:tally-gadgets-odd-yes,sfig:tally-gadgets-even-no,sfig:tally-gadgets-odd-no}, each containing a representative pixel $\overline{p_j}$.

Partial constructions for $A_i$ are stacked vertically and connected to a \emph{goal area} to the left, which is simply a column segment ending in pixels of degree $1$.
For a set $S \subseteq \bigcup_i Q_i$, we define $\overline{S} = \{\overline{q}: q \in S\}$ to be the set of representative pixels for states in $S$.
The initial configuration is $C_0 = \bigcup_i \{\overline{q_i}\}$, i.e., the representative pixels for the initial states.
The basic idea of the reduction is to represent transitions between states by the sequence of moves $\rdir\ddir\ldir\udir$ that relocates particles between representative pixels.
No minimal gathering sequence that starts with $\rdir$ from a configuration $C \subseteq \bigcup_i \overline{Q_i}$, $|C| \geq 2$, will deviate from this prefix, since alternatives lead to dead ends, i.e., pixels of degree $1$.
Starting with $\ldir$ followed by a vertical move, on the other hand, moves all particles from representatives of non-accepting states into a column segment they cannot escape from, whereas particles on representatives of accepting states move to pixels that guarantee a path to the goal area.
\Cref{fig:tally-reduction-example} depicts a small but complete example of the construction for two tally automata.

\begin{figure}[tbh]
  \begin{subcaptionblock}{0.45\textwidth}
    \phantomcaption\label{sfig:tally-automata}
    \centering
    \begin{tikzpicture}[node distance=1cm,on grid,auto,>={Stealth[round]},shorten >=1pt,
                        every state/.style={minimum size=0cm,inner sep=0.05cm}]
      \node[state]                                       (p_6)                {$p_6$};
      \node[state]                                       (p_5) [right=of p_6] {$p_5$};
      \node[state,accepting]                             (p_4) [right=of p_5] {$p_4$};
      \node[state,initial above,initial text=,accepting] (p_3) [right=of p_4] {$p_3$};
      \node[state]                                       (p_2) [right=of p_3] {$p_2$};
      \node[state,accepting]                             (p_1) [right=of p_2] {$p_1$};
      \node[state]                                       (p_0) [below=of p_6] {$p_0$};

      \path[->] (p_1) edge              (p_2)
                (p_2) edge              (p_3)
                (p_3) edge              (p_4)
                (p_4) edge              (p_5)
                (p_5) edge              (p_6)
                (p_6) edge              (p_0)
                (p_0) edge [bend right=20] (p_1);

      \node[state,initial above,initial text=] (q_4) [below=1.5cm of p_0] {$p'_4$};
      \node[state,accepting]                   (q_3) [right=of q_4]       {$p'_3$};
      \node[state,accepting]                   (q_2) [right=of q_3]       {$p'_2$};
      \node[state]                             (q_1) [right=of q_2]       {$p'_1$};
      \node[state]                             (q_0) [below=of q_4]       {$p'_0$};

      \path[->] (q_1) edge              (q_2)
                (q_2) edge              (q_3)
                (q_3) edge              (q_4)
                (q_4) edge              (q_0)
                (q_0) edge [bend right=20] (q_1);
    \end{tikzpicture}\\
    \captiontext*{}
  \end{subcaptionblock}\hfill
  \begin{subcaptionblock}{0.55\textwidth}
    \phantomcaption\label{sfig:tally-polyomino}
    \centering
    \includegraphics{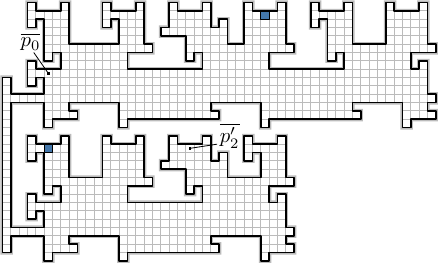}\\
    \captiontext*{}
  \end{subcaptionblock}\hfill
  \caption{Instance of \subsetGathering produced by the reduction from tally automata intersection. \subref{sfig:tally-automata}~Two tally automata, $A_1, A_2$; both accept the word of length 8. \subref{sfig:tally-polyomino}~The simple polyomino $P(A_1,A_2)$; $(\rdir\ddir\ldir\udir)^8\,\ldir\ddir\rdir\ddir\ldir\ddir$ is a gathering sequence for its initial configuration $C_0$.}\label{fig:tally-reduction-example}
\end{figure}

\begin{lemma}\label{lem:tally-accept-packing}
Let $A_i=(Q_i,\{0\},\delta_i,q_i,F_i)$, $1 \leq i \leq k$, be a family of $k$ tally automata and $\ell \geq 0$.
Then, $0^\ell \in \bigcap_i L(A_i)$ if and only if $C_0 \cdot (\rdir\ddir\ldir\udir)^\ell \subseteq \bigcup_i \overline{F_i}$ in $P(A_1,\ldots,A_k)$.
\end{lemma}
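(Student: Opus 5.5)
The plan is to reduce the lemma to a single-step simulation claim, in the spirit of \cref{obs:simulation-isomorphism}: for every $i$ and every state $p_j \in Q_i$, we have $\overline{p_j} \cdot \rdir\ddir\ldir\udir = \overline{p_{(j+1) \bmod |Q_i|}}$ in $P(A_1,\ldots,A_k)$. By \cref{obs:gathering-localization}, the configuration $C_0 \cdot (\rdir\ddir\ldir\udir)^\ell$ is the union of the single-particle images $\overline{q_i} \cdot (\rdir\ddir\ldir\udir)^\ell$. Induction on $\ell$ then gives $C_0 \cdot (\rdir\ddir\ldir\udir)^\ell = \bigcup_i \{\overline{q_i \cdot 0^\ell}\}$.

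Since the representative pixels of distinct states are distinct, we get $\overline{q_i \cdot 0^\ell} \in \overline{F_i}$ if and only if $q_i \cdot 0^\ell \in F_i$. The same holds with $\bigcup_i \overline{F_i}$ in place of $\overline{F_i}$, because the representatives of different automata lie in disjoint stacked parts of the polyomino. Both directions of the equivalence then follow at once, since $0^\ell \in \bigcap_i L(A_i)$ means exactly that $q_i \cdot 0^\ell \in F_i$ for all $i$.

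The real work is verifying the single-step claim gadget by gadget, using the cycle marked in \cref{fig:tally-gadgets}. There are three transition types to check, each by tracing the four maximal moves.

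\begin{itemize}
\item \emph{Consecutive state gadgets} $p_j \to p_{j+1}$ for $1 \le j < 2m$. These gadgets are arranged right to left in the upper row. The parity-dependent shapes alternate, so that the $\rdir\ddir\ldir\udir$ cycle starting at $\overline{p_j}$ leaves the gadget through the black arrow and ends at $\overline{p_{j+1}}$ in the neighbouring gadget. This must hold whether or not $p_j$ is accepting, since the accepting and rejecting variants differ only on the $\ldir$-first side.
\item \emph{The wrap-around} $p_{2m} \to p_0$. Here the cycle must lead down into the start gadget to $\overline{p_0}$.
\item \emph{The transition} $p_0 \to p_1$. Here the cycle starts at $\overline{p_0}$, passes horizontally through the $m-1$ bridge gadgets, and reaches the end gadget, which routes the particle up to $\overline{p_1}$. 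The odd cycle length $2m+1$ is what makes the counts match: there are $2m$ upper gadgets and $m-1$ bridges plus one end gadget, which presumably covers two state columns each.
\end{itemize}

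I also need to check that distinct automata do not interfere. Each move is maximal within a segment, and the parts for different $A_i$ are stacked with walls between them, so their only connection is through the goal area. I must confirm that no segment visited on the $\rdir\ddir\ldir\udir$ cycle extends into the goal column or into another automaton's block.

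The main obstacle is that this verification is entirely figure-based. In particular, I need to confirm that the $\ddir$ and $\udir$ moves in the bridge and end gadgets stop at exactly the intended pixels, and that the leftward move from each state gadget hits the intended wall rather than running into the goal area. I would handle this by stating, for each gadget type, the four intermediate pixels explicitly, as in the congruence-class argument of \cref{thm:gather-lower-bound}. I would then note that gadget boundaries are aligned so that these intermediate pixels depend only on the local gadget and its immediate neighbour.
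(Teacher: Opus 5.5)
Your proposal is correct and follows the paper's proof. The paper likewise rests everything on the fact that $\overline{p_j} \cdot (\rdir\ddir\ldir\udir)^\ell = \overline{p_{(j+\ell) \bmod |Q_i|}}$ holds by construction, and then reads off $q_i \cdot 0^\ell \in F_i \Leftrightarrow \overline{q_i}\cdot(\rdir\ddir\ldir\udir)^\ell \in \overline{F_i}$ particle by particle. Your gadget-by-gadget verification plan spells out what the paper leaves to the figures; your speculative column count for the end gadget is not needed for the argument.
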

\begin{proof}
Observe that, for every $p_j \in Q_i$ and $\ell \geq 0$, $\overline{p_j} \cdot (\rdir\ddir\ldir\udir)^\ell = \overline{p_{(j+\ell) \bmod |Q_i|}}$ holds by construction.
Thus, $q_i \cdot 0^\ell \in F_i$ if and only if $\overline{q_i} \cdot (\rdir\ddir\ldir\udir)^\ell \in \overline{F_i}$.
\end{proof}

The high-level argument of our reduction is that a common word $0^\ell \in \bigcap_i L(A_i)$ leads to a gathering sequence $(\rdir\ddir\ldir\udir)^\ell\,\ldir\ddir\rdir\ddir\ldir\ddir$ for $C_0$ in $P(A_1,\ldots,A_k)$.
For the reverse direction, we argue that if there is any gathering sequence for $C_0$, then it must first move all particles to representatives of accepting states, which can be done with a (shorter) sequence of the form $(\rdir\ddir\ldir\udir)^{\ell'}$, which corresponds to a common word $0^{\ell'}$ for the automata.

\begin{theorem}\label{thm:subset-gathering-simple}
\emph{\subsetGathering} is \NP-hard in simple polyominoes.
\end{theorem}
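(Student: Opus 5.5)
We reduce from \textsc{IntersectionNonEmptiness} for tally automata whose underlying graphs are odd cycles of length greater than $1$ with at least one non-accepting state; as discussed above, this variant is \NP-hard~\cite{intersection-unary-stoc73,intersection-timecop}. Given $A_1,\ldots,A_k$, we build the simple polyomino $P(A_1,\ldots,A_k)$ and the configuration $C_0$ described above. The construction uses $\bigO(\sum_i |Q_i|)$ gadgets of constant size, plus stacking and the goal column, so both $\partial P$ and the dual graph are computable in polynomial time. We then show that $\bigcap_i L(A_i) \neq \varnothing$ holds if and only if $C_0$ is gatherable. If $k=1$, we first add a trivial second copy of an automaton, so we may assume $|C_0| \geq 2$.

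\emph{Forward direction.} Suppose $0^\ell \in \bigcap_i L(A_i)$. By \cref{lem:tally-accept-packing}, $C_0 \cdot (\rdir\ddir\ldir\udir)^\ell \subseteq \bigcup_i \overline{F_i}$. We then check gadget by gadget (\cref{sfig:tally-gadgets-even-yes,sfig:tally-gadgets-odd-yes}) that the suffix $\ldir\ddir\rdir\ddir\ldir\ddir$ sends every representative of an accepting state, of either parity, along the safe path into the goal area. The goal area is a single column segment ending in degree-$1$ pixels, so the final $\ddir$ merges all these particles into its bottom pixel. This gives a gathering sequence, as illustrated in \cref{fig:tally-reduction-example}.

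\emph{Backward direction.} Let $w$ be a shortest gathering sequence for $C_0$. By idempotence and cancellation of opposite moves, $w$ alternates between horizontal and vertical moves. Since particles from different automata lie in distinct stacked components and can only meet in the goal area, every particle must eventually reach it. We keep the invariant that the current configuration $C$ is contained in $\bigcup_i \overline{Q_i}$, and we analyse the next horizontal move, which we may assume is the first move of the remaining sequence.
\begin{itemize}
\item \textbf{Move $\rdir$.} A case check on the gadgets shows that every deviation from $\rdir\ddir\ldir\udir$ puts some particle on a degree-$1$ dead end. From there it can only return to a configuration already reachable by a shorter word, which contradicts minimality. Hence the next four moves are exactly $\rdir\ddir\ldir\udir$, and the invariant is restored with every index shifted by one.
\item \textbf{Move $\ldir$.} After $\ldir$ followed by a vertical move, every particle on a non-accepting representative $\overline{p_j}$ enters a column segment it cannot escape, so it can never reach the goal area. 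By minimality this branch is taken only when all particles sit on accepting representatives.
\end{itemize}
Therefore $w$ has a prefix $(\rdir\ddir\ldir\udir)^{\ell'}$ after which $C \subseteq \bigcup_i \overline{F_i}$, and \cref{lem:tally-accept-packing} yields $0^{\ell'} \in \bigcap_i L(A_i)$.

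The main obstacle is the case analysis in the backward direction. We must argue rigorously that no mixed strategy, such as partially following the cycle and then using some $\ldir$-branch, lets a particle skip or repeat a state index, as the $t$-pixel does in the proof of \cref{thm:gather-lower-bound}. We must also show that the first move need not be $\rdir$ or $\ldir$ at all, since a vertical first move from representatives is either trivial or leads into a dead end. We would handle this by assigning each reachable pixel to a congruence class indexed by state, as marked in \cref{fig:tally-gadgets}, and then verifying that every move either preserves the cyclic index of all particles, advances all of them uniformly through the cycle, or traps the particles at rejecting states. The odd cycle length and the fact that $p_0$ is non-accepting are what make the even and odd gadgets alternate consistently and close the cycle through the start, bridge and end gadgets.
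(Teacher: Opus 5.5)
Your proposal is correct and follows essentially the same route as the paper. It uses the same reduction and the same forward sequence $(\rdir\ddir\ldir\udir)^\ell\,\ldir\ddir\rdir\ddir\ldir\ddir$. The backward argument is also the same: particles advance in unison along their cycles until an $\ldir$-branch, which traps anything on a non-accepting representative, and both proofs defer the gadget-level case check to the figures. The only difference is framing: you use a shortest gathering sequence and minimality, while the paper takes the longest prefix staying in $\overline{Q_i}$ and eliminates dead-end detours.

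Your handling of $k=1$ is a valid refinement the paper leaves implicit, since a singleton $C_0$ is trivially gatherable; duplicating $A_1$ fixes it. One slip: in the $\ldir$ case the contradiction comes from gatherability (a trapped particle never reaches the goal area), not from minimality.
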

\begin{proof}
Given a family $A_1,\ldots,A_k$ of tally automata, we construct $P(A_1,\ldots,A_k)$ with initial configuration $C_0$ as described above.
All gadgets have constant size, and we use $\bigO(\sum_i |Q_i|)$ of them, so the reduction runs in polynomial time.
We now prove that the $A_i$ accept a common word if and only if there is a gathering sequence for $C_0$.
The claim then follows from the \NP-hardness of \textsc{IntersectionNonEmptiness} for this type of automata~\cite{complexity-survey,intersection-timecop,intersection-unary-stoc73}.

First, let $0^\ell \in \bigcap_i L(A_i)$.
Then $p \cdot (\rdir\ddir\ldir\udir)^\ell \in \bigcup_i \overline{F_i}$ for every $p \in C_0$, by \cref{lem:tally-accept-packing}.
By construction, $|\bigcup_i \overline{F_i} \cdot \ldir\ddir\rdir\ddir\ldir\ddir| = 1$, as all particles are gathered at the bottom of the goal area.
Thus, $(\rdir\ddir\ldir\udir)^\ell\,\ldir\ddir\rdir\ddir\ldir\ddir$ is a gathering sequence for $C_0$.

Conversely, assume there is a gathering sequence $w \in \D^*$ for $C_0$.
Without loss of generality, assume $w$ to alternate between horizontal and vertical moves.
Pick $i \leq k$, consider the particle originally placed at $\overline{q_i}$, and let $w'$ be the longest prefix of $w$ such that $\overline{q_i} \cdot w' \in \overline{Q_i}$.
Using $\ldir\udir$ or $\ldir\ddir$ when in a state in $\overline{Q_i}$ does not allow a particle to re-enter $\overline{Q_i}$, so this has not happened during $w'$.
On the other hand, this is the only way to break out of the cycle, and all particles move in unison while on their respective cycles.
Thus, by eliminating superfluous detours into dead ends, $w'$ can be turned into an equivalent $w'' \in \{\rdir\ddir\ldir\udir\}^*$, say $(\rdir\ddir\ldir\udir)^{\ell'}$.
Then, $C_0 \cdot w'' \subseteq \bigcup_i \overline{Q_i}$ and the next moves are either $\ldir\udir$ or $\ldir\ddir$.
If $C_0 \cdot w'' \nsubseteq \bigcup_i \overline{F_i}$, then a particle would get trapped, contradicting the assumption that $w$ is a gathering sequence.
Hence, $0^{\ell'} \in \bigcap_i L(A_i)$, by \cref{lem:tally-accept-packing}.
\end{proof}

\begin{corollary}\label{cor:subset-lower-bound-simple}
There are infinitely many simple polyominoes $P$ with gatherable configurations~$C$ such that $\sgs[C] \in 2^{\Omega(\sqrt{n \log n})}$, where $n$ is the number of corners of~$P$.
\end{corollary}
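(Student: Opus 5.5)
We instantiate the reduction behind \cref{thm:subset-gathering-simple} with a family of tally cycle automata whose shortest common accepted word is long compared to the total number of states. Pick the first $k$ odd primes $r_1 < r_2 < \dots < r_k$; all are odd and greater than~$1$, as the construction requires. For each $i$, let $A_i$ be the cycle automaton on $r_i$ states $p_0,\ldots,p_{r_i-1}$ with $p_0 \notin F_i$, initial state $q_i = p_0$, and unique accepting state $p_{r_i - 1}$. Then $0^\ell \in L(A_i)$ if and only if $\ell \equiv -1 \pmod{r_i}$. By the Chinese remainder theorem, the shortest common word has length $\ell^* = \prod_i r_i - 1$, so the instance is a yes-instance and $C_0$ is gatherable in $P(A_1,\ldots,A_k)$ by \cref{thm:subset-gathering-simple}.

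For the lower bound on $\sgs[C_0]$ we reuse the reverse direction of that proof. Any gathering sequence $w$ has a prefix $w'$ that keeps every particle inside $\bigcup_i \overline{Q_i}$. This prefix can be converted into an equivalent $(\rdir\ddir\ldir\udir)^{\ell'}$ with $0^{\ell'} \in \bigcap_i L(A_i)$. The point to verify is that this conversion does not lengthen the sequence: each cycle step $\overline{p_j} \to \overline{p_{j+1}}$ needs at least one move, since representatives are distinct pixels and only the cycle moves advance particles, while detours into dead ends only add moves. Hence $|w| \geq |w'| \geq \ell' \geq \ell^*$.

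For the parameters, the number of corners satisfies $n \in \Theta(\sum_i r_i)$, because every gadget has constant size and there are $\bigO(r_i)$ gadgets per automaton. By the prime number theorem $r_k \sim k \ln k$, so $\sum_{i \le k} r_i = \Theta(k^2 \log k)$ and $\ln \prod_i r_i = \sum_i \ln r_i = \Theta(k \log k)$. From $n = \Theta(k^2\log k)$ we get $k = \Theta(\sqrt{n/\log n})$ and therefore $k\log k = \Theta(\sqrt{n\log n})$. This gives $\sgs[C_0] \geq \prod_i r_i - 1 = 2^{\Omega(\sqrt{n\log n})}$. Letting $k$ range over $\N$ produces infinitely many simple polyominoes.

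The main obstacle is the length accounting in the middle step. The proof of \cref{thm:subset-gathering-simple} only asserts that $w'$ can be made into a shorter cycle word; it does not spell out that every application of $\rdir\ddir\ldir\udir$ corresponds to at least one move of $w$. If that is delicate, the fallback is to bound $|w| \geq \ell'/c$ for a constant $c$. This weaker bound suffices, since a constant factor in the exponent base disappears in the $\Omega$.
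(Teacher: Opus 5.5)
Your proposal is correct and follows the paper's proof: the same prime-length cycle automata with initial state $p_0$ and single accepting state $p_{r_i-1}$, the Chinese remainder theorem giving $\ell^* = \prod_i r_i - 1$, and the reverse direction of \cref{thm:subset-gathering-simple} to transfer this bound to $\sgs[C_0]$. The differences are cosmetic: the paper cites Del\'{e}glise and Nicolas for an explicit set of primes with sum at most $\lambda$ and product at least $e^{\sqrt{\lambda\log\lambda}}$, where you use prime-number-theorem estimates for the first $k$ odd primes; and the paper asserts the sharper $\sgs[C_0] > 4\ell$, where your bound of at least one move per cycle step already suffices.
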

\begin{proof}
We use an idea previously employed by Lee and Yannakakis~\cite{identification-lower-bound} in the context of the verification of finite state machines.
Choose $k$ distinct prime numbers, $\rho_1,\ldots,\rho_k$, and build $k$ tally automata as follows.
For every $1 \leq i \leq k$, let $Q_i=\{p^i_0,\ldots,p^i_{\rho_i-1}\}$, $\delta_i(0)(p^i_j) = p^i_{(j+1) \bmod \rho_i}$ for all $0 \leq j < \rho_i$, and $A_i = (Q_i,\{0\},\delta_i,p^i_0,\{p^i_{\rho_i - 1}\})$.
By the Chinese remainder theorem, the smallest number $\ell$ such that $0^\ell$ is accepted by all $A_i$ is $\ell = \prod_{i=1}^k \rho_i - 1$.
Construct $P(A_1,\ldots,A_k)$ and $C_0$ as before.
Then, $C_0$ is gatherable with $\sgs[C_0] > 4 \ell = 4 \prod_{i=1}^k \rho_i - 1$ (or \cref{lem:tally-accept-packing} would yield a smaller $\ell$).
For every $\lambda \geq 7387$, it is possible to choose a set $S$ of primes such that $\sum_{\rho \in S} \rho \leq \lambda$ and $\prod_{\rho \in S} \rho \geq e^{\sqrt{\lambda \log \lambda}}$, as calculated by Del{\'{e}}glise and Nicolas~\cite{prime-product}.
By choosing such a set (possibly discarding the even prime, $2$) we get $\sgs[C_0] \geq 2 e^{\sqrt{\lambda \log \lambda}}$.
Since $n \in \bigO(\lambda)$, this implies $\sgs[C_0] \in 2^{\Omega(\sqrt{n \log n})}$.
\end{proof}

Note that the optional holes indicated in \cref{fig:tally-gadgets} can turn the family of simple polyominoes into a family of mazes.
Furthermore, \cref{cor:subset-lower-bound-simple} also implies a lower bound of $2^{\Omega(\sqrt{N \log N})}$, because the construction guarantees $n \in \Theta(N)$.
This gives a slightly stronger result for mazes than the $2^{\Omega(\sqrt{N})}$ bound that could have been derived from the construction in \cref{sec:simulation}.

The reduction from intersection of tally automata yields an alternative to \cref{cor:gathering-w1-hard-via-sgs} for proving \paraGathering{} \thmW{1}-hard with parameter $k$ in a different restricted setting.

\begin{corollary}\label{cor:gathering-w1-hard-via-subsets}
\paraGathering is \thmW{1}-hard when parameterized by the size $k$ of the initial configuration, even when the given polyomino $P$ is simple or a maze and the limit on the length of a gathering sequence is $\ell=\infty$.
\end{corollary}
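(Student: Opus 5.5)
The plan is an fpt-reduction from \textsc{IntersectionNonEmptiness} of tally automata, parameterized by the number $k$ of automata, to \paraGathering with $\ell=\infty$, reusing the construction $P(A_1,\ldots,A_k)$ and the initial configuration $C_0$ from \cref{thm:subset-gathering-simple}. The source problem is \thmW{1}-hard with this parameter. I would take this from the literature on parameterized intersection non-emptiness, in particular the reduction of Morawietz, Rehs, and Weller~\cite{intersection-timecop}. That reduction comes from a \thmW{1}-hard problem such as \textsc{MulticoloredClique}, and the number of automata it produces is bounded by a function of the clique size. I also need it to satisfy the restrictions already assumed in \cref{sec:subsets}: underlying graphs are cycles of odd length greater than $1$, and each automaton has a non-accepting state. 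Checking this is the step I expect to be the main obstacle.

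The reduction itself is unchanged. Given $A_1,\ldots,A_k$, we output the polyomino $P(A_1,\ldots,A_k)$ with configuration $C_0=\bigcup_i\{\overline{q_i}\}$ and length bound $\ell=\infty$. The representatives $\overline{q_i}$ lie in distinct stacked partial constructions, so $|C_0|=k$ and the new parameter equals the old one. The construction takes polynomial time because it uses $\bigO(\sum_i|Q_i|)$ constant-size gadgets.

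Correctness is exactly the equivalence proved in \cref{thm:subset-gathering-simple}: the automata accept a common word if and only if $C_0$ is gatherable, i.e., has a gathering sequence of some finite length. That is precisely the question \paraGathering asks when $\ell=\infty$. For the simple case we use $P(A_1,\ldots,A_k)$ directly. For mazes we fill in the optional dotted holes from \cref{fig:tally-gadgets}, and I would argue that doing so changes none of the transitions used in the proof. In particular, the property $\overline{p_j}\cdot(\rdir\ddir\ldir\udir)=\overline{p_{(j+1)\bmod|Q_i|}}$ must still hold, accepting representatives must still have a safe path to the goal area via $\ldir\ddir\rdir\ddir\ldir\ddir$, and non-accepting representatives must still be trapped after $\ldir\udir$ or $\ldir\ddir$. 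With these three properties the argument of \cref{thm:subset-gathering-simple} and \cref{lem:tally-accept-packing} goes through verbatim, and together with the \thmW{1}-hardness of the source problem this proves the corollary.
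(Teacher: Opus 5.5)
Your proposal is correct and matches the paper's proof: the paper likewise notes that the construction $P(A_1,\ldots,A_k)$ with $C_0$ from \cref{thm:subset-gathering-simple} maps the number of tally automata to the number of particles. This gives an fpt-reduction from \textsc{IntersectionNonEmptiness} of tally automata, which is \thmW{1}-hard by~\cite{intersection-timecop}, to \paraGathering with $\ell=\infty$, and the optional holes yield the maze case. The restrictions you flag as the main thing to check (odd cycles of length greater than $1$, each with a non-accepting state) are exactly what the paper assumes in \cref{sec:subsets}, based on its inspection of the same reference.
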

\begin{proof}
Observe that the reduction in \cref{thm:subset-gathering-simple} maps the number of automata to the number of particles $k$, i.e., it is an fpt-reduction from \textsc{IntersectionNonEmptiness} of tally automata parameterized by the number of automata, which is \thmW{1}-hard~\cite{intersection-timecop}.
\end{proof}

\section{Consequences for related problems}\label{sec:consequences}

Our techniques lead to a number of corollaries interesting in their own right, including a novel approach to gathering in \SSt and hardness results for related problems.

\subsection{The single step model}\label{sec:single-step}

We focused on the full tilt model up till now, but our work also has implications for the single step model.
First, we catch up on providing a formal definition of a move to the left in the single step model; other directions are analogous.

\begin{description}
  \item[Single step (blocking):] $p \in \delta_{\SSt}^\times(\ldir)(C)$ if and only if $p$ and all pixels left of $p$ in its row segment are occupied in $C$, or $p + (1,0)^\Tpose \in C$.
  \item[Single step (merging):] $p \in \delta_{\SSt}^\cup(\ldir)(C)$ if and only if $p \in C$ and $p$ is the leftmost pixel in its row segment, or $p + (1,0)^\Tpose \in C$.
\end{description}

Like \FT, \SSt maps singleton configurations to singleton configurations, resulting in a function $\delta_{\SSt}^1: \D \to (V \to V)$.
Gathering sequences in \SSt correspond precisely to synchronizing words in the automaton $(V,\D,\delta_{\SSt}^1)$.
To use this approach, it remains to show how to compute $\delta^1_{\SSt}$ from a representation of a given polyomino $P$.
This is readily accomplished by modifying the dual graph $G_P$: Replace every edge $\{p,q\} \in E$ with two directed, labeled edges $p \xrightarrow{w} q$ and $q \xrightarrow{\overline{w}} p$, where $w, \overline{w} \in \D$ are the directions corresponding to the relative positioning of $p$ and $q$, e.g., $w=\rdir$ and $\overline{w}=\ldir$ if $q = p + (1,0)^\Tpose$. Finally, add self-loops to pixels at the boundary, e.g., $p \xrightarrow{\ldir} p$ if $\{p-(1,0)^\Tpose,p\} \notin E$.
Observing the combinatorial structure of automata $(V,\D,\delta^1_{\SSt})$ and applying a known result from automata theory allows us to immediately derive a new upper bound on the length of gathering sequences in \SSt.

\begin{corollary}\label{cor:single-step-quadratic}
For every polyomino $P$ with $N$ pixels, $\sgs \in \bigO(N^2)$ in \SSt.
\end{corollary}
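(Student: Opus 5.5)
The plan is to use the correspondence stated just above: gathering sequences for $P$ in \SSt are exactly the synchronizing words of the automaton $A_{\SSt}(P)=(V,\D,\delta^1_{\SSt})$. It therefore suffices to show that every synchronizing automaton of this form has reset threshold $\bigO(|V|^2)=\bigO(N^2)$. The paper says a ``known result from automata theory'' applies to the combinatorial structure of these automata. The candidate I would invoke is the class of \emph{Eulerian} automata, for which Kari~\cite{sync-eulerian} proved the bound $\rt \leq |Q|^2-3|Q|+3$, well within $\bigO(|Q|^2)$.

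The key step is to check that $A_{\SSt}(P)$ is Eulerian. This means that in its underlying multigraph, every state has in-degree equal to out-degree, namely $|\D|=4$; strong connectivity is also needed.

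\begin{itemize}
\item \emph{Out-degree.} Out-degree $4$ is immediate because the automaton is complete and deterministic.
\item \emph{In-degree, pairing argument.} Fix a pixel $p$ and a direction, say $\ldir$. There are two cases.
  \begin{itemize}
  \item If $p$ has a right neighbour $q=p+(1,0)^\Tpose\in V$, then $q\xrightarrow{\ldir}p$ is an incoming $\ldir$-edge.
  \item Otherwise, $p$ is at the right boundary, so $p$ has a self-loop under $\rdir$. I charge this self-loop to the missing $\ldir$-predecessor.
  \end{itemize}
  Symmetrically, the $\rdir$-predecessor of $p$ is either its left neighbour or, if that is absent, its own $\ldir$ self-loop.
\item \emph{In-degree, counting.} For the axis pair $\{\ldir,\rdir\}$, the number of incoming edges labeled $\ldir$ or $\rdir$ is therefore exactly $2$: each side of $p$ contributes either a neighbour edge or a self-loop. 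The same holds for $\{\udir,\ddir\}$, so the in-degree of every state is $4$.
\item \emph{Consistency check.} Each letter acts as an injective map except at boundary collisions: $\delta^1_{\SSt}(\ldir)$ maps both the leftmost pixel and its right neighbour to the leftmost pixel. So injectivity fails per letter. This is fine, because the Eulerian condition counts total in-degree over all letters.
\end{itemize}

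Strong connectivity follows from the connectedness of $P$, since every dual-graph edge is traversable in both directions by the corresponding unit move. Kari's theorem then gives $\rt[A_{\SSt}(P)]\leq N^2-3N+3$ whenever the automaton is synchronizing, i.e.\ whenever a gathering sequence exists. That covers gatherable polyominoes, and \cite{gathering-icra20} shows all polyominoes are gatherable in \SSt.

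The main obstacle I expect is the bookkeeping in the in-degree count: making sure each pixel's four sides contribute exactly one incoming edge each, even for pixels with several boundary sides. This is where the definition of self-loops (``$p\xrightarrow{\ldir}p$ if $\{p-(1,0)^\Tpose,p\}\notin E$'') must be matched carefully against the opposite-direction neighbour edge. If the Eulerian route stalls, a fallback is the Eulerian-like argument of~\cite{sync-eulerian} adapted directly: the uniform row/column weights make the matrix of each letter doubly stochastic in aggregate, which is the property Kari's proof actually uses.
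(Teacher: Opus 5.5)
Your proposal is correct and follows the paper's proof. You show that the automaton $(V,\D,\delta^1_{\SSt})$ is Eulerian because every pixel has in- and out-degree four; your pairing of each side of a pixel with either a neighbour edge or a self-loop spells out what the paper only asserts, and you also check strong connectivity. You then apply Kari's bound $(|Q|-2)(|Q|-1)+1$ together with the known fact that every polyomino is gatherable in \SSt.
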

\begin{proof}
Observe that the labeled digraph constructed from $G_P$, which underlies the automaton, is Eulerian, because every vertex has in-degree and out-degree four. Kari~\cite{sync-eulerian} proved that automata $(Q,\Sigma,\delta)$ with underlying Eulerian digraphs have synchronizing words of length at most $(|Q|-2)(|Q|-1)+1$, if they are synchronizing at all. Since~\cite{gathering-case16}~showed that all polyominoes have gathering sequences in \SSt, the claim follows.
\end{proof}

\Cref{cor:single-step-quadratic} is not particularly exciting in light of the known bounds $\bigO(n_c D^2)$ for general polyominoes and $\bigO(n_c D)$ for simple polyominoes, where $D$ is the diameter of $P$, achieved via geometric arguments in~\cite{gathering-icra20}, although it may lead to shorter gathering sequences in non-simple polyominoes with large diameter $D \in \Theta(N)$.
It is, however, another indication that transferring results from automata theory to tilt problems can yield novel insights.
It would be interesting if the algebraic techniques from~\cite{sync-eulerian} could be combined with the geometric ideas from~\cite{gathering-icra20} for even better results.

\subsection{Established problems in restricted environments}

As mentioned in the introduction, the complexity of \occupancy and \shapeReconfiguration in simple polyominoes is an open problem.
Our approach to \subsetGathering in simple polyominoes implies that both are (at least) \NP-hard.

\begin{corollary}\label{cor:occupancy-hard-simple}
\occupancy and \shapeReconfiguration are \NP-hard in the blocking variant of \FT, even in simple polyominoes or mazes.
\end{corollary}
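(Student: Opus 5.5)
We reuse the reduction behind \cref{thm:subset-gathering-simple}: from a family $A_1,\ldots,A_k$ of tally automata (cycles of odd length, at least one rejecting state) we build $P(A_1,\ldots,A_k)$ with initial configuration $C_0$. The difference is that particles now \emph{block} rather than merge. The aim is that $C_0$ can bring a particle to a designated target pixel $g$, namely the bottom pixel of the goal area, if and only if the automata accept a common word. For \shapeReconfiguration the target configuration is $C_0$ after the canonical gathering sequence, so the particles end up stacked at the bottom of the goal column.

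The first step is to show that, as long as particles are in the cycle parts of the gadgets, blocking and merging coincide. Each particle comes from a different automaton and its sub-construction is stacked in a separate region, so it suffices to check two things. Within one $A_i$ there is only one particle. Particles of different automata never share a row or column segment while they follow $\rdir\ddir\ldir\udir$ or go into dead ends. This is a local check of the gadgets in \cref{fig:tally-gadgets}. If some segment is shared, for instance the column feeding the goal area, I would spread the partial constructions apart or pad them so that particles of different $A_i$ use disjoint segments until they reach the goal area. Under this condition, \cref{obs:gathering-localization} lets us treat each particle separately with $\delta^1$, and \cref{lem:tally-accept-packing} holds unchanged.

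For occupancy I would modify the goal area so that $g$ is reachable only if \emph{all} $k$ particles enter it. The idea is a column whose bottom cell $g$ is reached by the last particle only when the $k-1$ particles below it stack up, as in counting-by-stacking. Forward direction: for $0^\ell\in\bigcap_i L(A_i)$, the sequence $(\rdir\ddir\ldir\udir)^\ell\,\ldir\ddir\rdir\ddir\ldir\ddir$ sends every particle along its safe path into the goal column, and the resulting stack covers $g$. Backward direction: as in the proof of \cref{thm:subset-gathering-simple}, a particle leaves its cycle only through $\ldir\udir$ or $\ldir\ddir$, and this traps the particles of rejecting states. So if any particle reaches the goal area, then at the moment of leaving all particles sat on representatives of accepting states, and the cycle prefix reduces to $(\rdir\ddir\ldir\udir)^{\ell'}$, which gives a common word.

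The main obstacle is the backward direction in the blocking model: a sequence might let some particles reach the goal while others are already trapped, or blocking between particles might create new behaviour. To handle this I would give each automaton's escape route a separate entry into the goal column, ordered so that the target pixel (or the target configuration for \shapeReconfiguration) demands exactly $k$ stacked particles. Since trapped particles never leave their dead-end columns, the goal column can never hold $k$ particles, and so the target is unreachable. Blocking can only occur inside the goal column, where it is harmless. Finally, the optional holes turn the construction into a maze, which covers the maze case as well.
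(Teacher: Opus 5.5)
Your overall plan follows the paper's proof. You reuse $P(A_1,\ldots,A_k)$ and $C_0$, observe that blocking can only matter in the goal column (the only segment that ever holds two particles), and pick a target that forces all $k$ particles into that column. Your \shapeReconfiguration target, the $k$ bottom pixels of the goal column, is fine: since $|C_0|=k$, every particle must be in the goal column.

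The gap is the \occupancy target. You first fix $g$ as the bottom pixel of the goal area, and this fails. All particles leave their cycles simultaneously via $\ldir\udir$ or $\ldir\ddir$. If at that moment $A_1$ is in an accepting state and $A_2$ is not, the first particle follows its safe path to the bottom of the goal column while the second is trapped, so $g$ is occupied without a common word. The same example refutes your sentence ``if any particle reaches the goal area, then at the moment of leaving all particles sat on representatives of accepting states''; only the version for all $k$ particles is true. Your repair, ``a column whose bottom cell $g$ is reached by the last particle only when the $k-1$ particles below it stack up'', is self-contradictory, since nothing lies below a bottom cell. Your closing remark that the target should `demand exactly $k$ stacked particles' never says how this is achieved.

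What is missing is a concrete placement that makes stacking unavoidable. The paper extends the goal column upwards so that $k$ pixels lie above the topmost intersection with an entry row, and takes the $k$th pixel from the top as the target. No entry row meets this pixel, so a particle can only come to rest there through a vertical move. After $\udir$ it is occupied if and only if the column holds at least $k$ particles. A downward move could only stop a particle there if all pixels below it were occupied, and there are far more than $k-1$ of them. Hence the target can be occupied if and only if all $k$ particles reach the goal column. By the simultaneous-exit argument this holds if and only if the automata accept a common word. For the forward direction, the canonical sequence $(\rdir\ddir\ldir\udir)^\ell\,\ldir\ddir\rdir\ddir\ldir\ddir$ followed by $\udir$ suffices. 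Mirroring this downwards, with $k$ extra pixels below the lowest intersection and the $k$th pixel from the bottom as target, would realize the version you seem to have intended.
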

\begin{proof}
Take the construction from \cref{thm:subset-gathering-simple} for $k$ automata and extend the goal area upwards so that there are $k$ pixels above the topmost intersection.
\Cref{sfig:tally-polyomino} illustrates an example with $k=2$.
The distinction between $\delta^\times$ and $\delta^\cup$ only matters in the goal area, since that is the only segment that can possibly contain more than one particle at a time.
Now the $k$th pixel from the top in the goal area can be occupied~--~and the configuration consisting of the $k$ topmost pixels in the goal are can be reached~--~if and only if the automata accept a common word.
\end{proof}

The \NP-hardness of \occupancy and \shapeReconfiguration in simple polyominoes due to \cref{cor:occupancy-hard-simple} raises the question if the problems get easier when the geometry is restricted further.
Indeed, the problems become (trivially) polynomially solvable in rectangles.

\begin{figure}[htb]
  \centering
  \includegraphics{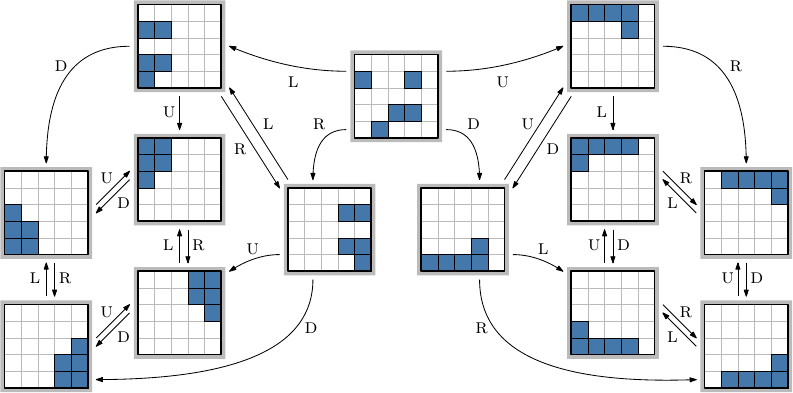}
  \caption{Exhaustive overview of all configurations reachable from an initial configuration inside a rectangle and the moves connecting them. Self-loops are omitted to avoid clutter.}\label{fig:full-tilt-rectangle}
\end{figure}

\begin{observation}\label{obs:occupancy-easy-square}
\occupancy and \shapeReconfiguration are solvable in polynomial time when restricted to polyominoes with rectangular boundaries in \FT.
\end{observation}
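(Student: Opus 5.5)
The plan is to show that, in a rectangle, the set of configurations reachable from any initial configuration $C$ in the blocking variant of \FT has constant size (apart from $C$ itself). Each such configuration can be computed explicitly in polynomial time, so both problems reduce to a search over a constant-size state graph. The guiding picture is \cref{fig:full-tilt-rectangle}: after the first move, every configuration is determined by a small amount of combinatorial data.

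\textbf{Reduction to a few ``staircase'' configurations.} In a $w \times h$ rectangle every row segment is a full row and every column segment a full column. Consider a horizontal move, say $\ldir$, applied to $C$. Each row becomes a left-aligned run whose length is that row's particle count. Now apply a vertical move, say $\ddir$. Each column becomes bottom-aligned, and the count in column $x$ equals the number of rows of length at least $x$. The result is a Young-diagram-like staircase anchored at the bottom-left corner, and its shape is just the conjugate partition of the multiset of row counts. Further moves only re-anchor this staircase at one of the four corners, possibly transposing or conjugating its shape. The key claim to verify is that once a configuration is a corner-anchored staircase, applying $\rdir$ to a bottom-left staircase, for example, produces its mirror image anchored bottom-right. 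Then a vertical move either does nothing or reflects the shape vertically; because the shape is a partition, row counts and column counts are preserved under these reflections. Hence at most a constant number of configurations are reachable: $C$ itself, the results of one move (four of them), and the staircase placed at the four corners in at most two orientations each. Together this gives on the order of a dozen states, matching \cref{fig:full-tilt-rectangle}.

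\textbf{Algorithm.} Each move can be simulated in $\bigO(|C| + w + h)$ time, or in $\bigO(wh)$ time on the dual graph, by counting particles per row or column. The algorithm performs BFS over moves starting from $C$. By the finiteness argument, the BFS terminates after a constant number of distinct configurations have been found. For \occupancy we check whether the target pixel lies in any discovered configuration; for \shapeReconfiguration we check whether $C'$ is among them. The steps needed are the explicit move simulation and the constant bound on distinct reachable configurations.

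\textbf{Main obstacle.} The hardest part is proving rigorously that the staircase is closed, up to reflection, under all four moves, that is, that no new shapes ever arise after the second move. I would prove this by noting that a bottom-left staircase is characterized by the fact that every row is left-aligned and every column is bottom-aligned. A horizontal move preserves all row counts, and for a staircase it maps columns to a reflected staircase, since the rows' lengths are nested. The symmetric statement holds for vertical moves. If that argument resists, a fallback still gives polynomial time: the staircase is determined by a partition, and every move maps it to a configuration determined by the same row and column count multisets. The BFS therefore visits at most four corner placements times two shapes, which is still constant.
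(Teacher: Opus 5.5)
Your proposal is correct and follows essentially the paper's argument: after two perpendicular moves the configuration is a corner-anchored staircase, and staircases are closed under all moves with only the anchoring corner changing, so at most $1+4+8=13$ configurations are reachable and they can be enumerated in polynomial time. One small correction: the two staircase shapes per corner come from whether the first move is horizontal (shape fixed by the row counts of $C$) or vertical (shape fixed by the column counts of $C$), and these need not be transposes of each other, nor does any move transpose a staircase --- but your search never relies on that identification, so the constant bound and the algorithm are unaffected.
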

\begin{proof}
After two perpendicular moves, a configuration within a rectangle is completely determined by two things: a sequence of numbers describing how many particles are in non-empty columns, and the corner in which the particles congregate.
Only four configurations are reachable from such a state, for a total of at most $13$ configurations that have to be considered, see \cref{fig:full-tilt-rectangle} for an example.
Thus, \occupancy and \shapeReconfiguration in a rectangle can be solved by exhaustive enumeration of a finite number of configurations, each of which can be checked in polynomial time.
\end{proof}

The argument in \cref{obs:occupancy-easy-square} plays a part in the analysis of ``2048 without merging''~\cite{2048-cccg20}, which is essentially the full tilt model in a rectangle with labeled particles.
Labeled configurations exhibit a richer variety of reachable configurations, with ``canonical'' configurations reached after two perpendicular moves corresponding to the elements of a group~of~permutations~\cite{2048-cccg20}.

Like \paraGathering, \occupancy and \shapeReconfiguration are trivially fixed-parameter tractable when parameterized by the maximum length of a sequence of moves.
While it is conceivable that \textsc{SAT}-based reductions for these problems, like the one from~\cite{tilt-algosensors13}, could be adapted to simple polyominoes with some effort, our approach via tally automata has the advantage of immediately providing insights into the parameterized complexity when parameterized by the size of the initial configuration.

\begin{corollary}\label{cor:occupancy-w1-hard}
\occupancy and \shapeReconfiguration are \thmW{1}-hard in the blocking variant of \FT when parameterized by the size $k$ of the initial configuration, even in simple polyominoes or mazes.
\end{corollary}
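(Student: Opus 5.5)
The plan is to observe that the reduction behind \cref{cor:occupancy-hard-simple} is already an fpt-reduction from \textsc{IntersectionNonEmptiness} of tally automata, parameterized by the number of automata. That problem is \thmW{1}-hard~\cite{intersection-timecop}, which was already used in \cref{cor:gathering-w1-hard-via-subsets}. So the main work is to check that the construction from \cref{thm:subset-gathering-simple}, with the extended goal area of \cref{cor:occupancy-hard-simple}, meets the requirements of an fpt-reduction.

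First, I will recall the instance. Given tally automata $A_1,\ldots,A_k$, build $P(A_1,\ldots,A_k)$ and extend the goal area upward so that $k$ pixels lie above the topmost intersection. The initial configuration is $C_0=\bigcup_i\{\overline{q_i}\}$, which has exactly $k$ particles. For \occupancy, the target pixel is the $k$th pixel from the top of the goal area. For \shapeReconfiguration, the target configuration is the set of the $k$ topmost goal-area pixels. The construction runs in polynomial time, independent of the parameter. The new parameter equals the old one ($|C_0|=k$), which is trivially bounded by a computable function of the original parameter.

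Next, I will verify correctness exactly as in \cref{cor:occupancy-hard-simple}. In the blocking variant, particles stay in distinct cycles and distinct gadgets until they reach the goal area. So $\delta^\times$ and $\delta^\cup$ agree everywhere except in the goal segment. By \cref{lem:tally-accept-packing} and the argument of \cref{thm:subset-gathering-simple}, all $k$ particles can reach the goal area if and only if the automata accept a common word; otherwise some particle becomes trapped in a dead-end column segment. Once all $k$ particles are in the goal column, an upward move stacks them into the $k$ topmost pixels, which occupies the $k$th pixel from the top. Conversely, that pixel can only be filled when all $k$ particles are stacked in the column, and the same holds for the target configuration.

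The step I expect to need the most care is this converse direction in the blocking variant. I must argue that the blocking interaction in the goal area cannot let fewer than $k$ particles fill the $k$th pixel. I also need to confirm that particles arriving at different times cannot be pushed back out of the goal area onto an escape route that changes the analysis. I would handle this by noting that the goal area is a column ending in pixels of degree~$1$, so particles inside it never leave it, and that the $k$th-from-top pixel is above the topmost intersection. Finally, simplicity and the maze variant follow from the optional holes, as already noted for \cref{thm:subset-gathering-simple}.
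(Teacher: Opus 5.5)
Your proposal is correct and matches the paper's proof. The paper simply combines the extended-goal-area construction of \cref{cor:occupancy-hard-simple} with the observation from \cref{cor:gathering-w1-hard-via-subsets} that this polynomial-time reduction maps the number of tally automata to the number $k=|C_0|$ of particles, so it is an fpt-reduction from a \thmW{1}-hard problem.

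One small remark: your claim that particles in the goal column never leave it is not needed, and it is doubtful at pixels where gadget rows attach from the right. The converse only uses that the $k$th pixel from the top can be filled only when all $k$ particles are in the goal column at once, and the cycle-escape argument already rules this out when there is no common word.
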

\begin{proof}
The claim follows directly from the proofs of \cref{cor:occupancy-hard-simple,cor:gathering-w1-hard-via-subsets}.
\end{proof}

\begin{table}[tbh]
  \caption{Overview of complexity results for \shapeReconfiguration and \occupancy.}\label{tbl:complexity-results-classical}
    \centering
    \begin{tabular}{llllll}
      \hline
      & & \multicolumn{2}{l}{\FT} & \multicolumn{2}{l}{\SSt}\\
      \cline{3-6}
      Problem & Geometry & Complexity & Reference & Complexity & Reference\\
      \hline
      \textsc{ShapeReconf.} & General & \PSPACE-c. & \cite[Cor.~5.2]{tilt-soda20} & \PSPACE-c. & \cite[Thm.~2]{tilt-hardness-cccg20}\\
      & Simple & \NP-hard & \textbf{Cor.~\ref{cor:occupancy-hard-simple}} & & \\
      & & \thmW{1}-hard by $k$ & \textbf{Cor.~\ref{cor:occupancy-w1-hard}} & & \\
      & Square & \P & \textbf{Obs.~\ref{obs:occupancy-easy-square}} & \NP-hard & \cite[Thm.~3]{sweeping-eurocg16}\\
      \hline
      \occupancy & General & \PSPACE-c. & \cite[Thm.~5.1]{tilt-soda20} & \P & \cite[Thm.~3.1]{tilt-limited-directions-jip20} \\
      & Simple & \NP-hard & \textbf{Cor.~\ref{cor:occupancy-hard-simple}} & & \\
      & & \thmW{1}-hard by $k$ & \textbf{Cor.~\ref{cor:occupancy-w1-hard}} & & \\
      & Square & \P & \textbf{Obs.~\ref{obs:occupancy-easy-square}} & & \\
      \hline
    \end{tabular}
\end{table}

An overview of new and previously known complexity results for \shapeReconfiguration and \occupancy is given in \cref{tbl:complexity-results-classical}.
The differences between the full tilt and single step models are quite curious.
So far, \occupancy does not look easier than \shapeReconfiguration in \FT, where both exhibit a computational complexity that scales with the geometric complexity of the environment, from polynomially solvable to \PSPACE-complete.
In \SSt on the other hand, \occupancy is easy in all environments (unless you consider more complicated particle shapes~\cite{tilt-hardness-cccg20}), whereas \shapeReconfiguration remains hard even in extremely restricted environments.

\subsection{Hardness of a generalized problem for deterministic tilt cycles}

In the recent paper~\cite{tilt-deterministic}, the authors investigate the complexity of various tilt problems for \emph{deterministic} tilt sequences, i.e., repetitions of a specified sequence $w \in \D^*$.
They show that \occupancy and \shapeReconfiguration remain \PSPACE-complete for deterministic tilt sequences if a single domino is allowed in addition to unit-sized particles and leave the complexity with only $1 \times 1$ particles as an open problem.
While we have not solved this exact problem, our work implies hardness for a natural generalization.

\problem{\tiltCover}{A polyomino $P$, an initial configuration $C$, and a set $S \subseteq V$ of target pixels.}{Decide if there is a sequence $w \in \D^*$ such that $C \cdot w \supseteq S$.}

Note that \occupancy is \tiltCover with $|S| = 1$, whereas \shapeReconfiguration is \tiltCover with $|S| = |C|$.

\begin{corollary}\label{cor:tilt-cover-deterministic}
\tiltCover is \NP-hard in both \FT and \SSt, even when restricted to deterministic tilt sequences in a simple polyomino or maze.
\end{corollary}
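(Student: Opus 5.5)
The plan is to reuse the reduction from \textsc{IntersectionNonEmptiness} of tally automata behind \cref{thm:subset-gathering-simple} and \cref{cor:occupancy-hard-simple}. The key point is that tally automata have a single letter, which is encoded by the single tilt cycle $\rdir\ddir\ldir\udir$. This is what makes a deterministic sequence natural. I will not use the goal area. Instead, the target set is placed directly on representative pixels of accepting states.

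\textbf{Construction.} Given tally automata $A_1,\ldots,A_k$, build $P(A_1,\ldots,A_k)$ and $C_0=\bigcup_i\{\overline{q_i}\}$ as in \cref{sec:subsets}. The goal area is not needed, so it can be kept or dropped.

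\textbf{Reducing to unary acceptance.} If we want a single target set, we first modify each automaton so that it has exactly one accepting state. For cycles this is not immediate. Instead, I would reduce from \textsc{IntersectionNonEmptiness} restricted to cycle automata with a single accepting state each.

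That restricted problem should remain \NP-hard. This follows from the Stockmeyer--Meyer and Morawietz--Rehs--Weller style reductions: a common accepted word of a family $\{A_i\}$ with accepting sets $F_i$ can be encoded by splitting each $A_i$ into $|F_i|$ copies with singleton accepting sets. The difficulty is that this introduces a disjunction over the copies, which a single conjunctive target cannot express.

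The cleaner route, and the one I would take, is the prime-cycle CRT reduction from SAT. In that reduction the constraint for each clause is a union of residues. So I would instead directly reduce from a problem whose constraints are singleton residues per modulus. That problem is solving a system of congruences, which is easy, so that route fails.

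\textbf{Using the blocking ordering instead.} The fallback is to keep disjunction by using multiple particles per automaton. Place one particle on each representative pixel along a cycle? That only rotates the configuration.

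So the plan is to reuse \cref{cor:occupancy-hard-simple} directly and handle determinism in the goal area. Take the instance of \cref{cor:occupancy-hard-simple} with target $S$ equal to the $k$ topmost goal pixels (or the $k$th pixel for \occupancy). Consider the fixed sequence
\[
w=\rdir\ddir\ldir\udir\,\ldir\udir\rdir\udir\ldir\udir ,
\]
the escape sequence with vertical moves pointing upward so that particles stack at the top of the goal area.

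I need to verify three things:
\begin{enumerate}
\item A particle on an accepting representative escapes via the tail into the goal area and stays there under further repetitions of $w$. The goal area is a column segment whose horizontal moves do nothing.
\item A particle on a non-accepting representative gets trapped by $\ldir\udir$ or $\ldir\ddir$ and never leaves its column segment.
\item A particle on an accepting representative that is not yet at the right time simply sits stuck.
\end{enumerate}

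The third point is the main obstacle. Applying the tail at a time when some particle is on a non-accepting state traps it forever. But that is fine: repetitions of $w$ then never cover $S$, and in the forward direction, if every common word $0^\ell$ forces the tail to come early, covering fails.

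So the reduction needs the tail to be applied once per cycle step without destroying non-accepting particles. The fix is to redesign the accepting and rejecting gadgets so that the tail sends a rejecting particle back onto its cycle unchanged, while an accepting particle leaves to the goal area. This mirrors the exit/return mechanism in the symbol gadgets of \cref{thm:inapproximability-simple}.

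\textbf{Correctness.} With this change, $w^\ell$ advances every still-cycling particle by one state per repetition and extracts those that are accepting. The target $S$ (the $k$ topmost goal pixels) is then covered if and only if every particle reached an accepting state at some time. This condition is a union of residues per automaton, so the restricted problem must still be \NP-hard. Indeed, "each $A_i$ eventually hits $F_i$" is trivial, so I must additionally prevent early escapes from being counted independently. I expect this timing issue to be the crux, and would resolve it by requiring simultaneous arrival: a particle that arrives early is delayed at a top gadget that blocks the goal area's entry unless all $k$ particles are present, relying on blocking and \SSt single-step behavior.

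Finally, transfer the argument to \SSt by scaling the segments so that each full-tilt move along them is simulated by a fixed number of unit steps, and to mazes using the optional holes.
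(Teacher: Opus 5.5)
There is a genuine gap: your proposal never reaches a working reduction, and it stops exactly where the key idea is needed. Your diagnosis of the obstacle is correct. With the $k$ particles of $C_0$, one per automaton, a covering condition $C_0\cdot w \supseteq S$ can only force fixed pixels to be occupied. It therefore cannot say ``the particle of $A_i$ is on \emph{some} state of $F_i$'', and singleton targets collapse to a CRT system.

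Your fallback does not repair this. Once the tail returns rejecting particles to their cycles and releases accepting ones, the covering condition becomes ``each particle visits $F_i$ at some time'', which you note yourself is trivial. The proposed top gadget that ``blocks the goal area's entry unless all $k$ particles are present'' cannot restore simultaneity, because a particle that has left its cycle has lost its phase. What you would actually need is a fixed region that, under uniform moves, detects whether the \emph{other} particles are currently on accepting representatives and otherwise sends this particle back to its cycle in the correct phase. No such gadget is given. Any interaction of this kind also forces particles of different automata into common segments, which destroys the independence that \cref{lem:tally-accept-packing} rests on. So the claim that the resulting condition ``must still be \NP-hard'' is unsupported. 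The \SSt transfer by ``scaling segments'' is likewise not carried out, and it would have to cope with the blocking interactions your gate relies on.

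The missing idea is to complement everything and track the free pixels instead of the particles. Keep $P(A_1,\ldots,A_k)$ as it is, and take $C = V\setminus C_0$, $S = V\setminus\bigcup_i\overline{F_i}$, and the deterministic cycle $\ldir\udir\rdir\ddir$, which reverses each direction of $\rdir\ddir\ldir\udir$. Repetitions of $\rdir\ddir\ldir\udir$ never bring two of the $k$ free pixels into a common segment. A single free pixel in an otherwise full segment moves maximally against the tilt direction, both in the blocking variant of \FT and, by the single step definition, in one move of \SSt. Hence the free pixels after $(\ldir\udir\rdir\ddir)^\ell$ are exactly $C_0\cdot(\rdir\ddir\ldir\udir)^\ell$. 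It follows that $C\cdot(\ldir\udir\rdir\ddir)^\ell\supseteq S$ if and only if $C_0\cdot(\rdir\ddir\ldir\udir)^\ell\subseteq\bigcup_i\overline{F_i}$. By \cref{lem:tally-accept-packing}, this holds if and only if $0^\ell\in\bigcap_i L(A_i)$.

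Complementation turns the ``subset of accepting representatives'' condition into a covering condition. That condition is precisely the conjunction of per-automaton disjunctions you could not encode. It also handles \FT and \SSt at once, with no new gadgets, escape tail, or timing mechanism.
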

\begin{proof}
Recall the reduction from \textsc{IntersectionNonEmptiness} of tally automata presented in \cref{sec:subsets} that constructs a simple polyomino (or maze) $P(A_1,\ldots,A_k)$ and a configuration $C_0$ from given tally automata $A_1,\ldots,A_k$ in polynomial time.
Consider the deterministic tilt cycle $w = \rdir\ddir\ldir\udir$.
All repetitions $w^\ell \in \{\rdir\ddir\ldir\udir\}^*$ keep the particles in separate parts of the polyomino -- they cannot possibly end up in a common segment.
This allows us to use an idea pioneered in~\cite{tilt-hardness-cccg20} and further developed in~\cite{tilt-fill}: Individual bubbles (free pixels) behave identically in \FT and \SSt, and move maximally like particles in \FT, albeit in the opposite direction.
We choose $C = V \setminus C_0$ as initial configuration, $m = \ldir\udir\rdir\ddir$ as deterministic tilt cycle, and $S = V \setminus \bigcup_i \overline{F_i}$ as target pixels (i.e., $S$ contains all pixels except the representatives of accepting states).
Thus, $C \cdot m^\ell \supseteq S$ if and only if $C_0 \cdot w^\ell \subseteq \bigcup_i F_i$, for all $\ell \geq 0$, in both \FT and \SSt.
By \cref{lem:tally-accept-packing}, this means that $S$ can be covered if and only if the $A_i$ accept a common word, which is \NP-hard to decide~\cite{complexity-survey,intersection-timecop,intersection-unary-stoc73}.
\end{proof}

\begin{figure}[tbh]
  \begin{subcaptionblock}{0.5\textwidth}
    \phantomcaption\label{sfig:tilt-cover-particles}
    \centering
    \includegraphics[page=1]{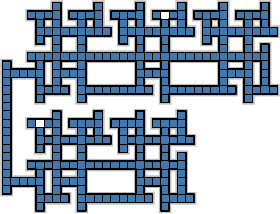}\\
    \captiontext*{}
  \end{subcaptionblock}\hfill
  \begin{subcaptionblock}{0.5\textwidth}
    \phantomcaption\label{sfig:tilt-cover-target}
    \centering
    \includegraphics[page=2]{tilt-cover-example}\\
    \captiontext*{}
  \end{subcaptionblock}\hfill
  \caption{\tiltCover instance derived from the tally automata in \cref{sfig:tally-automata}. \subref{sfig:tilt-cover-particles}~The initial configuration $C$. The free pixels correspond to the initial states. \subref{sfig:tilt-cover-target}~The set $S$ of pixels that must be covered. Pixels corresponding to accepting states are the only ones that need not be covered.}\label{fig:tilt-cover-example}
\end{figure}

In fact, the polyomino can be simplified compared to that in \cref{sec:subsets}, reminiscent of the construction in \cref{thm:gather-lower-bound}; see \cref{fig:tilt-cover-example} for an example maze as an instance of \tiltCover resulting from the tally automata in \cref{sfig:tally-automata}.

\section{Conclusions}\label{sec:conclusion}

We have investigated several problems related to gathering sequences in the full tilt model.
See \cref{tbl:complexity-results-gathering} for an overview our results as well as previous results in the single step model.

The reduction from \cref{sec:subsets} transfers to the related problem of \emph{collecting}~\cite{gathering-case16} a given set of particles into a \emph{connected} configuration in the blocking variant of \FT.
Our hardness results also apply if the particles are to be gathered at a specified position, since one could simply check all possible positions if this was practical.
In the case of \fullGathering, it is still possible to find a gathering sequence $w \in \D^*$ such that $V \xrightarrow{w} \{p\}$ for a given pixel $p$ in polynomial time, or decide that no such sequence exists: Assume there are two distinct pixels $q_1,q_2$ such that $V \xrightarrow{w_i} \{q_i\}$ for $w_1,w_2 \in \D^*$.
Then $\{q_1\} \xrightarrow{w_2} \{q_2\}$ and $\{q_2\} \xrightarrow{w_1} \{q_1\}$, i.e., all pixels where particles can be gathered are mutually reachable.
Thus, one can check if all particles can be gathered at a given pixel $p$ by using the techniques from \cref{sec:gathering} to find a gathering sequence targeting an unspecified pixel $q$, and then check if $p$ is reachable from $q$.
If either step fails gathering at $p$ is impossible.

It may be worth reiterating that most of our algorithmic results are polynomial in the number~$n$ of corners of a given polyomino, whereas our hardness reductions produce the dual graph with $N$ pixels, i.e., the respective problems are hard even for a unary~representation~of~space.

A number of open questions remain.
While closing the gap between the worst-case bounds $\Omega(n^2)$ and $\bigO(n_c n^2)$ for $\sgs$ in general polyominoes looks very hard due to its connection to \Cerny's conjecture, it may be feasible to close the gap in special classes of polyominoes.
Is it easier to prove an upper bound $\sgs \in \bigO(n^2)$ for simple polyominoes?

We have shown \subsetGathering, \occupancy, and \shapeReconfiguration{} to be \NP-hard in simple polyominoes but did not definitively settle their complexity.
Can these problems be shown to remain \PSPACE-complete in simple polyominoes?

Furthermore, we have shown \paraGathering{} with parameter $k$ \thmW{1}-hard for two distinct restricted settings in \cref{cor:gathering-w1-hard-via-sgs,cor:gathering-w1-hard-via-subsets}.
It is likely even harder in its full generality, presumably outside the \W-hierarchy.
Possible starting points for a reduction include a bounded variant of \textsc{IntersectionNonEmptiness} investigated by Wareham~\cite{intersection-wareham}, and work by Fernau and Bruchertseifer~\cite{dfa-sync-parameterized} on the parameterized complexity of synchronizing automata.
For our purposes, however, \W[1]-hardness is strong enough indication that the problem is not fixed-parameter tractable.

Finally, \quickGathering can be approximated within a constant factor in simple mazes, which is \NP-hard in general simple polyominoes.
Is \subsetGathering also easier in simple mazes?
We conjecture it to be polynomially solvable in this case.

\begin{table}[tb]
  \caption{Overview of complexity results for the gathering problems \fullGathering~(FG), \quickGathering~(SGS), \subsetGathering~(SG), and \paraGathering (PG) in both \FT and \SSt, unless $\P = \NP$.}\label{tbl:complexity-results-gathering}
  \centering
  \begin{tabular}{llllll}
    \hline
    & & \multicolumn{2}{l}{\FT} & \multicolumn{2}{l}{\SSt}\\
    \cline{3-6}
    Problem & Geometry & Complexity & Reference & Complexity & Reference\\
    \hline
    FG & General & $\bigO(n_c n^2)$ & \textbf{Thm.~\ref{thm:alg-gathering}} & $\bigO(n_c D^2)$ & \cite[Cor.~10]{gathering-icra20}\\
    & & $\bigO(n^2)$ conj. & \textbf{Conj.~\ref{conj:gathering-upper-bound}} & $\bigO(N^2)$ & \textbf{Cor.~\ref{cor:single-step-quadratic}}\\
    & Simple & $\Omega(n^2)$ & \textbf{Thm.~\ref{thm:gather-lower-bound}} & $\bigO(n_c D)$ & \cite[Cor.~9]{gathering-icra20}\\
    \hline
    SGS & General & No $n^{1 - \varepsilon}$-apx. & \textbf{Cor.~\ref{cor:quick-gathering-inapprox}} & \NP-hard & \cite[Thm.~1]{gathering-icra20}\\
    & Maze & No $N^{\sfrac{1}{2} - \varepsilon}$-apx. & \textbf{Cor.~\ref{cor:quick-gathering-inapprox}}\\
    & Simple & No $\bigO(1)$-apx. & \textbf{Thm.~\ref{thm:inapproximability-simple}}\\
    & Simple maze & \NP-hard, $4$-apx. & \textbf{Thms.~\ref{thm:gathering-hard-simple-mazes},~\ref{thm:simple-maze-approx}}\\
    \hline
    SG & General & \PSPACE-c. & \textbf{Cor.~\ref{cor:subset-gathering-mazes}} (and \cite{tilt-soda20}) & yes & \cite[Sec.~II.A]{gathering-case16}\\
    & Maze & \PSPACE-c. & \textbf{Cor.~\ref{cor:subset-gathering-mazes}}\\
    & Simple & \NP-hard & \textbf{Thm.~\ref{thm:subset-gathering-simple}}\\
    \hline
    PG & General & \FPT by $\ell$ & \textbf{Sec.~\ref{sec:preliminaries}} & \FPT by $\ell$ & trivial\\
    & Simple maze & \thmW{1}-hard by $k$ & \textbf{Cor.~\ref{cor:gathering-w1-hard-via-sgs}}\\
    \hline
  \end{tabular}
\end{table}

\nocite{tilt-limited-directions-ucnc23,assembly-isaac2017,assembly-single-step-wads21,repulsion-swat18,sweeping-fun22,savitch-stoc69,scs-inapproximability-icalp94,complexity-survey-lata09,cerny-original,sync-eulerian-mfcs01,identification-lower-bound-stoc,subset-sync-lower-bound-afl14,inapproximability-binary-csr}
\bibliography{references}

\end{document}